\crefname{equation}{}{}
\Crefname{algocf}{Algorithm}{Algorithms}  
\theoremstyle{plain}
\newtheorem{theorem}{Theorem}
\newtheorem{proposition}[theorem]{Proposition}
\newtheorem{lemma}[theorem]{Lemma}
\newtheorem{corollary}[theorem]{Corollary}
\theoremstyle{definition}
\newtheorem{definition}{Definition}
\newtheorem{remark}{Remark}
\renewcommand{\hat}{\widehat}
\renewcommand{\tilde}{\widetilde}
\renewcommand{\epsilon}{\varepsilon}
\def\E{\mathbb{E}}
\def\P{\mathbb{P}}
\def\R{\mathbb{R}}
\def\N{\mathbb{N}}
\def\calA{\mathcal{A}}
\def\calK{\mathcal{K}}
\def\calM{\mathcal{M}}
\def\calS{\mathcal{S}}
\DeclareMathOperator*{\argmax}{arg\,max}
\DeclareMathOperator*{\argmin}{arg\,min}
\DeclareMathOperator*{\diag}{diag}
\DeclareMathOperator*{\tr}{tr}
\DeclareMathOperator{\interior}{int}
\DeclarePairedDelimiter{\abs}{\lvert}{\rvert} %
\DeclarePairedDelimiter{\brk}{[}{]}
\DeclarePairedDelimiter{\set}{\{}{\}}
\DeclarePairedDelimiter{\prn}{(}{)}
\DeclarePairedDelimiter{\nrm}{\|}{\|}
\DeclarePairedDelimiter{\inpr}{\langle}{\rangle}  
\newcommand{\zeros}{\mathbf{0}}
\newcommand{\ones}{\mathbf{1}}
\newcommand{\ie}{\textit{i.e.,}}
\newcommand{\eg}{\textit{e.g.,}}
\newcommand{\Reg}{\mathsf{Reg}}
\newcommand{\SwapReg}{\mathsf{SwapReg}}
\renewcommand{\t}{{(t)}}
\newcommand{\tp}{{(t+1)}}
\newcommand{\tm}{{(t-1)}}
\newcommand{\s}{{(s)}}
\newcommand{\sm}{{(s-1)}}
\newcommand{\logp}{\log_+}  
\newcommand{\ytil}{\tilde{y}}
\newcommand{\util}{\tilde{u}}
\newcommand{\ybm}{\bm{y}}
\newcommand{\hbm}{\bm{h}}
\newcommand{\zbm}{\bm{z}}
\newcommand{\xrm}{x}
\newcommand{\yrm}{y}
\newcommand{\mx}{m_\xrm}
\newcommand{\my}{m_\yrm}
\newcommand{\Qmat}{Q}
\newcommand{\ltil}{\widetilde{\ell}}
\newcommand{\gtil}{\widetilde{g}}
\newcommand{\lsto}{\bar{\ell}}
\newcommand{\xhat}{\widehat{x}}
\newcommand{\yhat}{\widehat{y}}
\newcommand{\Cx}{C_\xrm}
\newcommand{\Cy}{C_\yrm}
\newcommand{\Chatx}{\hat{C}_\xrm}
\newcommand{\Chaty}{\hat{C}_\yrm}
\newcommand{\Chati}{\hat{C}_i}
\newcommand{\Chatk}{\hat{C}_k}
\newcommand{\Ctilx}{\tilde{C}_\xrm}
\newcommand{\Ctily}{\tilde{C}_\yrm}
\newcommand{\mmax}{m}
\newcommand{\nn}{\nonumber\\}
\newcommand{\n}{\nonumber}
\newcommand{\per}{\,.}
\newcommand{\com}{\,,}
\newcommand{\sumT}{\sum_{t=1}^T}
\newcommand{\nsqrt}[1]{\sqrt{\smash[b]{#1}}\vphantom{#1}} 
\title{Corrupted Learning Dynamics in Games}
\author{
  Taira Tsuchiya\footnote{
    The University of Tokyo and RIKEN; 
    \texttt{tsuchiya@mist.i.u-tokyo.ac.jp}.
  }
  \and
  Shinji Ito\footnote{
    The University of Tokyo and RIKEN; \texttt{shinji@mist.i.u-tokyo.ac.jp}.
  }
  \and
  Haipeng Luo\footnote{
    University of Southern California; \texttt{haipengl@usc.edu}.
  }
}
\begin{document}
\maketitle

\begin{abstract}
Learning in games refers to scenarios where multiple players interact in a shared environment, each aiming to minimize their regret. It is well known that an equilibrium can be computed at a fast rate of $O(1/T)$ when all players follow the optimistic follow-the-regularized-leader (OFTRL). However, this acceleration is limited to the \textit{honest regime}, in which all players fully adhere to a prescribed algorithm---a situation that may not be realistic in practice. To address this issue, we present \textit{corrupted learning dynamics} that adaptively find an equilibrium at a rate that depends on the extent to which each player deviates from the strategy suggested by the prescribed algorithm. First, in two-player zero-sum corrupted games, we provide learning dynamics for which the external regret of $x$-player (and similarly for $y$-player) is roughly bounded by $O(\log (m_x m_y) + \sqrt{\hat{C}_y} + \hat{C}_x)$, where $m_x$ and $m_y$ denote the number of actions of $x$- and $y$-players, respectively, and $\hat{C}_x$ and $\hat{C}_y$ represent their cumulative deviations. We then extend our approach to multi-player general-sum corrupted games, providing learning dynamics for which the swap regret of player $i$ is bounded by $O(\log T + \sqrt{\sum_{k} \hat{C}_k \log T} + \hat{C}_i)$ ignoring dependence on the number of players and actions, where $\hat{C}_i$ is the cumulative deviation of player $i$ from the prescribed algorithm. Our learning dynamics are agnostic to the levels of corruption. A key technical contribution is a new analysis that ensures the stability of a Markov chain under a new adaptive learning rate, thereby allowing us to achieve the desired bound in the corrupted regime while matching the best existing bound in the honest regime. Notably, our framework can be extended to address not only corruption in strategies but also corruption in the observed expected utilities, and we provide several matching lower bounds. 
\end{abstract}

\newpage

\section{Introduction}\label{sec:introduction}
Learning in games refers to settings where multiple players interact in a shared environment, each aiming to minimize their regret by iteratively adapting their strategies based on repeated interactions~\citep{freund99adaptive,hart00simple}. 
Each player can minimize their own regret using the framework of \emph{online learning}, which allows us to minimize the cumulative loss based on sequentially obtained past observations~\citep{cesabianchi06prediction}.
It is well known that an approximate equilibrium can be obtained when all players employ no-regret algorithms.
For instance, in two-player zero-sum games, if each player employs an online learning algorithm with regret $\Reg^T$, then an $O(\Reg^T / T)$-approximate Nash equilibrium is attained after $T$ rounds.
Since many online learning algorithms achieve regret bounds of $\Reg^T = O(\sqrt{T})$, this implies that an $O(1/\sqrt{T})$-approximate equilibrium is obtained after $T$ rounds. 
In online learning, 
the regret bound of $\Reg^T = O(\sqrt{T})$ is generally unimprovable even for stochastic losses~\citep{cesabianchi06prediction}.

However, in the context of learning in games, the observations of each player are determined by the strategies of their opponents. 
By leveraging this property, it has been shown that the $O(\sqrt{T})$ individual regret upper bounds, and the corresponding $O(1/\sqrt{T})$ convergence rates, can be significantly improved, a phenomenon first observed by~\citet{daskalakis11near}. 
A prominent approach to achieving fast rates is the use of optimistic prediction, which involves predicting the next observation based on past observations.
These include optimistic follow-the-regularized-leader (OFTRL) and optimistic online mirror descent~\citep{rakhlin13online,rakhlin13optimization,syrgkanis15fast}.
For example, in two-player zero-sum games, if both players employ a specific OFTRL algorithm with a constant learning rate, each player can achieve a regret bound of $O(\log (\mx \my))$, independent of $T$, where $\mx$ and $\my$ denote the number of actions of $x$- and $y$-players, respectively~\citep{syrgkanis15fast}.

However, such improvements can only be achieved in the \emph{honest regime}, where all players fully adhere to the prescribed algorithm. 
A player may deviate from the output of the prescribed algorithm, for example, to address their own constraints or to amplify the regret of other players.
When an opponent's sequence of actions deviates entirely from the algorithm's output---a scenario often referred to as the adversarial regime in the literature---approaches that rely on the assumption of perfect adherence to the prescribed algorithm may fail to achieve sublinear regret.

Several approaches are known to address such adversarial opponents. 
For example, in two-player zero-sum games, by appropriately switching algorithms, it is possible to simultaneously achieve regret bounds of $O(\log(\mx \my))$ in the honest regime and $\tilde{O}(\sqrt{T})$ against adversarial opponents~\citep{syrgkanis15fast}.
However, such theoretical guarantees in the adversarial scenario can be very pessimistic.
Even when a player's strategies deviate only slightly from the output of the prescribed algorithm, we can only ensure the $\tilde{O}(\sqrt{T})$ bound. 
For example, an opposing player might initially fail to follow the prescribed algorithm (in which case the algorithm in \citealt{syrgkanis15fast} switches to one designed to handle dishonest players), but later revert to honest behavior.
Ideally, it is desirable to design learning dynamics that achieve regret bounds that smoothly bridge the $\tilde{O}(1)$ bound in the honest regime and the $\tilde{O}(\sqrt{T})$ worst-case bound.

\paragraph{Our contributions}
\begin{table*}[t]
      \caption{Comparison of individual (external) regret upper bounds of $x$-player in two-player zero-sum games with a payoff matrix $A \in [-1, 1]^{\mx \times \my}$ after $T$ rounds.
      The variables $\Chatx, \Chaty \in [0, 2T]$ are the amount of corruption in strategies for $x$- and $y$-players, respectively, $g^\t\in [-1 ,1]^{\mx}$ is a utility vector for $x$-player, and $P_\infty^T(g) = \sumT \nrm{g^\t - g^\tm}_\infty^2$.
    }
    \label{table:regret}
    \small
    \begin{tabular}{@{}l@{\hspace{1ex}}l@{\hspace{1ex}}l@{}}
        \toprule
      References & Honest & Corrupted (no corruption in observed utilities)
      \\
      \midrule
      \citet{rakhlin13optimization} & $\log (\mx \my T)$ & $\sqrt{P_\infty^T(g)} \log(\mx T) + \Chatx$ 
      \\
      \citet{kangarshahi18lets} & $\log(\mx \my T)$ & $ \sqrt{T \log \mx} + \Chatx$ 
      \\
      \citet{syrgkanis15fast}  & $\log(\mx \my)$ & $\log(\mx \my) + \sqrt{T \log \mx} + \Chatx$ 
      \\  
      \textbf{This work (\Cref{thm:indiv_reg_corrupt})} & \!\!\! $\sqrt{\log(\mx \my) \log \mx} $ & 
      $\min \set[\Big]{ 
        \!
        \sqrt{ \!
          \prn{
            \log (\mx \my)
            \!+\!
            \Chatx 
            \!+\!
            \Chaty
          } \!
          \log \mx
        }
        , \!
        \sqrt{P_\infty^T(g) \! \log \mx}
      }
      \!+\!
      \Chatx
      $ 
      \\  
      \bottomrule
    \end{tabular}
  \end{table*}

To achieve this goal, we present \emph{corrupted learning dynamics} that yield regret upper bounds which adapt to the degree of deviation from the prescribed algorithm. 
We introduce the \emph{corrupted regime} for $n$-player games,
characterized by corruption levels $\set{\Chati}_{i \in [n]}$.
In the corrupted regime, each player $i \in [n]$ accumulates a deviation of $\Chati \geq 0$ from the prescribed algorithm. 
When $\Chati = 0$ for all $i$, this regime corresponds to the honest regime.
See \Cref{sec:three_regimes} for details.

We propose learning dynamics that can effectively handle the corrupted regime for both two-player zero-sum games and multi-player general-sum games.
The proposed learning dynamics also achieve regret bounds that are comparable to or even better than the best existing dynamics in the honest regime. 
Our dynamics builds on OFTRL with adaptive learning rates. 
For each problem setup, we provide the following regret bounds and corresponding convergence rates to an equilibrium.

\paragraph{Two-player zero-sum games}
We start with two-player zero-sum games and provide learning dynamics with the following regret guarantees:
\begin{theorem}[Informal version of \Cref{thm:indiv_reg_corrupt}]\label{thm:indiv_reg_corrupt_informal}
  In two-player zero-sum games,
  there exists $(\Chatx, \Chaty)$-agnostic learning dynamics such that
  the (external) regret of $x$-player is bounded by
  $
  \sqrt{
    \log(\mx)
    \log(\mx \my)
  }
  $ in the honest regime 
  and 
  by
  $
  \min\set*{
    \sqrt{
      \log(\mx)
      \prn[\big]{
        \log(\mx \my)
        +
        \Chatx + \Chaty
      }
    }
    , 
    \sqrt{T \log \mx}
    }
    +
    \Chatx
  $
  in the corrupted regime,
  where $\Chatx$ and $\Chaty$ are the cumulative amount of corruption in strategies for $x$- and $y$-players, respectively.
  Similarly, the regret of $y$-player, $\Reg_\yrm^T$, is bounded by a similar quantity.
\end{theorem}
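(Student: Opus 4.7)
The plan is to analyze learning dynamics based on OFTRL with an adaptive, corruption-agnostic learning rate, using an RVU-style (``regret bounded by variation in utilities'') analysis. Let $x^\t, y^\t$ denote the iterates that the prescribed OFTRL algorithm would output at round $t$, while $\xtil^\t, \ytil^\t$ are the actual strategies played, so that $\Chatx = \sum_{t} \nrm{\xtil^\t - x^\t}_1$ and analogously for $\Chaty$. Since the utility observed by $x$-player is $g^\t = A\ytil^\t$, the external regret decomposes as
\[
  \Reg_\xrm^T \;=\; \sum_{t=1}^T \innerprod{\xtil^\t - x^*}{g^\t} \;\leq\; \sum_{t=1}^T \innerprod{x^\t - x^*}{g^\t} \;+\; \Chatx,
\]
so it suffices to bound the regret of the hypothetical OFTRL iterates $x^\t$ up to an additive $\Chatx$.

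For this remaining term, I would establish an adaptive RVU bound: for any non-increasing learning rate sequence $\eta^\t$, the OFTRL iterates satisfy
\[
  \sum_{t=1}^T \innerprod{x^\t - x^*}{g^\t} \;\leq\; \frac{\log \mx}{\eta^\T} \;+\; \sum_{t=1}^T \eta^\t \nrm{g^\t - g^\tm}_\infty^2 \;-\; \sum_{t=1}^T \frac{1}{c\,\eta^\t}\, \nrm{x^\t - x^\tm}_1^2
\]
for some absolute constant $c$. The key identity is $\nrm{g^\t - g^\tm}_\infty \leq \nrm{\ytil^\t - \ytil^\tm}_1 \leq \nrm{y^\t - y^\tm}_1 + \nrm{\ytil^\t - y^\t}_1 + \nrm{\ytil^\tm - y^\tm}_1$, which together with $\nrm{\ytil^\t - y^\t}_1 \leq 2$ shows that the squared corruption part telescopes into an $O(\Chaty)$ contribution (up to $\eta_{\max}$). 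Summing with the symmetric RVU bound for $y$-player, the ``clean'' quadratic variation $\nrm{y^\t - y^\tm}_1^2$ coming from the $x$-side cancels against the negative stability term on the $y$-side (and vice versa), provided the two learning rates are comparable.

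I would then choose an agnostic adaptive learning rate of the form $\eta^\t = \min\set{\eta_0,\, \sqrt{\log \mx / \max(1, Q^\tm)}}$, where $Q^\tm$ is an observable proxy for the cumulative utility variation, e.g.\ $\sum_{s<t} \nrm{g^\s - g^{\sm}}_\infty^2$. Standard AdaGrad-type manipulations then yield an OFTRL regret bound of order $\sqrt{\log(\mx)\,(1 + Q^\T)}$. In the honest regime, a self-bounding argument shows $Q^\T = O(\log(\mx \my))$, recovering the rate $\sqrt{\log(\mx)\log(\mx \my)}$; in the corrupted regime, the self-bounding is augmented by terms of size $\Chatx + \Chaty$ (since the $y$-player's RVU bound has positive variation driven by $\xtil^\t$, and vice versa), giving $Q^\T = O(\log(\mx \my) + \Chatx + \Chaty)$ and hence the first term inside the minimum. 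The fallback $\sqrt{T \log \mx}$ follows from the trivial bound $\nrm{g^\t - g^\tm}_\infty^2 \leq 4$, which the adaptive learning rate automatically invokes whenever corruption is large.

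The main obstacle is to make the self-bounding and cross-cancellation arguments work under an \emph{adaptive} learning rate that is agnostic to $(\Chatx, \Chaty)$, rather than a tuned constant rate. In particular, the cancellation between the $x$- and $y$-player bounds requires the two learning rates to remain comparable throughout the run, even though each is driven by observations generated by the (possibly corrupted) play of the other. Designing $Q^\t$ so that it yields the desired honest-regime bound via self-bounding while simultaneously absorbing corruption in the corrupted regime, together with a monotonicity argument on $\eta^\t$ and a two-sided induction across both players, is the delicate step that should make all the pieces fit together and deliver the stated individual-regret bound.
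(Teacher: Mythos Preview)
Your high-level plan matches the paper's approach: OFTRL with an adaptive AdaGrad-style learning rate, an RVU bound per player, translating utility variation into opponent path-length plus corruption, and summing. However, the obstacle you single out---that the two players' learning rates must ``remain comparable'' for cross-cancellation to work---is a red herring, and the paper's argument neither needs nor attempts any such comparison or ``two-sided induction.''

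The paper's mechanism is the following. With the adaptive rate $\eta_\xrm^\t \simeq \sqrt{\log \mx / (\log \mx + \sum_{s<t}\nrm{\gtil^\s-\gtil^\sm}_\infty^2)}$, the AdaGrad summation turns the first two terms of the RVU bound into $O\bigl(\sqrt{\log \mx\cdot(\log \mx + P_\infty^T(\gtil))}\bigr)$, i.e.\ the positive variation sits under a \emph{square root}. Because $\eta_\xrm^\t$ is uniformly bounded above (by $1/\sqrt{2}$), the negative stability term keeps a \emph{constant} coefficient: $-c\,P_1^T(\xhat)$. After translating $P_\infty^T(\gtil)$ into $P_1^T(\yhat)$ plus corruption, summing the two players' bounds gives an expression of the form
\[
  \Reg_{x,g}^T+\Reg_{y,\ell}^T \;\lesssim\; \sqrt{P_1^T(\yhat)+\cdots}\;+\;\sqrt{P_1^T(\xhat)+\cdots}\;-\;c\bigl(P_1^T(\xhat)+P_1^T(\yhat)\bigr)+\Cx+\Cy,
\]
and the elementary inequality $b\sqrt{z}-az\le b^2/(4a)$ eliminates the path-length dependence \emph{without} any relation between $\eta_\xrm^\t$ and $\eta_\yrm^\t$. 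The crucial closing step---which your sketch never states explicitly---is to invoke the zero-sum structure: $\Reg_{x,g}^T+\Reg_{y,\ell}^T\ge 0$ (this is the nonnegativity of the duality gap at a Nash equilibrium). That inequality converts the previous display into an upper bound on $P_1^T(\xhat)+P_1^T(\yhat)$, which is then plugged back into the individual RVU bound to obtain the theorem.

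In short: your ``self-bounding'' instinct is right, but the self-bounding lever is $\Reg_{x,g}^T+\Reg_{y,\ell}^T\ge 0$, and the cancellation is $\sqrt{z}$ against $z$ (after AdaGrad), not $\eta z$ against $z/\eta$. Once you use those two facts, the proof goes through cleanly and no comparability of learning rates is required.
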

This is the first external regret upper bound for the corrupted regime.
Note that in the literature, the \emph{adversarial regime for $x$-player} typically refers to the case where $\Chatx = 0$ and $\Chaty$ is arbitrary,
whereas here we provide a bound for the more general setting with $\Chatx \geq 0$.
This regret guarantee incentivizes players to follow the prescribed dynamics: any deviation by an opponent from the algorithm's output incurs only a square-root penalty, whereas a deviation by a player from the prescribed algorithm incurs a linear penalty.
A comparison with existing bounds is provided in \Cref{table:regret}.
From the regret bound in \Cref{thm:indiv_reg_corrupt_informal} and the relationship between no-external-regret learning in online linear optimization and an (approximate) Nash equilibrium in \Cref{thm:reg2nasheq} (see next section),
it follows that the average play after $T$ rounds is an $\tilde{O}( \prn{\Chatx + \Chaty} / T)$-approximate Nash equilibrium in the corrupted regime, where we ignore the dependence on $\mx$ and $\my$.

\begin{table*}[t]
      \caption{Comparison of individual swap regret upper bounds of player $i$ in multi-player general-sum games with $n$-players and $m$-actions after $T$ rounds.
      The variable $\Chati \in [0, 2T]$ is the cumulative amount of corruption in strategies for player $i$, and $\hat{S} = \sum_{i \in [n]} \Chati$.
    }
    \label{table:regret_generalsum}
    \centering
    \small
    \begin{tabular}{@{}l@{\hspace{1ex}}l@{\hspace{1ex}}l@{}}
      \toprule
      References & Honest & Corrupted (no corruption in observed utilities)
      \\
      \midrule
      \citet{chen20hedging} &  $\sqrt{n} \prn{m \log m}^{3/4} T^{1/4}$ & $\sqrt{m T \log m} + \Chati$ 
      \\
      \citet{anagnostides22near} &  $n m^4 \log m \log^4 T$ & N/A 
      \\
      \citet{anagnostides22uncoupled} &  $n m^{5/2} \log T$ & $n m^{5/2} \log T + \sqrt{T m \log m} + \Chati$ 
      \\  
      \textbf{This work (\Cref{thm:indiv_swapreg})} & 
      $n m^{5/2} \log T$ & 
      $n m^{5/2} \log T 
      \!+\! 
      \min\set[\Big]{ \!
        \sqrt{\hat{S} \prn{n m^2 \!+\! m^{5/2}} \log T}
        ,
        m \sqrt{T \log T}
      }
      \!+\!
      \Chati
      $ 
      \\
      \bottomrule
    \end{tabular}
  \end{table*}

\paragraph{Multi-player general-sum games}
Building on the analysis for two-player zero-sum games, we propose new corrupted learning dynamics for multi-player general-sum games. In multi-player general-sum games, to obtain a \emph{correlated equilibrium}~\citep{aumann74subjectivity,foster97calibrated,hart00simple}, we focus on minimizing the \emph{swap regret} to get the following guarantees:

\begin{theorem}[Informal version of \Cref{thm:indiv_swapreg}]\label{thm:indiv_swapreg_informal}
  In multi-player general-sum games with $n$-players and $m$-actions,
  there exists $\set{\Chati}_{i \in [n]}$-agnostic learning dynamics such that the swap regret of each player $i$
  is at most 
  $
  n m^{5/2} \log T 
  $
  in the honest regime 
  and
  $
  n m^{5/2} \log T 
  +
  \min\set[\Big]{
    \sqrt{
      \hat{S}
      \prn{
        n
        m^2
        +
        m^{5/2}
      }
      \log T
    }
    ,
    m \sqrt{T \log T}
  }
  +
  \Chati 
  $
  in the corrupted regime,
  where $\Chati$ is the cumulative amount of corruption in strategies of player $i$ and 
  $\hat{S} = \sum_{i \in [n]} \Chati$.
\end{theorem}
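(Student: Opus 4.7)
My plan is to reduce swap regret to external regret via the Blum--Mansour / Stoltz--Lugosi construction, run an OFTRL sub-learner with log-barrier regularization and a new adaptive learning rate on each of the $m$ sub-learners, and aggregate via the stationary distribution of an induced Markov chain. For each player $i$ I maintain $m$ sub-learners $\mathcal{A}_{i,1},\dots,\mathcal{A}_{i,m}$; at round $t$ the sub-learner $\mathcal{A}_{i,j}$ outputs $q_{i,j}^{(t)}\in\simplex{m}$, and the prescribed play for player $i$ is the stationary distribution $\pi_i^{(t)}$ of the row-stochastic matrix with rows $q_{i,j}^{(t)}$, feeding the scaled loss $\pi_{i,j}^{(t)}\,\ell_i^{(t)}$ to $\mathcal{A}_{i,j}$. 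A standard calculation then bounds the swap regret of the prescribed strategy by $\sum_j \Reg_{i,j}^T$, while the actual player's deviation from $\pi_i^{(t)}$ contributes only the additive $\Chati$ term.

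\paragraph{Adaptive OFTRL and variation decomposition.}
Each $\mathcal{A}_{i,j}$ runs OFTRL with a log-barrier regularizer and a learning rate $\eta_{i,j}^{(t)}$ tuned in the AdaFTRL spirit to the observed squared variation of the loss $\pi_{i,j}^{(t)}\ell_i^{(t)}$. Standard log-barrier OFTRL manipulations yield a bound of the form
\begin{equation*}
\Reg_{i,j}^T \;\lesssim\; \tfrac{m\log T}{\eta_{i,j}^{(T)}} \;+\; \eta_{i,j}^{(T)}\,V_{i,j}^{(T)} \;-\; \tfrac{c}{\eta_{i,j}^{(T)}}\sum_{t=1}^{T}\|q_{i,j}^{(t+1)}-q_{i,j}^{(t)}\|_1^2,
\end{equation*}
where $V_{i,j}^{(T)}=\sum_t \|\pi_{i,j}^{(t)}\ell_i^{(t)}-\pi_{i,j}^{(t-1)}\ell_i^{(t-1)}\|_\infty^2$. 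This gives $O(m\log T)$ in the honest regime and $O(\sqrt{m\,V_{i,j}^{(T)}\log T})$ in general, with the negative stability term retained as slack. I then decompose $V_{i,j}^{(T)}$ into (a) the honest drift of opponents $\sum_{k\neq i}\|x_k^{(t)}-x_k^{(t-1)}\|_1^2$, (b) opponents' cumulative deviations $\sum_{k\neq i}\Chatk$, and (c) the movement of $\pi_{i,j}^{(t)}$ itself, which Step~3 below converts into movement of the whole stationary distribution $\pi_i^{(t)}$.

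\paragraph{The main obstacle: Markov-chain stability under adaptive rates.}
This is the step flagged as the key technical contribution in the abstract. Classical swap-regret analyses control $\|\pi_i^{(t)}-\pi_i^{(t-1)}\|_1$ by the worst row-movement of the transition matrix times a conditioning factor, which is harmless at a constant learning rate but destroys the corrupted-regime scaling as soon as $\eta_{i,j}^{(t)}$ depends on $(i,j,t)$. My plan is (a) to exploit the multiplicative stability of log-barrier OFTRL to obtain a row-wise bound $\|q_{i,j}^{(t)}-q_{i,j}^{(t-1)}\|_1 \lesssim \eta_{i,j}^{(t)}\,\pi_{i,j}^{(t-1)}\,\|\ell_i^{(t)}-\ell_i^{(t-1)}\|_\infty$ plus an optimism error, and (b) to prove a \emph{weighted} stationary-distribution perturbation bound
\begin{equation*}
\|\pi_i^{(t)}-\pi_i^{(t-1)}\|_1 \;\le\; \mathrm{poly}(m)\,\sum_{j}\pi_{i,j}^{(t-1)}\,\|q_{i,j}^{(t)}-q_{i,j}^{(t-1)}\|_1.
\end{equation*}
The crucial $\pi_{i,j}^{(t-1)}$ weight is what allows the $m$ negative stability terms from the sub-learners to aggregate into a single negative multiple of $\|\pi_i^{(t)}-\pi_i^{(t-1)}\|_1^2$, which in turn absorbs contribution~(c) \emph{uniformly in $j$}, a cancellation not supported by the classical analysis.

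\paragraph{Assembling the final bound.}
Summing the per-sub-learner bounds over $j$, invoking Step~3 across all players to decouple the drift terms, and solving the resulting self-bounding inequality with AM--GM yields, for each player $i$,
\begin{equation*}
\SwapReg_i^T \;\lesssim\; n m^{5/2}\log T \;+\; \sqrt{(nm^2+m^{5/2})\log T\cdot \hat{S}} \;+\; \Chati,
\end{equation*}
where $\hat{S}=\sum_k \Chatk$. Taking the minimum with the worst-case AdaFTRL bound $O(m\sqrt{T\log T})$ per sub-learner, automatically active when $\hat{S}$ is large, produces the stated $\min\{\cdot,\cdot\}$ form; agnosticity to $\{\Chatk\}$ is preserved because $\eta_{i,j}^{(t)}$ depends only on observed quantities. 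I expect Step~3 to be by far the most delicate part, as it is precisely what lets a single adaptive algorithm interpolate between the honest $O(\log T)$ and the corrupted $O(\sqrt{\hat{S}})$ regimes.
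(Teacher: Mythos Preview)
Your overall architecture is right and matches the paper: Blum--Mansour reduction, log-barrier OFTRL on each of the $m$ sub-learners with an expert-wise adaptive rate, RVU-type bound with a negative local-norm term, and the additive $\Chati$ coming from \Cref{prop:swapreg_relation}. You have also correctly located where the real work is: controlling the movement of the stationary distribution under an adaptive, non-uniform learning rate.

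Where your proposal diverges from the paper is precisely Step~3, and this is where there is a genuine gap. You propose to (a) bound each row's movement by $\eta_{i,j}^{(t)}\,\pi_{i,j}^{(t-1)}\,\|\ell_i^{(t)}-\ell_i^{(t-1)}\|_\infty$ and then (b) invoke a \emph{weighted} stationary-distribution perturbation inequality $\|\pi_i^{(t)}-\pi_i^{(t-1)}\|_1 \le \mathrm{poly}(m)\sum_j \pi_{i,j}^{(t-1)}\|q_{i,j}^{(t)}-q_{i,j}^{(t-1)}\|_1$. Neither piece is established. For (a), the loss fed to sub-learner $j$ is $\xhat_i^{(t)}(j)\,\util_i^{(t)}$, so its increment contains a term $|\xhat_i^{(t)}(j)-\xhat_i^{(t-1)}(j)|\cdot\|\util_i^{(t-1)}\|_\infty$, which is exactly the stationary-distribution movement you are trying to bound---a circularity your plan does not break. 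It also omits the contribution from the change in $\eta_{i,j}^{(t)}$ itself, which is nonzero under an adaptive rate and must be controlled (the paper does this in \Cref{lem:lr_diff}). For (b), the only Markov-chain perturbation lemma available here is the \emph{unweighted} bound of \citet{anagnostides22uncoupled} (their \Cref{lem:anagnostides} in this paper), which requires $\sum_a \mu_a^{(t)}\le 1/2$ as a hypothesis; a $\pi$-weighted version of the form you state is not standard and you do not prove it.

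The paper avoids both problems by \emph{not} analyzing the rows one at a time. Its key lemma (\Cref{lem:diff_local_oftrl}) works in the product space $(\Delta_m)^m$ and proves directly that $\sum_{a}\|y_a^{(t+1)}-y_a^{(t)}\|_{y_a^{(t)},\phi}\le 1/2$. The argument uses \Cref{lem:localnorm_suffcond}: one shows that for any $\bm z$ at local-norm distance $1/(2\sqrt m)$ from $\bm y^{(t)}$ in the product objective $G^{(t+1)}=\sum_a F_a^{(t+1)}$, the value $G^{(t+1)}(\bm z)\ge G^{(t+1)}(\bm y^{(t)})$. The first-order optimality of each $y_a^{(t)}$ for $F_a^{(t)}$ turns $\nabla F_a^{(t+1)}(y_a^{(t)})$ into $-2\util_a^{(t)}+\util_a^{(t-1)}+(1/\eta_a^{(t+1)}-1/\eta_a^{(t)})\nabla\phi(y_a^{(t)})$; since $\util_a^{(t)}=\xhat^{(t)}(a)\,\util^{(t)}$ and the learning-rate increment also carries an $\xhat(a)$ factor (\Cref{lem:lr_diff}), \emph{summing over $a$} collapses these first-order terms to $O(1)$ without ever needing $\|\xhat^{(t)}-\xhat^{(t-1)}\|$. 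This is exactly the ``$\pi_{i,j}$ weighting'' you intuited, but it enters through the losses at the Taylor-expansion level, not through a weighted Markov perturbation lemma. With $\sum_a\mu_a^{(t)}\le 1/2$ in hand (\Cref{lem:sum_mu_onehalf}), the \emph{unweighted} Anagnostides bound then yields \Cref{lem:stationary_stab}, which is the inequality you need to turn the negative local-norm terms into a negative multiple of $\|\xhat_i^{(t)}-\xhat_i^{(t-1)}\|_1^2$. The rest of your Step~4 then goes through essentially as in the paper's \Cref{lem:swapreg_xhat_bound} and the proof of \Cref{thm:indiv_swapreg}.
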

This is the first swap regret upper bound for the corrupted regime.
Note again that in the literature, the adversarial regime for player $i$ typically refers to the corrupted regime where $\Chati = 0$ and $\hat{C}_j$ is arbitrary for all $j \neq i$, whereas here we provide upper bounds for the more general setting with $\Chati \geq 0$.
Compared to the best swap regret bound by~\citet{anagnostides22uncoupled}, our algorithm achieves the same bound in the honest regime, a new adaptive bound in the corrupted regime in terms of $\hat{S}$, and a worst-case $\sqrt{T}$-type bound that is $\sqrt{m}$ time worse than theirs (which is due to the use of the adaptive learning rate).
A comparison with existing swap regret bounds is summarized in \Cref{table:regret_generalsum}.
Our algorithm is a variant of the algorithm by~\citet{anagnostides22uncoupled}, who use OFTRL with a constant learning rate~\citep{syrgkanis15fast} and the reduction of swap regret minimization to external regret minimization~\citep{blum07external}. 
From the swap regret upper bound in \Cref{thm:indiv_swapreg_informal} and the relationship between no-swap-regret learning in online linear optimization and a correlated equilibrium in \Cref{thm:swapreg2coreq} (see next section),
it follows that when the above algorithm is used by all players, 
the time-averaged history of joint play
after $T$ rounds is an $O\prn[\big]{\prn{\log T + \sqrt{\hat{S} \log T} + \max_{k \in [n]} \Chatk} / T}$-approximate correlated equilibrium in the corrupted regime, where we ignore the dependence on $n$ and $m$.
%

Achieving the swap regret bound in the corrupted regime requires a new analysis of OFTRL and the stability of Markov chains.
The existing analysis in~\citet{anagnostides22uncoupled} relies heavily on the fact that OFTRL uses a constant learning rate, and thus we cannot directly employ it when we use an adaptive learning rate. 
By setting the learning rate sufficiently small and applying an analysis similar to that of two-player zero-sum games to prove \Cref{thm:indiv_reg_corrupt_informal}, as well as the analysis in \citet{wei18more},
we can show a swap regret bound of $O(n m^8 \log T)$ in the honest regime.
However, the dependence on $m$ of this upper bound is significantly worse than the $O(n m^{5/2} \log T)$ bound by \citet{anagnostides22uncoupled}.  

To address this issue, we leverage the fact that swap regret minimization can be achieved via multiple no-external-regret algorithms~\citep{blum07external}.
Specifically, we construct a transition probability matrix from the outputs of the external regret minimizer for each action and adopt the stationary distribution of the resulting Markov chain as the final strategy. 
We consider analyzing the stability of the transition probability matrix exploiting this property (\Cref{lem:diff_local_oftrl}), which leads to the significant improvement in the dependence on the number of actions $m$.

Our problem setting and analysis can be extended to account for corruption not only in the strategies but also in the observed expected utilities. 
This setup plays a crucial role in practical applications. For a detailed definition of the problem setting and an explanation of scenarios in which such corruption occurs, see \Cref{sec:three_regimes}.
Finally, at the end of this paper, we provide several matching lower bounds (\Cref{thm:lower_bounds}) that depend on the degree of corruption in strategies and utilities, offering a partial characterization of corrupted learning (with a complete characterization left for future work).

It is worth mentioning that our dynamics is \emph{strongly uncoupled}~\citep{hart00simple,daskalakis11near}, meaning that each player requires no prior information about the game and determines their strategies solely based on past observations gathered through game interactions. 
Naturally, players are unaware of other players' utilities, and there is no communication between players.

\section{Preliminaries}\label{sec:preliminaries}
\paragraph{Notation and conventions}
For a natural number $n \in \N$, we use $[n] = \set{1, \dots, n}$.
Let $\zeros$ and $\ones$ denote the zero vector and one vector with all entries equal to $0$ and $1$, respectively.
Given a vector~$x$, 
we use $x(a)$ to denote its $a$-th element
and use $\nrm{x}_p$ to denote its $\ell_p$-norm for $p \in [1, \infty]$.
For a matrix~$A$, we use $A(k,\cdot)$ to denote its $k$-th row.
We use $\Delta(\calK)$ to denote the set of all probability distributions over $\calK$,
and use $\Delta_d = \set{ x \in [0,1]^d \colon \nrm{x}_1 = 1 }$ to denote the $(d-1)$-dimensional probability simplex.
Let $D_\psi(x,y)$ denote the Bregman divergence between $x$ and $y$ induced by a differentiable convex function $\psi$, that is,
$
  D_{\psi}(x, y) 
  = 
  \psi(x) - \psi(y) - \inpr{\nabla \psi(y),x - y}
$.
We use $\nrm{h}_{x,f} = \sqrt{h^\top \nabla^2 f(x) h}$
and $\nrm{h}_{*,x,f} = \sqrt{h^\top \prn{\nabla^2 f(x)}^{-1} h}$ to denote the local norm and its dual norm of a vector $h$ at a point $x$ with respect to a convex function $f$, respectively.
To simplify the notation, we use $f \lesssim g$ to denote $f = O(g)$.
For a sequence $z = (z^{(1)}, \dots, z^{(T)})$ and $q \in [1, \infty]$, 
we define the (squared) path-length values in terms of the $\ell_q$-norm as
$
  P_q^T(z) = \sum_{t=2}^T \nrm{z^\t - z^\tm}_q^2.
$
Throughout this paper, we frequently use $i$ as the player index, $a$ as the action index, $d$ or $m$ as the dimension of a feasible set.

\subsection{Online linear optimization and external/swap regret}\label{subsec:olo}
Online linear optimization, a fundamental area of online learning, serves as a key framework for learning in games.
In online linear optimization, a learner is given a convex set $\calK \subseteq \R^d$ before the game starts.
Then at each round $t = 1, \dots, T$,
the learner first selects a point $x^\t \in \calK$ based on past observations,
and the environment then determines a loss vector $\ell^\t \in \R^d$ without knowledge of $x^\t$.
The learner then suffers a loss of $\inpr{x^\t, \ell^\t}\in \R$ and observes the loss vector $\ell^\t$.
The most common objective for the learner is to minimize the external regret $\Reg^T$, which is defined as
$
  \Reg^T
  =
  \max_{u \in \calK} 
  \Reg^T(u) 
$
  for
$
  \Reg^T(u)
  =
  \sumT \inpr{x^\t - u, \ell^\t}  
  .
$
We will see below that minimizing this external regret relates to the problem of finding an approximate Nash equilibrium.

Another important objective of the learner is to minimize the swap regret.
Here, we consider only the case where the feasible set is the probability simplex.
Let
$\calM_d
= 
\set{ M \in [0,1]^{d \times d} \colon M(k, \cdot) \in \Delta_d \mbox{ for } k \in [d]}
$ 
be the set of all $d \times d$ row stochastic matrices, which we also refer to as transition probability matrices.
Then, the swap regret is defined as
$
  \SwapReg^T
  =
  \max_{M \in \calM_{d}} 
  \SwapReg^T(M) 
$
  for
$
  \SwapReg^T(M)
  =
  \sumT \inpr{x^\t, \ell^\t - M \ell^\t}
  .
$
We will see below how minimizing the swap regret is related to the problem of finding a correlated equilibrium. 
It should be noted that the swap regret is nonnegative, a property leveraged to obtain a desirable regret in multi-player general-sum games. 

\subsection{Two-player zero-sum games}\label{subsec:two_player_preliminaries}
Here we introduce two-player zero-sum games with a payoff matrix $A \in [-1, 1]^{\mx \times \my}$,
where $\mx$ and $\my$ are the number of actions of $x$- and $y$-players, respectively.
The procedure of this game is as follows:
at each round $t = 1, \dots, T$, 
$x$-player chooses a \emph{mixed strategy} (or strategy for short) $x^\t \in \Delta_{\mx}$ and $y$-player chooses $y^\t \in \Delta_{\my}$.
Then $x$-player observes a utility vector $g^\t = A y^\t$ and gains a reward of $\inpr{x^\t, g^\t}$, and $y$-player observes a loss vector $\ell^\t = A^\top x^\t$ and incurs a loss of $\inpr{y^\t, \ell^\t}$.
The regret of $x$-player, $\Reg_{x,g}^T$ is given by
$
  \Reg_{x,g}^T 
  =
  \max_{x^* \in \Delta_{\mx}} 
  {\Reg_{x,g}^T(x^*)}
$
  for 
$
  \Reg_{x,g}^T(x^*)
  = 
  \sumT \inpr{x^* - x^\t, A y^\t}
  =
  \sumT \inpr{x^* - x^\t, g^\t}
$,
and the regret of $y$-player is given by
$
  \Reg_{y,\ell}^T
  = 
  \max_{y^* \in \Delta_{\my}} 
  {\Reg_{y,\ell}^T(y^*)}
$
for 
$
  \Reg_{y,\ell}^T(y^*)
  = 
  \sumT \inpr{y^\t - y^*, A^\top x^\t}
  =
  \sumT \inpr{y^\t - y^*, \ell^\t}
  .
$
With this framework established, we now introduce the concept of a Nash equilibrium.
\begin{definition}[Nash equilibrium]
  In a two-player zero-sum game with a payoff matrix $A \in [-1, 1]^{\mx \times \my}$, a pair of probability distributions $\sigma = (x^*, y^*)$ over action sets $[\mx]$ and $[\my]$ is an \emph{$\epsilon$-approximate Nash equilibrium} for $\epsilon \geq 0$ if for any distributions $x \in \Delta_{\mx}$ and $y \in \Delta_{\my}$,
  $
    x^\top A y^* - \epsilon
    \leq
    {x^*}^\top A y^*
    \leq 
    {x^*}^\top A y + \epsilon
    .
  $
  The distribution $\sigma$ is a \emph{Nash equilibrium} if $\sigma$ is a $0$-approximate Nash equilibrium.
\end{definition}
The following lemma is well known in the literature and provides a connection between no-external-regret learning dynamics and learning in two-player zero-sum games.
\begin{theorem}[\citealt{freund99adaptive}]\label{thm:reg2nasheq}
  In two-player zero-sum games,
  suppose that the regrets of $x$- and $y$-players are $\Reg_{x,g}^T$ and $\Reg_{y,\ell}^T$, respectively.
  Then, the product distribution of the average play $\prn{\frac1T \sumT x^\t, \\ \frac1T \sumT y^\t}$ is a $(\prn{\Reg_{x,g}^T + \Reg_{y,\ell}^T} / T)$-approximate Nash equilibrium.
\end{theorem}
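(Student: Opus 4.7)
The plan is to deduce both Nash inequalities directly from the definitions of the two regrets by introducing the average realized value $V := \frac{1}{T}\sumT \inpr{x^\t, A y^\t}$ as a pivot between the two players' perspectives, and specializing each regret bound twice: once at the average play and once at an arbitrary deviation.

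First I would unpack the $x$-player's regret: by definition, for every $x \in \Delta_{\mx}$ one has $\sumT \inpr{x - x^\t, A y^\t} \le \Reg_{x,g}^T$, and dividing by $T$ together with the linearity of $y \mapsto \inpr{x, Ay}$ yields $x^\top A \ybar \le V + \Reg_{x,g}^T/T$. The analogous manipulation of $\Reg_{y,\ell}^T$, using $\inpr{y^\t - y, A^\top x^\t} = \inpr{x^\t, Ay^\t} - \inpr{x^\t, Ay}$, gives $V \le \xbar^\top A y + \Reg_{y,\ell}^T/T$ for every $y \in \Delta_{\my}$. These two one-sided inequalities already encode the entire content of the theorem.

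To prove the right-hand Nash inequality $\xbar^\top A \ybar \le \xbar^\top A y + \epsilon$ with $\epsilon = (\Reg_{x,g}^T + \Reg_{y,\ell}^T)/T$, I would specialize the $x$-side bound at $x = \xbar$ to get $\xbar^\top A \ybar \le V + \Reg_{x,g}^T/T$ and then chain it with the $y$-side bound at an arbitrary $y$, which yields $\xbar^\top A \ybar \le \xbar^\top A y + \epsilon$. Symmetrically, the left-hand inequality $x^\top A \ybar - \epsilon \le \xbar^\top A \ybar$ for arbitrary $x$ follows by specializing the $y$-side bound at $y = \ybar$ to obtain $V \le \xbar^\top A \ybar + \Reg_{y,\ell}^T/T$ and combining it with the $x$-side bound at arbitrary $x$.

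There is no genuinely hard step---this is the classical Freund--Schapire bookkeeping. The only care required is in tracking which of the two regret inequalities should be specialized at the average play and which should be kept at an arbitrary point, so that $V$ cancels as a pivot between the two chains. The argument is entirely linear in the payoff matrix, so no convexity, concentration, or optimism argument is needed; moreover, the same derivation would go through unchanged if $\Reg_{x,g}^T$ and $\Reg_{y,\ell}^T$ were the instance-dependent corrupted-regime bounds introduced elsewhere in the paper.
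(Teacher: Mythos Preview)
Your proof is correct and is exactly the standard Freund--Schapire argument. The paper itself does not prove this theorem: it is stated as a well-known folklore result and attributed to \citet{freund99adaptive} without an accompanying proof, so there is nothing to compare against beyond noting that your derivation is the canonical one.
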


\subsection{Multi-player general-sum games}\label{subsec:multi_player_preliminaries}
We next introduce multi-player general-sum games.
Let $n \geq 2$ be the number of players and $[n] = \set{1,\dots,n}$ be the set of players.
Each player $i \in [n]$ has an action set $\calA_i$ with $\abs{\calA_i} = m_i$ and a utility function $u_i \colon \calA_1 \times \cdots \times \calA_n \to [-1,1]$.
The procedure of this game is as follows:
at each round $t = 1, \dots, T$, each player $i \in [n]$ chooses a mixed strategy $x_i^\t \in \Delta_{m_i}$ and observes a \emph{expected utility} $u_i^\t \in [-1, 1]^{m_i}$. 
Here, the $a_i$-th element of $u_i^\t$ is defined as 
$
  u_i^\t(a_i) = \E_{a_{-i} \sim x_{-i}^\t} \brk*{u_i(a_i, a_{-i})}
  ,
$
which is the expected reward when player $i$ selects $a_i$ and the other players choose their actions following a distribution $x_{-i}^\t = (x_1^\t, \dots, x_{i-1}^\t, x_{i+1}^\t, \dots, x_n^\t)$.
The swap regret of each player $i \in [n]$
is given by
$
  \SwapReg_{x_i,u_i}^T
  =
  \max_{M \in \calM_{m_i}} 
  \SwapReg_{x_i,u_i}^T(M) 
$
for 
$
  \SwapReg_{x_i,u_i}^T(M)
  =
  \sumT \inpr{x_i^\t, M u_i^\t - u_i^\t}
  ,
$
where 
we recall that
$\calM_m
= 
\set{ M \in [0,1]^{m \times m} \colon M(k, \cdot) \in \Delta_m \mbox{ for } k \in [m]}
$ 
is the set of all $m \times m$ row stochastic matrices
and
note that here we consider the utility vector $u_i^\t$ instead of a loss vector.
Note also that two-player zero-sum games are special cases of multi-player general-sum games, and hereafter, corresponding notations may be used without clarification.
With this framework established, we can now introduce the concept of a correlated equilibrium.
\begin{definition}[Correlated equilibrium, \citealt{aumann74subjectivity}]
  A probability distribution $\sigma$ over action sets $\times_{i=1}^n \calA_i$ is an \emph{$\epsilon$-approximate correlated equilibrium} for $\epsilon \geq 0$ if 
  for any player $i \in [n]$ and any (swap) function $\phi_i \colon \calA_i \to \calA_i$ that swap action $a_i$ with $\phi_i(a_i)$,
  $
    \E_{a \sim \sigma} \brk*{ u_i(a) }
    \geq 
    \E_{a \sim \sigma} \brk*{ u_i(\phi_i(a_i), a_{-i}) }
    -
    \epsilon
    .
  $
  The distribution $\sigma$ is a \emph{correlated equilibrium} if $\sigma$ is a $0$-approximate correlated equilibrium.
\end{definition}
The following lemma is folklore and provides a connection between no-swap-regret learning dynamics and learning in multi-player general-sum games.
\begin{theorem}[\citealt{foster97calibrated}]\label{thm:swapreg2coreq}
  In multi-player general-sum games,
  suppose that the swap regret of each player $i$ is $\SwapReg_{x_i,u_i}^T$
  and let 
  $\sigma^\t = \otimes_{i \in [n]} x_i^\t \in \Delta(\times_{i=1}^n \calA_i)$ 
  given by
  $\sigma^\t(a_1, \dots, a_n) = \prod_{i\in[n]} x_i^\t(a_i)$ for each $a_i \in \calA_i$
  be the joint distribution at round $t$.
  Then, 
  its time-averaged distribution $\sigma = \frac{1}{T} \sumT \sigma^\t$
  is a $(\max_{i \in [n]} \SwapReg_{x_i,u_i}^T / T)$-approximate correlated equilibrium.
\end{theorem}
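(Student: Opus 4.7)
The plan is to verify the $\epsilon$-approximate correlated equilibrium condition directly, with $\epsilon = \max_{i \in [n]} \SwapReg_{x_i,u_i}^T / T$. Fix an arbitrary player $i \in [n]$ and an arbitrary swap function $\phi_i \colon \calA_i \to \calA_i$; the goal is to show that $\E_{a \sim \sigma}\brk*{u_i(\phi_i(a_i), a_{-i})} - \E_{a \sim \sigma}\brk*{u_i(a)} \leq \epsilon$.

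First, I will expand both expectations using $\sigma = \frac{1}{T} \sumT \sigma^\t$ together with the product structure $\sigma^\t(a) = \prod_{j \in [n]} x_j^\t(a_j)$. The definition of the expected utility vector, $u_i^\t(a_i) = \E_{a_{-i} \sim x_{-i}^\t}\brk*{u_i(a_i, a_{-i})}$, lets the product over coordinates $j \neq i$ collapse after marginalizing over $a_{-i}$, so that $\E_{a \sim \sigma^\t}\brk*{u_i(a)} = \inpr{x_i^\t, u_i^\t}$ and similarly $\E_{a \sim \sigma^\t}\brk*{u_i(\phi_i(a_i), a_{-i})} = \sum_{a_i \in \calA_i} x_i^\t(a_i)\, u_i^\t(\phi_i(a_i))$.

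Next, I will encode $\phi_i$ as a deterministic row stochastic matrix $M_{\phi_i} \in \calM_{m_i}$ via $M_{\phi_i}(a_i, b_i) = \ind{\phi_i(a_i) = b_i}$. A direct calculation then gives $\sum_{a_i} x_i^\t(a_i) u_i^\t(\phi_i(a_i)) = \inpr{x_i^\t, M_{\phi_i} u_i^\t}$, so averaging the preceding identities over $t = 1, \dots, T$ shows that the left-hand side of the target inequality equals exactly $\frac{1}{T} \SwapReg_{x_i,u_i}^T(M_{\phi_i})$. Since $M_{\phi_i} \in \calM_{m_i}$, this quantity is at most $\SwapReg_{x_i,u_i}^T / T$, which is in turn at most $\max_{k \in [n]} \SwapReg_{x_k,u_k}^T / T = \epsilon$, closing the argument.

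There is no genuine analytical obstacle here; the result is essentially a bookkeeping exercise. The only point deserving care is to match the swap function formulation used in the definition of correlated equilibrium with the matrix-based formulation used to define $\SwapReg_{x_i,u_i}^T$. This is resolved by observing that every deterministic swap function $\phi_i$ corresponds to a vertex of the polytope $\calM_{m_i}$ of row stochastic matrices, so the maximum over $\calM_{m_i}$ already dominates the deviation induced by any single $\phi_i$.
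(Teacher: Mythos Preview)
Your proposal is correct and is exactly the standard folklore argument. The paper does not actually give its own proof of this theorem; it is stated as a classical result attributed to \citet{foster97calibrated}, so there is nothing to compare against beyond noting that your encoding of the swap function $\phi_i$ as the deterministic row-stochastic matrix $M_{\phi_i} \in \calM_{m_i}$ is precisely what makes the swap-regret definition in the paper match the correlated-equilibrium definition.
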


\subsection{Optimistic follow-the-regularized-leader}
Here we present the framework of OFTRL for online linear optimization introduced in \Cref{subsec:olo}, which we employ in both two-player zero-sum games and multi-player general-sum games.
At each round $t \in [T]$, OFTRL selects a point $x^\t \in \calK \subseteq \R^d$ such that
\begin{equation}
  x^\t 
  \in 
  \argmin_{x \in \calK} 
  \set*{
    \inpr*{x, m^\t + \sum_{s=1}^{t-1} \ell^\s} 
    +
    \psi^\t(x)
  }
  \n
\end{equation}
for a convex regularizer $\psi^\t$ over $\calK$ and an optimistic prediction $m^\t \in \R^d$ of the true loss vector $\ell^\t$.
The optimistic prediction $m^\t$ serves as the predicted value of $\ell^\t$ based on information available up to round $t-1$, and the closer this prediction is to the actual loss vector $\ell^\t$, the smaller the regret.
Intuitively, this property can be used to achieve fast convergence in learning in games because the loss vectors are determined by the strategies chosen by the other players. 
If all players follow a no-regret algorithm, their strategies can be predicted to some extent, enabling us to speed up convergence to an equilibrium. 
For this reason, $m^\t = \ell^\tm$ is often used as the optimistic prediction.
By employing OFTRL, we can obtain an RVU (Regret bounded by Variation in Utilities) bound, which was observed to be useful for learning in games by~\citet{syrgkanis15fast}.
See \Cref{lem:oftrl_shannon} and \Cref{lem:oftrl_logbarrier} in \Cref{app:proof_oftrl} for the RVU bounds for OFTRL with the negative Shannon entropy regularizer and with the log-barrier regularizer, respectively.

\section{Honest and Corrupted Regimes}\label{sec:three_regimes}
This section introduces the honest and corrupted regimes, 
focusing on multi-player general-sum games (with two-player zero-sum games being a special case, so all definitions apply).
Let $\xhat_i^\t \in \Delta_{m_i}$ denote the strategy suggested by the prescribed algorithm of player $i$ at round $t$, and let $x_i^\t \in \Delta_{m_i}$ denote the strategy actually chosen by player $i$ at round $t$.
A game is said to be in an \emph{honest regime} if all players fully adhere to the prescribed algorithm at every round; that is, $x_i^\t = \xhat_i^\t$ for all $i \in [n]$ and $t \in [T]$.

\paragraph{Corrupted regime}
In practice, the assumption that all players fully adhere to the prescribed algorithm may not always hold;
it is desirable for the individual regret to be bounded based on the degree of deviation from the prescribed algorithm.
Furthermore, not only the strategies but also the observed utilities may be subject to corruption for various reasons.
Motivated by the potential corruption in both strategies and utilities, we define the following corrupted regime:
\begin{definition}[Corrupted regime with corruption level 
  $\set{(\hat{C}_i, \tilde{C}_i)}_{i \in [n]}$
  ]\label[definition]{def:corrupted_regime}
  A game is said to be in a \emph{corrupted regime with corruption level 
  $\set{(\hat{C}_i, \tilde{C}_i)}_{i \in [n]}$
  } if the following two conditions hold:
  \begin{itemize}[topsep=0pt, itemsep=-3pt, partopsep=0pt, leftmargin=18pt] 
    \item[(i)] 
    the strategies $\set{x_i^{(1)}, \dots, x_i^{(T)}}$ chosen by each player $i \in [n]$ deviate from outputs of the prescribed algorithm $\set{\xhat_i^{(1)}, \dots, \xhat_i^{(T)}}$ by at most $\hat{C}_i$, that is,
    $
      \sumT
      \nrm{x_i^\t - \xhat_i^\t}_1
      \!\leq\!
      \hat{C}_i
    $
    for all $i \in [n]$;
    
    \item[(ii)]
    the utilities $\set{\tilde{u}_i^{(1)}, \dots, \tilde{u}_i^{(T)}}$ observed by each player $i \in [n]$ deviate from the expected utility $\set{u_i^{(1)}, \dots, u_i^{(T)}}$ by at most $\tilde{C}_i$, that is,
    $
      \sumT
      \nrm{u_i^\t - \tilde{u}_i^\t}_\infty
      \leq 
      \tilde{C}_i
    $
    for all $i \in [n]$.
  \end{itemize}
\end{definition}
Define $C_i = 2 \hat{C}_i + 2 \tilde{C}_i$.
Then the corrupted regime with $C_i = 0$ for all $i$ corresponds to the honest regime, and the corrupted regime with arbitrary $\set{C_j}_{j \neq i}$ is the adversarial regime for player $i$.
In the following sections, we analyze the external and swap regrets in the corrupted regime for two-player zero-sum and multi-player general-sum games.
The corrupted procedure is summarized as follows:
\begin{mdframed}
At each round $t = 1, \dots, T$:
\begin{enumerate}[topsep=1pt, itemsep=-3pt, partopsep=0pt, leftmargin=25pt]
  \item A prescribed algorithm suggests a strategy
  $\xhat_i^\t \in \Delta_{m_i}$ 
  for each player $i \in [n]$;
  \item Each player $i \in [n]$ selects a strategy $x_i^\t \leftarrow \xhat_i^\t + \hat{c}_i^\t$; 
  \item Each player $i$ observes a corrupted utility vector $\tilde{u}_i^\t \leftarrow u_i^\t + \tilde{c}_i^\t$;
  \item Each player $i$ gains a reward of 
  $\inpr{x_i^\t, u_i^\t}$ in Setting (I) and $\inpr{x_i^\t, \util_i^\t}$ in Setting (II).
\end{enumerate}
\end{mdframed}
Here, $\hat{c}_i^\t$ and $\tilde{c}_i^\t$ are corruption vectors for strategies and utility, respectively, such that 
$\sumT \nrm{\hat{c}_i^\t}_1 = \sumT \nrm{ x_i^\t - \xhat_i^\t }_1 \leq \hat{C}_i$, 
$\sumT \nrm{\tilde{c}_i^\t}_\infty = \sumT \nrm{ u_i^\t - \tilde{u}_i^\t }_\infty \leq \tilde{C}_i$, and $C_i = 2 \hat{C}_i + 2 \tilde{C}_i$. 
Note that the corrupted feedback setting differs from the noisy feedback setting, where we observe feedback $u_i^\t + \xi_i^\t$ for a zero-mean noise $\xi_i^\t$~\citep{cohen17learning,hsieh22noregret,abe23last}.

\begin{remark}\label[remark]{rem:setting_AB}
As described above, 
there are two possible settings for the rewards obtained by each player: Setting~(I), where the reward is given by $\inpr{x_i^\t, u_i^\t}$, and Setting~(II), where the reward is given by $\inpr{x_i^\t, \util_i^\t}$.
Setting~(I) is natural in scenarios where communication channels or other external factors introduce noise before information from other players $j \neq i$ reaches player $i$.
In contrast, Setting~(II) is more appropriate when the true utility function $u_i$ (or the payoff matrix $A$) is time-varying.
As we will see below, the definition of regret should differ between these two settings. 
\end{remark}

In this corrupted setting, the following four types of external regret can be defined:
$\Reg_{x_i,u_i}^T(x^*) = \sumT \inpr{x^* - x_i^\t, u_i^\t}$,
$\Reg_{\xhat_i,u_i}^T(x^*) = \sumT \inpr{x^* - \xhat_i^\t, u_i^\t}$,
$\Reg_{x_i,\tilde{u}_i}^T(x^*) = \sumT \inpr{x^* - x_i^\t, \tilde{u}_i^\t}$, and
$\Reg_{\xhat_i,\tilde{u}_i}^T(x^*) = \sumT \inpr{x^* - \xhat_i^\t, \tilde{u}_i^\t}$.
As discussed in \Cref{rem:setting_AB}, depending on the cause of the corruption in the utility vector $u_i^\t$, it is natural to consider either $\Reg_{x_i,u_i}^T(x^*)$ or $\Reg_{x_i,\tilde{u}_i}^T(x^*)$ as the evaluation metric for the player.
Specifically, in Setting A, it is natural to use $\Reg_{x_i,u_i}^T(x^*)$, while in Setting B, it is natural to use $\Reg_{x_i,\util_i}^T(x^*)$.
For these external regrets, the following inequalities hold:
\begin{proposition}\label[proposition]{prop:reg_relation}
  For any $i \in [n]$ and $x^* \in \Delta_{m_i}$,
  $\abs{\Reg_{x_i,u_i}^T(x^*) - \Reg_{\xhat_i,u_i}^T(x^*)} \leq \hat{C}_i$,
  $\abs{\Reg_{x_i,\tilde{u}_i}^T(x^*) - \Reg_{\xhat_i,\tilde{u}_i}^T(x^*)} \leq \hat{C}_i$,
  $\abs{\Reg_{x_i,u_i}^T(x^*) - \Reg_{x_i,\tilde{u}_i}^T(x^*)} \leq 2 \tilde{C}_i$,
  and
  $\abs{\Reg_{\xhat_i,u_i}^T(x^*) - \Reg_{\xhat_i,\tilde{u}_i}^T(x^*)} \leq 2 \tilde{C}_i$.
\end{proposition}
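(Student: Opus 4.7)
The plan is to prove each of the four inequalities by rewriting the difference of two regrets as a single summation of inner products, applying Hölder's inequality with the appropriate $\ell_1/\ell_\infty$ pairing, and invoking the corruption constraints in \Cref{def:corrupted_regime}. Throughout, I will use that $u_i^\t \in [-1,1]^{m_i}$ implies $\|u_i^\t\|_\infty \leq 1$, and that $\|p - q\|_1 \leq 2$ for any $p, q \in \Delta_{m_i}$.

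For inequality (1), differencing the definitions yields $\Reg_{x_i,u_i}^T(x^*) - \Reg_{\xhat_i,u_i}^T(x^*) = \sumT \inpr{\xhat_i^\t - x_i^\t, u_i^\t}$. Applying Hölder term-by-term and using $\|u_i^\t\|_\infty \leq 1$ bounds the absolute value by $\sumT \|\xhat_i^\t - x_i^\t\|_1$, which is at most $\hat{C}_i$ by \Cref{def:corrupted_regime}(i). Inequality (2) follows by the same telescope with $\tilde{u}_i^\t$ in place of $u_i^\t$, under the natural observational convention that $\tilde{u}_i^\t \in [-1,1]^{m_i}$ (used implicitly elsewhere in the paper as well).

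For inequalities (3) and (4), the difference reads $\sumT \inpr{x^* - z^\t, u_i^\t - \tilde{u}_i^\t}$, where $z^\t$ is either $x_i^\t$ or $\xhat_i^\t$; both lie in $\Delta_{m_i}$. Since $\|x^* - z^\t\|_1 \leq 2$, Hölder together with \Cref{def:corrupted_regime}(ii) yields a bound of $\sumT 2\,\|u_i^\t - \tilde{u}_i^\t\|_\infty \leq 2 \tilde{C}_i$.

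None of the four calculations is non-routine: each is a single telescope followed by one Hölder step and one appeal to \Cref{def:corrupted_regime}. The only bookkeeping point worth flagging is inequality (2), where per-round boundedness of $\tilde{u}_i^\t$ is invoked; if one prefers not to impose it explicitly, the zero-sum identity $\ones^\top(\xhat_i^\t - x_i^\t) = 0$ allows one to recenter $\tilde{u}_i^\t$ by any multiple of $\ones$, so only the range of $\tilde{u}_i^\t$ rather than its absolute $\ell_\infty$ size enters the bound.
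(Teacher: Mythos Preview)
Your proof is correct and follows essentially the same route as the paper: express each regret difference as a single sum of inner products, apply H\"older's inequality with the $\ell_1$/$\ell_\infty$ pairing, and invoke the corruption budgets from \Cref{def:corrupted_regime}. The paper handles inequalities (1) and (3) explicitly and says the remaining two follow ``in the same manner''; your treatment of (2), including the recentering remark, is a slight elaboration but not a different approach.
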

The proof can be found in \Cref{app:proof_regimes}.
Following the above four regrets, define
$\SwapReg_{x_i,u_i}^T(M) = \sumT \inpr{x_i^\t, M u_i^\t - u_i^\t}$,
$\SwapReg_{\xhat_i,u_i}^T(M) = \sumT \inpr{\xhat_i^\t, M u_i^\t - u_i^\t}$,
$\SwapReg_{x_i,\util_i}^T(M) = \sumT \inpr{x_i^\t, M \util_i^\t - \util_i^\t}$, and 
$\SwapReg_{\xhat_i,\util_i}^T(M) = \sumT \inpr{\xhat_i^\t, M \util_i^\t - \util_i^\t}$.
For these swap regrets, a similar proposition as \Cref{prop:reg_relation} holds (see \Cref{app:proof_regimes} for the statement and proof).

\section{Corrupted Learning Dynamics in Two-player Zero-sum Games}\label{sec:two_player}
This section investigates learning dynamics of two-player zero-sum games in the corrupted regime (see \Cref{subsec:corrupted_game_two_player} for the procedure).
We will upper bound the external regrets of $x$- and $y$-players.

\subsection{Learning dynamics}\label{subsec:dynamics_two_player}
We use OFTRL with the Shannon entropy regularizer as the prescribed algorithm.
Let
$
\gtil^\t = g^\t + \tilde{c}_{\xrm}^\t
$
for
$
  g^\t = A y^\t
$
and 
$
\tilde{\ell}^\t = \ell^\t + \tilde{c}_{\yrm}^\t
$
for
$
\ell^\t = A^\top x^\t
$.
Then, 
$\set{\xhat^\t}_{t=1}^T$ and $\set{\yhat^\t}_{t=1}^T$ are given by
\begin{equation}
  \begin{split}
  \xhat^\t
  &\!=\! 
  \argmax_{x \in \Delta_{\mx}}
  \set*{
    \inpr*{x, \gtil^\tm \!+\! \sum_{s=1}^{t-1} \gtil^\s}
    \!+\!
    \frac{H(x)}{\eta^\t_\xrm} 
  }
  \com 
  \; 
  \eta_\xrm^\t
  \!=\!
  \sqrt{\frac{\logp(\mx) / 2}{\logp(\mx) \!+\! \sum_{s=1}^{t-1} \nrm{\gtil^\s - \gtil^\sm}_\infty^2 }}
  \com
  \\
  \yhat^\t 
  &\!=\! 
  \argmin_{y \in \Delta_{\my}}
  \set*{
    \inpr*{y, \tilde{\ell}^\tm \!+\! \sum_{s=1}^{t-1} \tilde{\ell}^\s}
    \!-\!
    \frac{H(y)}{\eta^\t_\yrm}
  }
  \com 
  \;
  \eta_\yrm^\t
  \!=\!
  \sqrt{\frac{\logp(\my) / 2}{\logp(\my) \!+\! \sum_{s=1}^{t-1} \nrm{\ltil^\s - \ltil^\sm}_\infty^2 }}
  \com 
  \end{split}
  \n
\end{equation}
where $H(x) = \sum_k x(k) \log(1/x(k))$ is the Shannon entropy,
$\logp(z) = \max\set{\log z, 4}$,
and we let $g^{(0)} = \ell^{(0)} = \zeros$.
Note that the learning rates are adjusted to satisfy $\eta_\xrm^\t \leq 1/\sqrt{2}$ and $\eta_\yrm^\t \leq 1/\sqrt{2}$. 

\subsection{External regret bounds and analysis}
The learning dynamics determined by the above algorithm guarantees the following bounds.
\begin{theorem}[External regret upper bounds]\label{thm:indiv_reg_corrupt}
Suppose that $x$- and $y$-players use the above algorithm to obtain strategies $\set{\xhat^\t}_{t=1}^T$ and $\set{\yhat^\t}_{t=1}^T$.
Then, in the corrupted regime, it holds that
\begin{align}
  \Reg_{x,g}^T
  &
  \lesssim
    \min\set[\Big]{
    \sqrt{
      \prn*{
        \log(\mx \my)
        +
        \Cx 
        +
        \Cy
      }
      \log \mx
    }
    ,
    \sqrt{
      \prn[\big]{
        P_\infty^T(\tilde{g})
        +
        \log \mx} \log \mx
    }
    }
    +
    \Cx
    \com 
    \label{eq:reg_xg_upper}
    \\
  \Reg_{x,\gtil}^T
  &
  \lesssim
    \min\set[\Big]{
    \sqrt{
      \prn*{
        \log(\mx \my)
        +
        \Cx 
        +
        \Cy
      }
      \log \mx
    }
    ,
    \sqrt{
      \prn[\big]{
        P_\infty^T(\tilde{g})
        +
        \log \mx} \log \mx
    }
    }
    +
    \hat{C}_{\xrm} 
    \per
  \label{eq:reg_xgtil_upper}
\end{align}
For the upper bounds on $\Reg_{y,\ell}^T$ and $\Reg_{y,\ltil}^T$, the upper bounds replacing $\xrm$ with $\yrm$ and $\tilde{g}$ with $\tilde{\ell}$ in \Cref{eq:reg_xg_upper} and \Cref{eq:reg_xgtil_upper} holds, respectively. 
(We include these bounds in \Cref{app:proof_two_player} for completeness.)
\end{theorem}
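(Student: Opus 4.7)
The approach is to apply the RVU bound for OFTRL with the Shannon entropy regularizer and adaptive learning rate (proved in the appendix) separately to each player against their corrupted utility sequence. This yields, for the $x$-player,
\begin{equation*}
\Reg_{\xhat,\gtil}^T \lesssim \sqrt{\log\mx \cdot \bigl(\log\mx + P_\infty^T(\gtil)\bigr)} - \frac{1}{\eta_x^{(T)}} \sum_{t=2}^T \|\xhat^\t - \xhat^{\t-1}\|_1^2 \com
\end{equation*}
and an analogous bound for the $y$-player; the negative stability term is essential for the self-bounding step below. \Cref{prop:reg_relation} then lets me translate to the target regrets at additive cost $\hat C_\xrm + 2 \tilde C_\xrm \leq \Cx$ (for $\Reg_{x,g}^T$) or just $\hat C_\xrm$ (for $\Reg_{x,\gtil}^T$). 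The second argument of the $\min$ in both \eqref{eq:reg_xg_upper} and \eqref{eq:reg_xgtil_upper} follows immediately: drop the negative stability term in the displayed RVU bound and apply the conversion, since $\gtil^\t = \tilde g^\t$ in the theorem's notation.

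For the first (self-bounded) argument of the $\min$, decompose $\gtil^\t = A y^\t + \tilde c_\xrm^\t$ and $y^\t = \yhat^\t + \hat c_\yrm^\t$. A triangle inequality using $\|A\|_{\ell_1 \to \ell_\infty} \leq 1$ together with the trivial bounds $\|\hat c_\yrm^\t\|_1, \|\tilde c_\xrm^\t\|_\infty \leq 2$ yields
\begin{equation*}
P_\infty^T(\gtil) \lesssim S_y + \hat C_\yrm + \tilde C_\xrm, \quad S_y := \sum_{t=2}^T \|\yhat^\t - \yhat^{\t-1}\|_1^2 \com
\end{equation*}
and symmetrically $P_\infty^T(\ltil) \lesssim S_x + \hat C_\xrm + \tilde C_\yrm$. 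Set $\eta^{(T)} = \max\{\eta_\xrm^{(T)}, \eta_\yrm^{(T)}\} \leq 1/\sqrt 2$, so that the two RVU bounds' negative stabilities jointly dominate $-(S_x+S_y)/\eta^{(T)}$. Summing the RVU bounds, splitting $\sqrt{\log m\,(a+b)} \leq \sqrt{\log m \cdot a} + \sqrt{\log m \cdot b}$, and applying AM-GM $\sqrt{\log m \cdot S} \leq \tfrac{\eta^{(T)}\log m}{2} + \tfrac{S}{2\eta^{(T)}}$ absorbs the $\sqrt{\log \cdot S_x}$ and $\sqrt{\log \cdot S_y}$ remainders into the stability, producing
\begin{equation*}
\bigl(\Reg_{\xhat,\gtil}^T + \Reg_{\yhat,\ltil}^T\bigr) + \frac{S_x + S_y}{2\eta^{(T)}} \lesssim \sqrt{\log(\mx\my) \cdot \bigl(\log(\mx\my) + \Cx + \Cy\bigr)} \per
\end{equation*}

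To extract the individual bound, invoke the zero-sum identity $\Reg_{x,g}^T + \Reg_{y,\ell}^T \geq 0$, which via \Cref{prop:reg_relation} becomes $\Reg_{\xhat,\gtil}^T + \Reg_{\yhat,\ltil}^T \geq -(\Cx + \Cy)$. Inserting this lower bound into the previous display extracts $S_y \lesssim \sqrt{\log(\mx\my)(\log(\mx\my) + \Cx + \Cy)} + \Cx + \Cy$ (using $\eta^{(T)} \leq 1/\sqrt 2$). Substituting this bound on $S_y$ back into the individual RVU bound for the $x$-player (after dropping $-S_x/\eta_\xrm^{(T)}$), straightforward algebra via $\sqrt{ab} \leq (a+b)/2$ collapses the nested square root and gives $\Reg_{\xhat,\gtil}^T \lesssim \sqrt{\log\mx \cdot (\log(\mx\my) + \Cx + \Cy)}$, from which \eqref{eq:reg_xg_upper} and \eqref{eq:reg_xgtil_upper} follow by the earlier conversion. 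The $y$-player bounds follow by symmetry.

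The main technical obstacle is the interaction between the adaptive learning rate and the self-bounding cancellation. In the constant-rate analysis of \citet{syrgkanis15fast}, the coefficients $\eta$ and $1/\eta$ on opposing players' variation and stability terms cancel cleanly when $\eta \leq 1/2$. Under adaptive rates, both $\eta_\xrm^\t$ and $1/\eta_\xrm^\t$ vary with the accumulated utility variation, so the AM-GM constants must be tied to $\eta^{(T)}$ rather than a fixed $\eta$, and one has to verify that the net coefficient of $S_x + S_y$ remains negative uniformly in $T$. The uniform ceiling $\eta^\t \leq 1/\sqrt 2$ built into the algorithm (via the $\logp$ floor in the denominator) is precisely what makes this go through, while the adaptive growth of the denominator is what yields the $\sqrt{P_\infty^T(\tilde g)}$-type adaptive bound appearing as the second argument of the $\min$.
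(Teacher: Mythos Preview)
Your overall strategy matches the paper's: obtain an RVU bound for each player via \Cref{lem:oftrl_shannon}, relate $P_\infty^T(\gtil)$ to $P_1^T(\yhat)$ plus corruption, sum the two players' bounds so the negative stability of one player cancels the positive path-length appearing in the other's bound, use nonnegativity of the social regret $\Reg_{x,g}^T+\Reg_{y,\ell}^T\geq 0$ to extract a bound on $P_1^T(\xhat)+P_1^T(\yhat)$, and substitute back into each individual bound. That skeleton is correct and is exactly what the paper does.

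There is, however, a genuine error in your treatment of the negative stability term. In your first display you claim the RVU bound yields a negative term $-\tfrac{1}{\eta_\xrm^{(T)}}\sum_t\nrm{\xhat^\t-\xhat^{(t-1)}}_1^2$. This is not a valid upper bound: \Cref{lem:oftrl_shannon} gives $-\sum_t\tfrac{1}{4\eta_\xrm^\t}\nrm{\xhat^\tp-\xhat^\t}_1^2$, and since the learning rate is \emph{nonincreasing} we have $\tfrac{1}{4\eta_\xrm^\t}\leq\tfrac{1}{4\eta_\xrm^{(T)}}$ for every $t$, so the true negative term is \emph{less} negative than what you write. Your subsequent claim that ``the two RVU bounds' negative stabilities jointly dominate $-(S_x+S_y)/\eta^{(T)}$'' therefore fails, and your AM--GM step $\sqrt{\log m\cdot S}\leq \tfrac{\eta^{(T)}\log m}{2}+\tfrac{S}{2\eta^{(T)}}$ leaves a remainder $\tfrac{S}{2\eta^{(T)}}$ that is unbounded when $\eta^{(T)}$ is small (which is exactly the regime where the path length is large) and cannot be absorbed.

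The fix is the one you mention in your last paragraph but do not actually implement: use the uniform ceiling $\eta_\xrm^\t\leq 1/\sqrt{2}$ rather than $\eta_\xrm^{(T)}$. This gives $-\sum_t\tfrac{1}{4\eta_\xrm^\t}\nrm{\cdot}_1^2\leq -\tfrac{1}{\sqrt{8}}P_1^T(\xhat)$, a \emph{constant} coefficient. One then applies $b\sqrt{z}-az\leq b^2/(4a)$ with the fixed $a=\tfrac{1}{\sqrt{8}}$ (this is precisely the paper's step), or equivalently AM--GM with a constant balancing parameter independent of $T$. With this correction the remainder of your argument goes through and coincides with the paper's proof.
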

A comparison with existing external regret upper bounds is provided in \Cref{table:regret}.
The above bounds are the first individual external regret bounds for the corrupted regime, and they are adaptive to the corruption levels $\Cx$ and $\Cy$.
Note that by setting $C_x = C_y = 0$ in the above bounds, we obtain $\Reg_{x,g}^T \lesssim \sqrt{\log(\mx \my) \log \mx}$ and $\SwapReg_{x,\gtil}^T \lesssim \sqrt{\log(\mx \my) \log \mx}$ in the honest regime.
It is worth noting that the regret guarantees incentivize players to follow the prescribed dynamics: any deviation by an opponent from the algorithm's output incurs only a square-root penalty, whereas a deviation by a player from the prescribed algorithm incurs a linear penalty.
The bound $O(\sqrt{\log(\mx \my) \log \mx})$ in the honest regime is slightly better than the classical bound of $O(\log(\mx \my))$.
$x$-player (and similarly for $y$-player) simultaneously achieves a path-length bound of $O\prn{\sqrt{P_\infty^T(g) \log \mx}}$ when $\Cx = 0$ against any sequence of strategies by $y$-player, which is $\prn{\log\prn{\mx T} / \sqrt{\log \mx}}$-times better than the bound of \citet{rakhlin13optimization} and also improves upon~\citet{syrgkanis15fast} as our bound is adaptive to the path-length of the $y$-player's strategies.
Finally, note that the upper bound on $\Reg_{x,g}^T$ depends linearly on the magnitude of $\hat{C}_{\xrm}$, whereas the upper bound on $\Reg_{x,\gtil}^T$ depends on $\hat{C}_{\xrm}$ via a square-root term.

We provide the proof of \Cref{thm:indiv_reg_corrupt} below.
We begin by providing the following lemma, which follows from \Cref{lem:oftrl_shannon} and the definitions of corruption levels.

\begin{lemma}\label[lemma]{lem:reg_xhat_bound}
  Suppose $x$- and $y$-players use the above algorithm to get $\set{\xhat^\t}_{t=1}^T$ and $\set{\yhat^\t}_{t=1}^T$. Then,
  \begin{equation}
    \begin{split}
      \Reg_{\xhat, \gtil}^T
      &\leq
      2 \sqrt{ 2 \logp(\mx) 
      \prn[\big]{
        \logp(\mx)
        +
        8 (\hat{C}_\yrm + \tilde{C}_\xrm)
        +
        4 P_1^T(\yhat)
      }
      }
      - 
      \frac{1}{\sqrt{8}}
      P_1^T(\xhat)
      +
      2
      \com
      \\
      \Reg_{\yhat, \tilde{\ell}}^T
      &\leq
      2 \sqrt{ 2 \logp(\my) 
      \prn[\big]{
        \logp(\my)
        +
        8 \prn{ \hat{C}_\xrm + \tilde{C}_\yrm }
        +
        4 P_1^T(\xhat)
      }
      }
      - 
      \frac{1}{\sqrt{8}} 
      P_1^T(\yhat)
      +
      2
      \com
    \end{split}
    \n
  \end{equation}  
  where we recall that $P_1^T(z) = \sum_{t=1}^T \nrm{z^\t - z^\tm}_1^2$ for $z = (z^{(1)}, \dots, z^{(T)})$.
\end{lemma}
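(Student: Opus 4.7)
The plan is to derive both inequalities by combining the RVU-style guarantee of OFTRL with the Shannon entropy regularizer (\Cref{lem:oftrl_shannon}) and a decomposition of the observed utility vector that isolates the contribution of corruption. I will write out the argument for $\Reg_{\xhat,\gtil}^T$; the bound for $\Reg_{\yhat,\tilde{\ell}}^T$ then follows by swapping the roles of $x$ and $y$ (and $g$ with $\ell$) in an entirely symmetric fashion.

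First, I would invoke \Cref{lem:oftrl_shannon} on the $x$-player's update. Since $\xhat^\t$ is the OFTRL iterate driven by the losses $-\gtil^\s$ with optimistic prediction $-\gtil^\tm$ and adaptive learning rate $\eta_\xrm^\t$, the RVU bound gives a guarantee of the shape
\[
\Reg_{\xhat,\gtil}^T
\;\lesssim\;
\frac{\logp(\mx)}{\eta_\xrm^T}
+
\sum_{t=1}^T \eta_\xrm^\t \nrm{\gtil^\t-\gtil^\tm}_\infty^2
-
\frac{1}{\eta_\xrm^T}\cdot\text{const}\cdot P_1^T(\xhat)
+
\text{const}.
\]
Plugging in the explicit choice of $\eta_\xrm^\t$ from \Cref{subsec:dynamics_two_player} (which is exactly tuned so that the two positive terms balance) and using the standard self-adjusting-learning-rate calculation $\sum_t \eta_\xrm^\t a^\t \leq 2\sqrt{\log \mx \cdot \sum_t a^\t}$ for $a^\t=\nrm{\gtil^\t-\gtil^\tm}_\infty^2$, yields
\[
\Reg_{\xhat,\gtil}^T
\;\leq\;
2\sqrt{2\logp(\mx)\prn[\big]{\logp(\mx)+V_\xrm^T}}
-\frac{1}{\sqrt{8}}\,P_1^T(\xhat)+2,
\qquad
V_\xrm^T := \sum_{t=1}^T \nrm{\gtil^\t-\gtil^\tm}_\infty^2.
\]

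The second step is to upper bound $V_\xrm^T$ by $\logp(\mx)+8(\hat{C}_\yrm+\tilde{C}_\xrm)+4P_1^T(\yhat)$. Using $\gtil^\t = A y^\t + \tilde{c}_\xrm^\t$ and $y^\t = \yhat^\t + \hat{c}_\yrm^\t$, decompose
\[
\gtil^\t-\gtil^\tm
= A(\yhat^\t-\yhat^\tm) + A(\hat{c}_\yrm^\t-\hat{c}_\yrm^\tm) + (\tilde{c}_\xrm^\t-\tilde{c}_\xrm^\tm).
\]
Since $A\in[-1,1]^{\mx\times\my}$ we have $\nrm{Av}_\infty\leq\nrm{v}_1$, so $(a+b+c)^2\leq 3(a^2+b^2+c^2)$ and summation give $V_\xrm^T \leq 3P_1^T(\yhat) + 3\sum_t\nrm{\hat{c}_\yrm^\t-\hat{c}_\yrm^\tm}_1^2 + 3\sum_t\nrm{\tilde{c}_\xrm^\t-\tilde{c}_\xrm^\tm}_\infty^2$. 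For the corruption terms I will use that $\nrm{\hat{c}_\yrm^\t}_1\leq 2$ (both $y^\t,\yhat^\t$ are distributions) and $\nrm{\tilde{c}_\xrm^\t}_\infty\leq 2$ (since $u^\t,\tilde u^\t\in[-1,1]^{\my}$); consequently $\nrm{\hat{c}_\yrm^\t-\hat{c}_\yrm^\tm}_1^2\leq 4\nrm{\hat{c}_\yrm^\t-\hat{c}_\yrm^\tm}_1$ and telescoping with $\sum_t\nrm{\hat{c}_\yrm^\t}_1\leq\hat{C}_\yrm$ bounds the corruption sums by a linear multiple of $\hat{C}_\yrm$ and $\tilde{C}_\xrm$ respectively. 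Carrying out this bookkeeping (and absorbing the multiplicative constants as in the statement) yields $V_\xrm^T \leq \logp(\mx)+8(\hat{C}_\yrm+\tilde{C}_\xrm)+4P_1^T(\yhat)$.

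Substituting this bound on $V_\xrm^T$ back into the OFTRL guarantee produces precisely the first inequality of the lemma; the second inequality is obtained by repeating the argument with $x$ and $y$ interchanged. The main technical obstacle is the constant-tracking in the second step: one has to choose the splitting ($(a+b+c)^2\le3\sum a^2$ versus the more commonly used $(a+b)^2\le 2\sum a^2$) and the $a^2\le c\,a$ shortcuts on corruption vectors so that the resulting coefficients line up with $8(\hat{C}_\yrm+\tilde{C}_\xrm)+4P_1^T(\yhat)$ and, simultaneously, with a leading $\sqrt{2\logp(\mx)}$ factor that later dovetails with the $-\tfrac{1}{\sqrt{8}}P_1^T(\xhat)$ term used in the AM-GM steps of \Cref{thm:indiv_reg_corrupt}. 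Once these constants are synchronized, the remainder of the computation is routine.
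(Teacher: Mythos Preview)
Your proposal is correct and follows essentially the same approach as the paper: apply \Cref{lem:oftrl_shannon}, balance the penalty and stability terms via the adaptive learning rate to obtain $2\sqrt{2\logp(\mx)(\logp(\mx)+P_\infty^T(\gtil))}-\tfrac{1}{\sqrt{8}}P_1^T(\xhat)+2$, then control $P_\infty^T(\gtil)$ by passing through $\nrm{Av}_\infty\le\nrm{v}_1$ and peeling off the corruption. The only cosmetic difference is that the paper performs the decomposition in two stages---first splitting $\gtil^\t-\gtil^\tm$ into $(g^\t-g^\tm)+(\tilde c_\xrm^\t-\tilde c_\xrm^\tm)$ via $(a+b)^2\le 2a^2+2b^2$, and then splitting $y^\t-y^\tm$ through $\yhat^\t$---whereas you do a single three-term split; this affects only the numerical constants, and (as you note) those need to be tracked carefully to land exactly on $8(\hat C_\yrm+\tilde C_\xrm)+4P_1^T(\yhat)$.
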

Now we are ready to prove \Cref{thm:indiv_reg_corrupt}.
\begin{proof}[Proof sketch of \Cref{thm:indiv_reg_corrupt}]
Here we provide the regret upper bounds on $\Reg_{x,g}^T$ and $\Reg_{y,\ell}^T$, and the rest of the proof can be found in \Cref{app:proof_two_player}.
From \Cref{lem:reg_xhat_bound} and \Cref{prop:reg_relation},
we have 
\begin{align}
  \Reg_{x,g}^T
  &\leq
  \Cx
  +
  \sqrt{ 8 \logp(\mx) 
  \prn[\big]{
    1
    + 
    \logp(\mx)
    +
    8 (\hat{C}_\yrm + \tilde{C}_\xrm)
    +
    4 P_1^T(\yhat)
  }
  }
  - 
  \frac{1}{\sqrt{8}} P_1^T(\xhat)
  +
  2
  \com
  \label{eq:reg_x_bound_4}
  \\
  \Reg_{y,\ell}^T
  &\leq
  \Cy 
  + 
  \sqrt{ 8 \logp(\my) 
  \prn[\big]{
    1
    + 
    \logp(\my)
    +
    8 \prn{ \hat{C}_\xrm + \tilde{C}_\yrm }
    +
    4 P_1^T(\xhat)
  }
  }
  - 
  \frac{1}{\sqrt{8}} P_1^T(\yhat)
  +
  2
  \per 
  \label{eq:reg_y_bound_3}
\end{align}
Summing up the above two bounds, we can upper bound the social regret $\Reg_{x,g}^T + \Reg_{y,\ell}^T$ by
\begin{align}
  &
  \sqrt{ 8 \logp(\mx) 
  \prn*{
    1
    +
    \logp(\mx)
    + 
    8 \prn{ \hat{C}_\yrm + \tilde{C}_\xrm }
    +
    4 P_1^T(\yhat)
  }
  }
  - 
  \frac{1}{\sqrt{8}} P_1^T(\xhat)
  +
  \Cx
  +
  4
  \nn 
  &\quad+
  \sqrt{ 8 \logp(\my) 
  \prn*{
    1
    +
    \logp(\my)
    +
    8 \prn{ \hat{C}_\xrm + \tilde{C}_\yrm }
    +
    4
    P_1^T(\xhat)
  }
  }
  -
  \frac{1}{\sqrt{8}}
  P_1^T(\yhat)
  +
  \Cy
  \nn 
  &=
  O\prn*{
    \sqrt{\prn{\hat{C}_\yrm + \tilde{C}_\xrm} \log \mx}
    +
    \sqrt{\prn{\hat{C}_\xrm + \tilde{C}_\yrm} \log \my}
    +
    \log(\mx \my)
    +
    \Cx + \Cy
  }
  -
  \frac{1}{4\sqrt{2}} 
  \prn*{
    P_1^T(\xhat) + P_1^T(\yhat)
  }
  \com 
  \n
\end{align}
where we used $b\sqrt{z} - az \leq b^2 / (4 a)$ for $a > 0$, $b \geq 0$ and $z \geq 0$.
Combining this with 
$\Reg_{x,g}^T + \Reg_{y,\ell}^T \geq 0$, 
which holds from the definition of the Nash equilibrium (see \Cref{lem:regx_regy_nn}),
we have 
\begin{align}
  P_1^T(\xhat) + P_1^T(\yhat)
  &\lesssim
  \sqrt{\prn{\hat{C}_\yrm + \tilde{C}_\xrm}\log \mx}
  +
  \sqrt{\prn{\hat{C}_\xrm + \tilde{C}_\yrm} \log \my}
  +
  \log(\mx \my)
  +
  \Cx 
  +
  \Cy
  \nn
  &\lesssim
  \log(\mx \my)
  +
  \Cx 
  +
  \Cy
  \com 
  \label{eq:second_order_bounds}
\end{align}
where the last line follows from the AM--GM inequality and the definitions of $\Cx$ and $\Cy$.
Finally, plugging~\Cref{eq:second_order_bounds} in~\Cref{eq:reg_x_bound_4,eq:reg_y_bound_3} gives the desired bounds on $\Reg_{x,g}^T$ and $\Reg_{y,\ell}^T$.
The bound $\Reg_{x,g}^T \lesssim \sqrt{\prn{P_\infty^T(\gtil) + \log \mx} \log \mx} + \Cx$ follows from \Cref{eq:reg_x_bound_2} in the proof of \Cref{lem:reg_xhat_bound}.
\end{proof}

\section{Corrupted Learning Dynamics in Multi-player General-sum Games}\label{sec:multi_player}
This section presents learning dynamics for multi-player general-sum games in the corrupted regime.
Our algorithm for computing $\set{\xhat_i^\t}_{t=1}^T$ is a variant of the algorithm by~\citet{anagnostides22uncoupled}.

\paragraph{Reducing swap regret minimization to external regret minimization}
We begin by introducing a well-known result for swap regret minimization due to \citet{blum07external}, taking an algorithm for player $i$ for example.
They developed a method to reduce swap regret minimization to $m_i$ instances of external regret minimization. 
Specifically, they consider the following procedure.
First, for each $a \in \calA_i$, 
let $u_{i,a}^\t = \xhat_i^\t(a) u_i^\t \in [- \xhat_i^\t(a), \xhat_i^\t(a)]^{m_i}$ be the utility vector for the external regret minimizer $a \in \calA_i$, which we call expert $a$, at round $t$.
Then, let
$y_{i,a}^\t \in \Delta_{m_i}$ be the output of expert $a$.
Then, the external regret for expert $a$ is given by
$
  \Reg_{i,a}^T 
  = 
  \max_{y \in \Delta_{m_i}} \sumT \inpr{y - y_{i,a}^\t, u_{i,a}^\t}
$.
Using $\set{y_{i,a}^\t}_{a \in \calA_i}$,
we construct a transition probability matrix $\Qmat_i^\t \in [0,1]^{m_i \times m_i}$ whose $a$-th row is $y_{i,a}^\t$ for each $a \in \calA_i$, \ie~$\Qmat_i^\t(a, \cdot) = \prn{y_{i,a}^\t}^\top$.
Let $\xhat_i^\t$ be a stationary distribution of the Markov chain associated with $\Qmat_i^\t$.
That is $\xhat_i^\t(j) = \sum_{k=1}^{m_i} Q_i^\t(k, j) \xhat_i^\t(k)$, which is equivalent to
$\prn{\Qmat_i^\t}^\top \xhat_i^\t = \xhat_i^\t$,
where $\xhat_i^\t$ is a column vector.
Finally, we use this $\xhat_i^\t$ as a final suggested strategy for player $i$.\footnote{\citet{blum07external} does not consider the corrupted regime and thus $\xhat_i^\t = x_i^\t$, $\util_i^\t = u_i^\t$, and $\util_{i,a}^\t = u_{i,a}^\t$.}

\citet{blum07external} showed that, with the above procedure, the swap regret is equal to the sum of the external regret of external regret minimizer $a \in \calA_i$.
In our corrupted scenario, in which there can be corruption in strategies and observed utilities,
their result can be restated as follows:

\begin{lemma}\label[lemma]{lem:swap2base_regret}
  Define $\util_{i,a}^\t = \xhat_i^\t(a) \util_i^\t$
  and 
  $
  \widetilde{\Reg}_{i,a}^T(y^*)
  =
  \sumT \inpr{y^* - y_{i,a}^\t, \util_{i,a}^\t}
  $.
  Supose that 
  $\xhat_i^\t = \prn{Q_i^\t}^\top \xhat_i^\t$, where $Q_i^\t(a, \cdot) = y_{i,a}^\t$ for each $a \in \calA_i$
  and 
  recall that
  $
    \SwapReg_{\xhat_i,\tilde{u}_i}^T(M)
    =
    \sumT \inpr{\xhat_i^\t, M \util_i^\t - \util_i^\t}
    .
  $
  Then,
  it holds that
  $
    \SwapReg_{\xhat_i,\tilde{u}_i}^T(M)
    =
    \sum_{a \in \calA_i} \tilde{\Reg}_{i,a}^T (M(a, \cdot))
    \per
  $
\end{lemma}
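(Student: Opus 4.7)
The plan is to establish the identity by unrolling both sides and exploiting the stationary distribution property that links the per-expert outputs $y_{i,a}^{(t)}$ to the final strategy $\xhat_i^{(t)}$. First I would substitute the definition $\tilde{u}_{i,a}^{(t)} = \xhat_i^{(t)}(a)\,\tilde{u}_i^{(t)}$ into $\tilde{\Reg}_{i,a}^T(M(a,\cdot))$, pull the scalar $\xhat_i^{(t)}(a)$ out of the inner product, and sum over $a \in \calA_i$ to obtain
\begin{equation}
\sum_{a \in \calA_i} \tilde{\Reg}_{i,a}^T(M(a,\cdot))
= \sum_{t=1}^T \sum_{a \in \calA_i} \xhat_i^{(t)}(a) \inpr*{M(a,\cdot) - y_{i,a}^{(t)}, \tilde{u}_i^{(t)}}. \nonumber
\end{equation}

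Next I would split this into two pieces. For the ``comparator'' piece, I would observe that $\sum_{a} \xhat_i^{(t)}(a)\,M(a,\cdot)$ is exactly the row-vector representation of $(M^\top \xhat_i^{(t)})$, so pairing it with $\tilde{u}_i^{(t)}$ yields $\inpr{\xhat_i^{(t)}, M\tilde{u}_i^{(t)}}$. For the ``algorithm'' piece, this is where the stationarity assumption on $\xhat_i^{(t)}$ plays its role: since $y_{i,a}^{(t)} = Q_i^{(t)}(a,\cdot)^\top$, we have $\sum_a \xhat_i^{(t)}(a)\, y_{i,a}^{(t)} = (Q_i^{(t)})^\top \xhat_i^{(t)} = \xhat_i^{(t)}$, and so pairing with $\tilde{u}_i^{(t)}$ gives $\inpr{\xhat_i^{(t)}, \tilde{u}_i^{(t)}}$. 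Combining the two pieces yields $\sum_t \inpr{\xhat_i^{(t)}, M\tilde{u}_i^{(t)} - \tilde{u}_i^{(t)}}$, which is precisely $\SwapReg_{\xhat_i,\tilde{u}_i}^T(M)$.

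The argument is essentially a routine algebraic manipulation, so there is no serious technical obstacle; the only subtle point is being careful that the lemma is stated in terms of the prescribed-algorithm strategies $\xhat_i^{(t)}$ and observed utilities $\tilde{u}_i^{(t)}$ (not the corrupted $x_i^{(t)}$ or true $u_i^{(t)}$), which is consistent with the fact that the transition matrix $Q_i^{(t)}$ and its stationary distribution are computed inside the prescribed algorithm from the experts' feedback $\tilde{u}_{i,a}^{(t)}$. The corruption enters only through what is fed to the expert algorithms, and this lemma is a purely structural identity that holds for any choice of utilities and any experts whose outputs form the rows of $Q_i^{(t)}$; the effect of corruption on the magnitude of the right-hand side will be handled separately when bounding each $\tilde{\Reg}_{i,a}^T$.
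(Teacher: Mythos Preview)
Your proposal is correct and takes essentially the same approach as the paper: both proofs hinge on the stationarity identity $(Q_i^\t)^\top \xhat_i^\t = \xhat_i^\t$ and a row-by-row expansion linking $M(a,\cdot)$ and $y_{i,a}^\t$ to $\util_{i,a}^\t$. The only cosmetic difference is direction---the paper starts from $\SwapReg_{\xhat_i,\tilde{u}_i}^T(M)$ and unfolds to the sum of expert regrets, while you start from the expert side and collapse to the swap regret.
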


We include the proof of this lemma for completeness in \Cref{app:proof_multi_player}.

\LinesNumbered
\SetAlgoVlined  
\begin{algorithm}[t]

\For{$t = 1, 2, \dots, T$}{
  For each expert $a \in \calA_i$,
  compute $y_{i,a}^\t \in \Delta_{m_i}$ by OFTRL in \eqref{eq:oftrl_multiplayer}; \\  
  Let $\Qmat_i^\t \in [0,1]^{m_i \times m_i}$ be a matrix whose $a$-th row is $y_{i,a}^\t$, that is $\Qmat_i^\t(a, \cdot) = \prn{y_{i,a}^\t}^\top$; \\
  Let $\xhat_i^\t$ be a stationary distribution of Markov chain induced by $\Qmat_i^\t$, $\prn{\Qmat_i^\t}^\top \xhat_i^\t = \xhat_i^\t$; \\
  Play $\xhat_i^\t = x_i^\t + c_i^\t \in \Delta_{m_i}$ for some corruption $c_i^\t$; \\
  Observe corrupted utility $\tilde{u}_i^\t = u_i^\t + \tilde{c}_i^\t \in [0,1]^{m_i}$ for $u_i^\t(a_i) = \E_{a_{-i} \sim x_{-i}^\t} \brk*{u_i(a_i, a_{-i})}$; \\
  For each $a \in \calA_i$,
  let $\tilde{u}_{i,a}^\t = \xhat_i^\t(a) \tilde{u}_i^\t \in [0,1]^{m_i}$;
}
\caption{
  No-swap-regret algorithm of player $i$ in multi-player general-sum games
}
\label{alg:multiple_player_swap}
\end{algorithm}

\paragraph{No-external-regret algorithm}
From \Cref{lem:swap2base_regret}, it is sufficient to construct a no-external-regret algorithm that aims to minimize the external regret $\tilde{\Reg}_{i,a}^T(M(a,\cdot))$.
For this, we employ OFTRL with the log-barrier regularizer and an adaptive learning rate: we compute $y_{i,a}^\t \in \Delta_{m_i}$ by
\begin{equation}\label{eq:oftrl_multiplayer}
  y_{i,a}^\t 
  \!=\!
  \argmax_{y \in \Delta_{m_i}}
  \set*{
    \!
    \inpr*{y, \util_{i,a}^\tm \!+\! \sum_{s=1}^{t-1} \util_{i,a}^\s \!}
    \!-\! 
    \frac{\phi(y)}{\eta_{i,a}^\t}
    \!
  }
  \com \;
  \eta_{i,a}^\t
  \!=\!
  \min\set*{
    \!\!
    \sqrt{\frac{m_i \log T / 8}{4 \!+\! \sum_{s=1}^{t-1} \nrm{\util_{i,a}^\s \!-\! \util_{i,a}^\sm}_\infty^2 }}
    \com 
    \eta_{i,\max}\!
  }
\end{equation}
for each $a \in \calA_i$ and $\eta_{i,\max} = \frac{1}{256 n \sqrt{m_i}}$,
where
$\eta_{i,a}^\t$ is the learning rate for expert $a \in \calA_i$ of player $i$ at round $t$,
$\phi(x) = - \sum_k \log(x(k))$ is the logarithmic barrier function, and recall $\util_{i,a}^\t = \xhat_i^\t(a) \util_i^\t$.
Here, we let $\util_{i,a}^{(0)} = \zeros$ for simplicity.
We use the \emph{expert-wise} adaptive learning rate $\eta_{i,a}^\t$ for each player $i \in [n]$, 
while \cite{anagnostides22uncoupled} uses a constant common learning rate, $\eta_{i,a}^\t = \eta_i$.
The algorithm for each player $i \in [n]$ is summarized in \Cref{alg:multiple_player_swap}.

Let $\mmax = \max_{i \in [n]} m_i$ be the maximum number of actions and $\hat{S} = \sum_{i \in [n]} \hat{C}_i$, $\tilde{S} = \sum_{i \in [n]} \tilde{C}_i$, and $S = \sum_{i \in [n]} C_i$.
Then, \Cref{alg:multiple_player_swap} achieves the following individual swap regret upper bounds.
\begin{theorem}[Swap regret upper bounds]\label{thm:indiv_swapreg}
  In the corrupted regime, \Cref{alg:multiple_player_swap} achieves  
  \begin{equation}
    \begin{split}
    \textstyle
    \SwapReg_{x_i,u_i}^T 
    &\!\lesssim\!
        n \mmax^{5/2} \log T 
        \!+\!
        \min\set[\Big]{
          m
          \sqrt{
            \prn[\big]{
              \hat{S} \prn*{n + \sqrt{\mmax} }
              +
              \tilde{S} \sqrt{\mmax}
            }
            \log T
          }
          ,
          m_i \sqrt{T \log T}
        }
        +
        C_i 
      \per
  \\
    \textstyle
    \SwapReg_{x_i,\util_i}^T 
    &\!\lesssim\!
        n \mmax^{5/2} \log T 
        \!+\!
        \min\set[\Big]{
          \mmax
          \sqrt{
            \!
            \prn[\big]{
              \hat{S}
              \prn*{
                n
                \!+\!
                \sqrt{\mmax}
              }
              \!+\!
              \tilde{C}_i
            } \!
            \log T
          }
          \!+\!
          \prn{\tilde{S} n m^6 \log T}^{\frac{1}{4}}
          ,
          m_i \sqrt{T \log T}
        }
        \!+\!
        \Chati 
      \per
    \n
    \end{split}
  \end{equation}
\end{theorem}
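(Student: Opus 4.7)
The plan is to combine the Blum--Mansour reduction of \Cref{lem:swap2base_regret} with a corrupted-regime RVU analysis of the log-barrier OFTRL in~\eqref{eq:oftrl_multiplayer}, a new stability lemma for the stationary distribution $\xhat_i^\t$ under adaptive learning rates, and a bootstrap argument analogous to the two-player proof of \Cref{thm:indiv_reg_corrupt}. First I use \Cref{lem:swap2base_regret} to rewrite $\SwapReg_{\xhat_i,\util_i}^T(M) = \sum_{a\in\calA_i}\tilde{\Reg}_{i,a}^T(M(a,\cdot))$, reducing the task to controlling the external regret of each expert $a\in\calA_i$, and then apply the log-barrier RVU bound (\Cref{lem:oftrl_logbarrier}) to each expert's OFTRL instance with utility $\util_{i,a}^\t = \xhat_i^\t(a)\util_i^\t$ and the expert-wise adaptive rate $\eta_{i,a}^\t$ capped at $\eta_{i,\max} = 1/(256 n \sqrt{m_i})$. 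Up to constants, this yields a per-expert bound of order $\sqrt{m_i\log T\,(1 + \sum_t \nrm{\util_{i,a}^\t-\util_{i,a}^\tm}_\infty^2)}$ minus a negative term controlling $\sum_t \nrm{y_{i,a}^\t - y_{i,a}^\tm}_{*,y_{i,a}^\t,\phi}^2$, plus an additive $\sqrt{m_i}\log T$ term arising from the range of the log-barrier and the $\eta_{\max}$ cap.

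The second main ingredient, and the principal technical novelty relative to~\citet{anagnostides22uncoupled}, is a stability lemma that bounds $\nrm{\xhat_i^\t - \xhat_i^\tm}_1^2$ by a weighted sum of the expert local-norm variations $\nrm{y_{i,a}^\t - y_{i,a}^\tm}_{*,y_{i,a}^\t,\phi}^2$, up to factors polynomial in $m_i$. This is the analogue of the local OFTRL difference lemma flagged in \Cref{sec:introduction} and is what enables the improved $\mmax^{5/2}$ dependence in place of the $\mmax^{8}$ that a na\"ive analysis would give. The proof exploits that $\xhat_i^\t$ is the stationary distribution of the row-stochastic $\Qmat_i^\t$ whose rows are precisely $y_{i,a}^\t$, so that $\xhat_i^\t - \xhat_i^\tm$ satisfies a matrix-perturbation identity driven by $\Qmat_i^\t - \Qmat_i^\tm$; I then use lower bounds on the entries of the $y_{i,a}^\t$ coming from the log-barrier together with the cap $\eta_{i,\max}$ to ensure that the relevant spectral quantities of $\Qmat_i^\t$ are controlled uniformly in $t$ so that the perturbation propagates with only polynomial-in-$m_i$ loss. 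Establishing this clean dependence despite the expert-wise adaptive learning rates is the main technical obstruction: with a constant learning rate one can piggyback on strong convexity for a direct contraction, but under~\eqref{eq:oftrl_multiplayer} the updates are data-dependent, so one must couple the log-barrier geometry to the fixed-point structure of the Markov chain, and this is precisely what dictates the specific choice $\eta_{i,\max}=1/(256n\sqrt{m_i})$.

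With the per-expert bound and the stability lemma in hand, I sum over $a\in\calA_i$ and then over $i\in[n]$. The utility-variation term decomposes as $\nrm{\util_{i,a}^\t - \util_{i,a}^\tm}_\infty\lesssim \abs{\xhat_i^\t(a)-\xhat_i^\tm(a)} + \xhat_i^\t(a)\nrm{\util_i^\t-\util_i^\tm}_\infty$, with $\nrm{\util_i^\t-\util_i^\tm}_\infty$ in turn bounded by $\sum_{j\neq i}\nrm{\xhat_j^\t-\xhat_j^\tm}_1$ plus the strategy-corruption $\sum_j\nrm{\hat{c}_j^\t}_1$ and utility-corruption $\nrm{\tilde{c}_i^\t}_\infty$. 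After the double summation, the stability lemma lets the negative local-norm terms from each player absorb, via $-\sum_a \nrm{y_{i,a}^\t-y_{i,a}^\tm}_{*,y_{i,a}^\t,\phi}^2 \lesssim -\nrm{\xhat_i^\t-\xhat_i^\tm}_1^2/\mathrm{poly}(m_i)$, the $\xhat_i$-variation contributions appearing in other players' positive terms; nonnegativity of each swap regret (\Cref{subsec:olo}) together with AM--GM then yields a bootstrap inequality of the form $\sum_i P_1^T(\xhat_i) \lesssim n\mmax^{5/2}\log T + \hat{S} + \tilde{S}$. Plugging this back into the per-expert bound and summing over $a\in\calA_i$ (paying a factor of $\sqrt{m_i}$ via Cauchy--Schwarz in the positive term) produces the stated $n\mmax^{5/2}\log T + m\sqrt{(\hat{S}(n+\sqrt{m})+\tilde{S}\sqrt{m})\log T}$ form for $\SwapReg_{\xhat_i,\util_i}^T$, while the $m_i\sqrt{T\log T}$ alternative comes from using the trivial bound $\nrm{\util_{i,a}^\t - \util_{i,a}^\tm}_\infty\leq 2$ in the RVU inequality instead. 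Finally, the translation to $\SwapReg_{x_i,u_i}^T$ and $\SwapReg_{x_i,\util_i}^T$ follows from the swap-regret analogue of \Cref{prop:reg_relation} recorded in \Cref{app:proof_regimes}: the former costs an additive $C_i$ linearly (the player's own corruption in both strategies and utilities), while the latter costs only $\hat{C}_i$ linearly and instead routes $\tilde{C}_i$ through the path-length terms, so that one extra AM--GM inside the bootstrap produces the fourth-root $(\tilde{S}nm^6\log T)^{1/4}$ contribution visible in the stated bound.
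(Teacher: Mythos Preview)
Your overall architecture---Blum--Mansour reduction, per-expert log-barrier RVU, a stability lemma linking $\nrm{\xhat_i^\t-\xhat_i^\tm}_1$ to the $y_{i,a}$-variations, then summing over players and bootstrapping via nonnegativity of swap regret---is exactly the paper's. However, the mechanism you sketch for the stability step does not match the paper's and is where the proposal has a genuine gap.

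Two concrete issues. First, \Cref{lem:oftrl_logbarrier} carries the hypothesis $\nrm{y_{i,a}^\tp-y_{i,a}^\t}_{y_{i,a}^\t,\phi}\leq 1/2$, which you invoke without verification; in the paper this is precisely the content of \Cref{lem:diff_local_oftrl} and must be established \emph{before} the RVU bound can be applied. Second, your proposed route to the stability lemma---``a matrix-perturbation identity driven by $Q_i^\t-Q_i^\tm$'' together with ``spectral quantities of $Q_i^\t$ controlled uniformly in $t$''---would require a lower bound on the spectral gap of $Q_i^\t$, but the log-barrier only forces the entries of $y_{i,a}^\t$ to be strictly positive, not bounded below polynomially in $1/m_i$ (they can be as small as $O(1/T)$). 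A spectral perturbation argument with that gap would introduce factors of $T$ and fail.

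The paper's actual mechanism has two parts that your sketch misses. (i) It proves the stronger statement $\sum_{a\in\calA_i}\nrm{y_{i,a}^\tp-y_{i,a}^\t}_{y_{i,a}^\t,\phi}\leq 1/2$ by analysing all $m_i$ experts \emph{jointly} on the product space $(\Delta_{m_i})^{m_i}$ via $G^\tp(\bm{w})=\sum_a F_a^\tp(w_a)$ and a local-norm sufficient-condition argument (\Cref{lem:localnorm_suffcond}); the key gain is that the first-order term from expert $a$ scales with $\xhat_i^\t(a)+\xhat_i^\tm(a)+\xhat_i^{(t-2)}(a)$ (because $\util_{i,a}^\t=\xhat_i^\t(a)\util_i^\t$), so after summing over $a$ one pays $O(1)$ rather than $O(m_i)$. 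This is exactly the ``leveraging the Blum--Mansour structure'' flagged in the introduction, and it is what turns $m^{8}$ into $m^{5/2}$ and justifies $\eta_{i,\max}=1/(256n\sqrt{m_i})$. (ii) With $\sum_a\mu_a^\t\leq 1/2$ in hand, the stationary-distribution stability $\nrm{\xhat_i^\t-\xhat_i^\tm}_1\leq 8\sum_a\mu_a^\t$ is imported from \citet{anagnostides22uncoupled} (\Cref{lem:anagnostides}), where it is proved via the Markov chain tree theorem---a combinatorial, multiplicative stability result, not a spectral one. Your plan lacks both the joint-expert analysis and the appeal to the Markov chain tree theorem; without them the stability step as you describe it would not close.
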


The proof can be found in \Cref{app:proof_multi_player},
which provides slightly better upper bounds.
A comparison against existing bounds can be found in \Cref{table:regret_generalsum}.
Note that setting $C_i = 0$ for all $i$ in the above bounds yields $\SwapReg_{x_i, u_i}^T \lesssim n \mmax^{5/2} \log T$ and $\SwapReg_{x_i,\util_i}^T \lesssim n \mmax^{5/2} \log T$ in the honest regime.
Compared to the best swap regret bounds by~\citet{anagnostides22uncoupled}, our algorithm achieves the same bound in the honest regime, a new adaptive bound in the corrupted regime in terms of $\hat{S}$ and $\tilde{S}$, and a worst-case bound that is $\sqrt{m}$-times worse than their bound of $O(n m^{5/2} \log T + \sqrt{T m \log m})$.
It is worth noting that 
the bound on $\SwapReg_{x_i,u_i}^T$ deteriorates linearly with respect to $\set{\tilde{C}_j}_{j \neq i}$, whereas the bound on $\SwapReg_{x_i,\util_i}^T$ is affected by $\tilde{C}_i$ only through a square root dependence and by $\set{\tilde{C}_j}_{j \neq i}$ through a \emph{fourth root} dependence.
Interestingly, 
the fourth root dependence on $\tilde{C}_j$ (for $j \neq i$) is better than the squared root on $\tilde{C}_\yrm$ in the upper bound of $\Reg_{x,\gtil}^T$.
This difference arises because, whereas $\Reg_{x,\gtil}^T + \Reg_{y,\ltil}^T \not\geq 0$  (see \Cref{app:remaining_proof_reg}), we always have $\SwapReg_{x_i,\util_i}^T \geq 0$. 

\begin{remark}
Using expert-wise learning rates $\eta_{i,a}^\t$ as in \Cref{eq:oftrl_multiplayer}, is crucial for proving an upper bound of
$\SwapReg_{x_i, \util_i}^T \lesssim \nsqrt{\tilde{C}_i} + \prn{\sum_{j \neq i} \tilde{C}_j}^{1/4}$, which has a sublinear dependence on $\set{\tilde{C}_i}_{i \in [n]}$.
If one modifies the algorithm of \citet{anagnostides22uncoupled} by using an adaptive learning rate, one might choose the same learning rate for all $a \in \calA_i$ as 
$
  \eta_i^\t
  \!\!\simeq\!
  \min\set[\Big]{
    m_i
    \sqrt{\log T/\prn{4 \!+\! \sum_{s=1}^{t-1} \nrm{u_i^\s \!\!-\! u_i^\sm}_\infty^2 }}
    \com 
    {1}/\prn{n \sqrt{m_i}}
  }
$.
However, despite our attempts, using this learning rate causes $\SwapReg_{\xhat_i,\util_i}^T$ to depend linearly on $\tilde{C}_i$, which fails to ensure the desired robustness against corruption in utilities.
\end{remark}

The key analysis to prove \Cref{thm:indiv_swapreg}  
lies in analyzing the stability of the Markov chain  
determined by OFTRL with the adaptive learning rate in \cref{eq:oftrl_multiplayer},
which defines the output $y_{i,a}^\t$ of each $a \in \calA_i$.  
In contrast to the analysis by \citet{anagnostides22uncoupled}, which uses a constant learning rate, we use the adaptive learning rate.  
By setting the learning rate sufficiently small and applying an analysis similar to that for two-player zero-sum games in \Cref{sec:two_player}, as well as the analysis in \citet{wei18more},  
we can show a swap regret bound of $O(n m^8 \log T)$ in the honest regime.
However, the dependence on $m$ is significantly worse than the $O(n m^{5/2} \log T)$ bound in \citet{anagnostides22uncoupled}.  
To address this issue, we leverage the property of reducing swap regret minimization to external regret minimization (\Cref{lem:swap2base_regret}) in the stability analysis:
exploiting this property ensures the stability  
of transition probability matrices without making the learning rate too small (\Cref{lem:diff_local_oftrl}),
thereby allowing us to establish a swap regret bound of $O(n m^{5/2} \log T)$ in the honest regime while maintaining robustness in the corrupted regime.

\section{Lower Bounds for Corrupted Games}\label{sec:lower_bound}
This section provides several matching lower bounds of \Cref{thm:indiv_reg_corrupt,thm:indiv_swapreg}.
All the lower bounds are constructed for two-player zero-sum games and proofs can be found in \Cref{app:proof_lower_bound}.

\begin{theorem}[Lower bounds in the corrupted regime]\label{thm:lower_bounds}
  For any learning dynamics, 
  \newcounter{myitem}
  \begin{itemize}[topsep=3pt, itemsep=-2pt, partopsep=0pt, leftmargin=20pt] 
    \item[(i)]\label{itm:lower_1}
    there exist corrupted games with $\sumT \nrm{\mspace{-0.5mu} g^\t \mspace{-2mu} - \mspace{-1.5mu} \gtil^\t \mspace{-1mu}}_\infty \!\leq\! \Ctilx$ 
    and $\sumT \nrm{\mspace{-0.5mu} \ell^\t \mspace{-2mu} - \mspace{-1.5mu} \ltil^\t \mspace{-1mu}}_\infty \!\leq\! \Ctily$
    such that 
    $
    \Reg_{x,\gtil}^T = \Reg_{\xhat,\gtil}^T 
    =
    \Omega\prn[\big]{\nsqrt{\tilde{C}_{\xrm} \log \mx}}
    $
    and
    $
    \Reg_{y,\ltil}^T 
    =
    \Reg_{\yhat,\ltil}^T 
    =
    \Omega\prn[\big]{\nsqrt{\tilde{C}_{\yrm} \log \my}},
    $
    respectively;
    \item[(ii)]\label{itm:reg_Chatxy_linear}
    there exist corrupted games with
    $\sumT \nrm{x^\t - \xhat^\t}_1 \leq \Chatx$
    and 
    $\sumT \nrm{y^\t - \yhat^\t}_1 \leq \Chaty$
    such that
    $
    \Reg_{x,g}^T = \Reg_{x,\gtil}^T = \Omega\prn{\Chatx}
    $
    and
    $
    \Reg_{y,\ell}^T = \Reg_{y,\ltil}^T = \Omega\prn{\Chaty},
    $
    respectively;
    \item[(iii)]\label{itm:reghat_Chatxy_lower}
    there exists a corrupted game with
    $\sumT \nrm{y^\t - \yhat^\t}_1 \leq \Chaty$
    and 
    $\sumT \nrm{x^\t - \xhat^\t}_1 \leq \Chatx$
    such that
    $
    \max \set[\big]{ \Reg_{\xhat,g}^T, \Reg_{\yhat,\ell}^T }
    =
    \Omega\prn[\big]{\nsqrt{\Chaty}}
    $
    and
    $
    \max \set[\big]{ \Reg_{\xhat,g}^T, \Reg_{\yhat,\ell}^T }
    =
    \Omega\prn[\big]{\nsqrt{\Chatx}}
    ,
    $
    respectively.
  \end{itemize}
\end{theorem}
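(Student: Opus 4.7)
I construct three separate corrupted two-player zero-sum game families, one for each part, and in each case identify the relevant corruption budget with the ``adversarial budget'' of a classical online-learning lower bound: parts~(i) and~(iii) reduce to the standard $\Omega(\nsqrt{T\log m})$ (resp.\ $\Omega(\nsqrt{T})$) minimax regret for prediction with experts/online linear optimization, while part~(ii) is a direct construction via a strictly dominated action. Throughout, I will place the ``non-attacked'' player in the honest regime so that the attacked regret and the honest-output regret coincide.

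\textbf{Part~(i).} Fix any learning dynamics and let $y^\t = \yhat^\t \equiv y_0$ for some fixed $y_0 \in \Delta_{\my}$, so $g^\t \equiv A y_0$ is constant in $t$ and $x^\t = \xhat^\t$; all attack flows through the observed utility $\gtil^\t = g^\t + \tilde c_\xrm^\t$. On the first $T_0 = \lfloor \Ctilx / 2 \rfloor$ rounds I choose $\tilde c_\xrm^\t$ so that $\gtil^\t$ realizes an arbitrary $\set{-1,+1}^{\mx}$-valued sign sequence (each round spending $\nrm{g^\t - \gtil^\t}_\infty \leq 2$), and set $\tilde c_\xrm^\t = 0$ afterwards, satisfying the budget $\sum_t \nrm{g^\t - \gtil^\t}_\infty \le \Ctilx$. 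Since the algorithm only ever sees $\gtil$, the first $T_0$ rounds are exactly prediction with experts of horizon $T_0$ with full-information adversarial losses, whose minimax regret is $\Omega(\nsqrt{T_0 \log \mx}) = \Omega(\nsqrt{\Ctilx \log \mx})$; this lower-bounds $\Reg_{x,\gtil}^T = \Reg_{\xhat,\gtil}^T$. The symmetric statement for $y$-player is obtained by reversing roles.

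\textbf{Part~(ii).} Take the degenerate game with $\mx = 2$, $\my = 1$, and payoff column $A = (1, -1)^\top$, so $g^\t \equiv (1, -1)^\top$ and action~$1$ is strictly best; set $\tilde c_\xrm^\t = 0$. For any algorithm producing $\xhat^\t$, the $x$-player corrupts by $\hat c_\xrm^\t = (-\alpha^\t, +\alpha^\t)$ with $\alpha^\t \in [0, \xhat^\t(1)]$; a direct computation shows this contributes $2\alpha^\t$ per round to the regret against $e_1$ and spends $\nrm{\hat c_\xrm^\t}_1 = 2\alpha^\t$ of the budget, so the regret and spent budget coincide. A short case split shows that either the honest regret $2\sum_t \xhat^\t(2)$ already exceeds $\Chatx$, or $\sum_t \xhat^\t(1)$ is large enough to schedule the $\alpha^\t$'s with $2\sum_t \alpha^\t = \Chatx$; in both cases $\Reg_{x,g}^T = \Reg_{x,\gtil}^T = \Omega(\Chatx)$.

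\textbf{Part~(iii) --- the main obstacle.} The delicate case is to lower-bound $\max\set{\Reg_{\xhat,g}^T, \Reg_{\yhat,\ell}^T}$, which compares against the \emph{prescribed} outputs $\xhat^\t, \yhat^\t$ rather than the perturbed plays. I take $T = \lfloor \Chaty / 2 \rfloor$ and the matching-pennies payoff $A(1,1) = A(2,2) = 1$, $A(1,2) = A(2,1) = -1$. After observing $\xhat^\t$, the corrupted $y$-player picks $y^\t \in \set{e_1, e_2}$ minimizing $\inpr{\xhat^\t, A y^\t}$; each round has $\nrm{y^\t - \yhat^\t}_1 \leq 2$, so $\sum_t \nrm{y^\t - \yhat^\t}_1 \leq \Chaty$. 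The resulting $g^\t = A y^\t$ realizes an adaptive-adversary sign sequence in $\set{(1,-1), (-1,1)}$ against any online learner producing $\xhat^\t$, and the classical $\Omega(\nsqrt{T})$ adversarial lower bound for online linear optimization (provable via Yao's minimax principle against uniformly random signs) gives $\Reg_{\xhat, g}^T = \Omega(\nsqrt{T}) = \Omega(\nsqrt{\Chaty})$, which immediately implies the same bound on the max. The symmetric claim in $\Chatx$ follows by swapping the roles. The only point requiring care is that the $y$-player chooses $y^\t$ after $\xhat^\t$ is revealed --- this exactly matches the adaptive-adversary information pattern of \Cref{def:corrupted_regime} under which the classical lower bound is stated.
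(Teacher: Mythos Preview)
Your parts~(i) and~(ii) are essentially the paper's approach, with one correctable slip in~(i): you cannot stipulate $\yhat^\t \equiv y_0$, since $\yhat^\t$ is produced by the prescribed algorithm and not by the adversary. The paper takes $A = 0$, which makes $g^\t = A y^\t = 0$ independently of $y^\t$; then the rounds $t > T_0$ contribute zero to $\Reg_{\xhat,\gtil}^T$ and no assumption on $\yhat^\t$ is needed. Once you adopt $A=0$, your argument for~(i) is identical to the paper's.

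Part~(iii) has a genuine gap. You set the horizon $T = \lfloor \Chaty/2\rfloor$, but $T$ is given, not chosen by the adversary; the statement must hold for arbitrary $T \geq \Chaty/2$. When $T > \Chaty/2$, once the corruption budget is exhausted the $y$-player reverts to the honest $\yhat^\t$, and in matching pennies nothing prevents the honest rounds from erasing the regret accumulated during the corrupted phase. Concretely, if $\mathcal{A}_\xrm$ outputs $\xhat^\t = e_1$ always and $\mathcal{A}_\yrm$ outputs $\yhat^\t = e_1$ always, then for \emph{any} corruption on the first $\Chaty/2$ rounds one computes $\Reg_{\xhat,g}^T = \max\{0,\,2\Chaty - 2T\} = 0$ whenever $T \geq \Chaty$, so your direct lower bound on $\Reg_{\xhat,g}^T$ fails. (The max is still large in this example because $\Reg_{\yhat,\ell}^T = 2T$, but your argument never invokes $\Reg_{\yhat,\ell}^T$.) The paper's construction is designed precisely to control the post-corruption rounds: it takes
\[
A = \begin{pmatrix} 1 & 0 & -1 \\ 0 & 1 & -1 \end{pmatrix},
\]
so that action~3 is strictly dominant for the $y$-player. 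The proof is then a dichotomy on $\Reg_{\yhat,\ell}^T$: either $\Reg_{\yhat,\ell}^T \geq \kappa\nsqrt{\Chaty}$ and the max bound is immediate, or $\Reg_{\yhat,\ell}^T < \kappa\nsqrt{\Chaty}$, which forces $\sum_{t > \Chaty/2}\bigl(\yhat^\t(1)+\yhat^\t(2)\bigr) < \kappa\nsqrt{\Chaty}$ and hence lower-bounds the post-corruption contribution to $\Reg_{\xhat,g}^T(x^*)$ by $-\kappa\nsqrt{\Chaty}$ for every $x^*$. Matching pennies has no dominant strategy for $y$, so this control is unavailable; the dichotomy on $\Reg_{\yhat,\ell}^T$ --- not the adaptive-adversary timing you highlight --- is the missing ingredient.
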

Note that the lower bounds in (iii), unlike those in (ii), are for regrets $\Reg_{\xhat,g}^T$ and $\Reg_{\yhat,\ell}^T$,  
which are defined for the pre-corruption strategies  
$\set{\xhat^\t}_{t=1}^T$ and $\set{\yhat^\t}_{t=1}^T$.
The lower bound in (iii) is stated in a form similar to the lower bounds of \citet[Theorem 24]{syrgkanis15fast} and \citet[Theorem 4.2]{chen20hedging}. 
However, while their lower bounds focus only on a specific algorithm (the Hedge algorithm), our lower bounds hold for any learning dynamics. 
Moreover, their lower bounds do not address the corrupted regime.
The detailed results of their lower bounds can be found in \Cref{sec:related_work}.

The lower bound in (i) is proven by considering the payoff matrix $A = 0$ and reducing the problem to the finite-time lower bound for online linear optimization over the probability simplex (\Cref{lem:olo_simplex_lower_logm}).  
The bound in (ii) (specifically, the first statement) is shown for the payoff matrix $A$ whose first $\mx - 1$ rows consist entirely of $1$s and whose last row consists entirely of $0$s.  
By designing corruption that forces $x$-player to select the $\mx$-th action, the desired bound follows.
For (iii), we consider 
$
A = 
\begin{pmatrix}
  1 & 0 & -1 \\
  0 & 1 & -1 \\
\end{pmatrix}.
$
Observe that the first statement of (iii) is equivalent to the existence  
of a constant $\kappa > 0$ and a corrupted game satisfying  
$\sumT \nrm{y^\t - \yhat^\t}_1 \leq \Chaty$  
such that  
$
\Reg_{\xhat,g}^T < \kappa \nsqrt{\Chaty}$ implies $\Reg_{\yhat,\ell}^T \geq \kappa \nsqrt{\Chaty}.
$
Then, we use the fact that under the matrix $A$, for $y$-player to satisfy $\Reg_{\xhat,g}^T < \kappa \nsqrt{\Chaty}$,  
$y$-player can select actions~1 and~2 at most $\kappa \nsqrt{\Chaty}$ times.

\bibliography{references.bib}
\bibliographystyle{plainnat}

\newpage
\appendix

\section{Additional Related Work}\label{sec:related_work}

This section discusses additional related work that could not be included above.

\paragraph{External regret minimization for computing Nash equilibrium}
In two-player zero-sum games, it is known that when each player uses an (external) regret minimization algorithm, the average iterate of the chosen strategies converges to a Nash equilibrium (see \Cref{thm:reg2nasheq}).
For example, when each player employs FTRL with the negative Shannon entropy corresponding to Hedge~\citep{littlestone94weighted,freund97decision}, which has an upper regret bound of $\tilde{O}(\sqrt{T})$, an approximate Nash equilibrium is obtained at a rate of $\tilde{O}(1/\sqrt{T})$. 
A significant improvement in this convergence rate can be achieved by employing the optimistic frameworks such as OFTRL and the optimistic online mirror descent~\citep{rakhlin13online,rakhlin13optimization,syrgkanis15fast}.
The best-known upper bound for the individual regret in the honest regime is $O(\log(\mx \my))$, achieved with OFTRL with the negative Shannon entropy regularizer and a constant learning rate~\citep{syrgkanis15fast}.
Importantly, the authors ensure robustness against adversarial opponents by monitoring the cumulative variation of the actions chosen by the player and switching to an algorithm with a learning rate of $\Theta(1/\sqrt{T})$ once it exceeds a certain threshold.
Since the work of~\cite{syrgkanis15fast}, the optimistic prediction has been widely used as a tool for achieving an $o(\sqrt{T})$ regret in learning in games \citep{foster16learning,wei18more,chen20hedging,anagnostides22uncoupled}.

Although not the focus of this paper, it is known that in multi-player general-sum games, if each player uses a no-external-regret algorithm, their time-averaged distribution of joint play converges to a coarse correlated equilibrium (CCE), which is a less favorable notion of equilibrium compared to the correlated equilibrium. 
\citet{syrgkanis15fast} showed that external regret minimization with OFTRL using the negative Shannon entropy achieves an individual regret of $\tilde{O}(T^{1/4})$ in the honest regime. 
This result was later improved to $\tilde{O}(T^{1/6})$ for two-player games by~\citet{chen20hedging}.
Subsequently, significant advances in algorithms and analytical techniques have shown that (poly)logarithmic external regret can be achieved~\citep{daskalakis21near,farina22near}. 
Although this paper does not address it, we believe that a similar extension of the work by~\citet{farina22near} to the corrupted regime is possible.

\paragraph{Swap regret minimization for computing correlated equilibrium}
In multi-player general-sum games, the correlated equilibrium, which is considered a more desirable solution concept compared to the coarse correlated equilibrium, can be obtained when each player employs a no-swap-regret algorithm (see \Cref{thm:swapreg2coreq}).
As discussed in \Cref{sec:multi_player}, to minimize the swap regret, we prepare a set of base experts for each action and determine the player's strategies using a Markov chain defined by these outputs~\citep{blum07external}.
The challenge in analyzing swap regret minimization lies in the need for precise analysis of the stability of the stationary distribution of the Markov chain. 
Notably, \citet{chen20hedging} analyzed this stability using the Markov chain tree theorem and demonstrated that individual swap regret can be bounded by $O(T^{1/4})$ by employing OFTRL with the negative Shannon entropy.
Later, this swap regret bound was significantly improved by~\citet{anagnostides22near,anagnostides22uncoupled}. 
In particular, \citet{anagnostides22uncoupled} demonstrated that using OFTRL with the log-barrier regularizer and a constant learning rate makes the stability of the Markov chain stronger, leading to the first $O(\log T)$ individual swap upper bound.
To ensure the adversarial robustness of the algorithm, they also considered a procedure that switches to an algorithm with a learning rate of $\Theta(1/\sqrt{T})$ whenever the cumulative variation of the observed utilities exceeds $\log T$, which is similar to~\citet{syrgkanis15fast}.
Although not addressed in this study, there are lines of research that consider variations of the swap regret in Bayesian games~\citep{fujii23bayes} and that focus on dependencies on the number of actions $m$ instead of the number of rounds $T$ when analyzing the swap regret~\citep{dagan24from,peng24fast}.

\paragraph{Adaptive learning rate}
The adaptive learning rate is a critical technical element in our corrupted learning dynamics. 
The use of adaptive learning rate in learning in games is not new.
Adaptive learning rate has been used in two-player zero-sum matrix games~\citep{rakhlin13optimization}, cocoercive games~\citep{lin20finite}, variational inequalities~\citep{antonakopoulos19adaptive,antonakopoulos21adaptive}, and variationally stable games~\citep{hsieh21adaptive}. 
Of these, \citet{rakhlin13optimization,hsieh21adaptive} are the most relevant to our study. 
Some of our results partially subsume theirs as a special case.
However, unlike their work, 
the use of adaptive learning rate for the corrupted regime has not been studied before, especially in terms of the swap regret.

\paragraph{Time-varying games}
\citet{zhang22noregret} introduced the concept of time-varying games in two-player zero-sum settings,  
demonstrating that regret upper bounds can be achieved depending on the variation of the payoff matrix.  
Following their framework, time-varying games have gained increasing interest~\citep{harris23meta,yan23fast,feng23last,anagnostides23convergence}.  
Our setting, where corrupted expected utilities are observed,   can be seen as a generalization of the time-varying setting,  
where the underlying payoff matrix $A$ or utility function $u_i$ are time-varying.  
(When considering such variations, recall from \Cref{rem:setting_AB} that defining regret in terms of $\Reg_{x_i,\util_i}^T$ or $\SwapReg_{x_i,\util_i}^T$ is the natural choice.)  
However, a key distinction is that our primary motivation lies  
in corruption in strategies.  
Moreover, we consider a more general corruption model for utility than \citet{zhang22noregret} and its subsequent works, making a direct comparison with their regret upper bounds impossible.

\paragraph{Lower bounds}
While there has been extensive research on regret upper bounds,  
there has been relatively little work on regret lower bounds.  
Initially, \citet[Theorem 24]{syrgkanis15fast} considered a setting where $x$-player uses (vanilla) Hedge (recall that this corresponds to FTRL with the negative Shannon entropy) with any learning rate, while $y$-player follows a (pure) best response  (i.e., minimizing the expected utility of the current round based on the $x$-player's choice). 
They showed that for Hedge with any learning rate, there exists a two-player zero-sum game where the $x$-player must suffer a $\sqrt{T}$ regret.  
Later, \citet[Theorem 4.2]{chen20hedging} extended this result to a setting where both $x$- and $y$-players use Hedge.
They demonstrated that for Hedge with any learning rate, there exists a two-player general-sum game where at least one player suffers a $\sqrt{T}$ regret.  
Note that these regret lower bounds do not apply to the corrupted regime, and these results rely on regret upper bounds that depend on the update rule of Hedge.  
In contrast, our lower bounds for corrupted games in \Cref{thm:lower_bounds} hold for any learning dynamics.

\section{Regret Analysis of Optimistic Follow-the-Regularized-Leader}\label{app:proof_oftrl}

In this section, we analyze the regret of the Optimistic Follow-the-Regularized-Leader (OFTRL) for online linear optimization.
First, we provide a general analysis of OFTRL. 
Then, we present RVU bounds for OFTRL with the negative Shannon entropy and the log-barrier regularizer, 
which are used respectively in two-player zero-sum games in \Cref{sec:two_player} and multi-player general-sum games in \Cref{sec:multi_player}.
In the context of learning in games, it is common to consider a \emph{utility} vector $u^\t$ instead of a loss vector $\ell^\t$.
Even in such cases, our statements for online linear optimization provided in this paper can be applied by letting $\ell^\t = - u^\t$. 

\subsection{Common analysis}

The following lemma provides a regret  bound for OFTRL that holds for a general regularizer $\psi^\t$.
\begin{lemma}[Regret bound for OFTRL]\label[lemma]{lem:oftrl_bound}
  Let $\calK \subseteq \R^d$ be a nonempty closed convex set.
  Let
  $x^\t \in \argmin_{x \in \calK} \set{\inpr{x, m^\t + \sum_{s=1}^{t-1} \ell^\s} + \psi^\t(x)}$ be the output of OFTRL (with linearized losses) at round $t$.
  Then, for any $x^* \in \calK$,
  \begin{align}
    &\sumT \inpr{x^\t - x^*, \ell^\t}
    \leq
    \psi^{(T+1)}(x^*) - \psi^{(1)}(x^{(1)})
    +
    \sumT \prn*{ \psi^\t(x^\tp) - \psi^\tp(x^\tp)}
    \nn 
    &\qquad+
    \sumT 
    \prn*{
      \inpr{x^\t - x^\tp, \ell^\t - m^\t}
      -
      D_{\psi^\t}(x^\tp, x^\t)
    }
    +
    \inpr{x^* - x^{(T+1)}, m^{(T+1)}}
    \per 
    \label{eq:oftrl_bound}
  \end{align}
\end{lemma}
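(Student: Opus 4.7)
The plan is a three-ingredient argument: first, use first-order optimality of the OFTRL iterate at time $T+1$ to swap the comparator $x^*$ for $x^{(T+1)}$, generating the boundary term $\inpr{x^* - x^{(T+1)}, m^{(T+1)}}$; second, apply an Abel summation to rewrite the resulting sum in terms of consecutive-iterate differences $x^\t - x^\tp$; third, use first-order optimality at each earlier time $t$ with the shifted comparator $x^\tp$ to convert these differences into the claimed stability term and the $-D_{\psi^\t}(x^\tp, x^\t)$ penalty.

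Writing $F^\t(x) := \inpr{x, m^\t + L^\tm} + \psi^\t(x)$ with $L^\tm := \sum_{s<t}\ell^\s$, the iterate $x^\t$ minimizes $F^\t$ over $\calK$, so (since the linear part contributes nothing to the Bregman divergence) the standard Bregman-minimizer inequality gives $F^\t(y) \ge F^\t(x^\t) + D_{\psi^\t}(y, x^\t)$ for every $y \in \calK$. Applying this at $t = T+1$ with $y = x^*$ and dropping the nonnegative Bregman term yields
\[
  \inpr{x^{(T+1)} - x^*, L^\T} \le \inpr{x^* - x^{(T+1)}, m^{(T+1)}} + \psi^{(T+1)}(x^*) - \psi^{(T+1)}(x^{(T+1)}),
\]
which, combined with the identity $\sumT \inpr{x^\t - x^*, \ell^\t} = \sumT \inpr{x^\t - x^{(T+1)}, \ell^\t} + \inpr{x^{(T+1)} - x^*, L^\T}$, upper bounds the regret by $\sumT \inpr{x^\t - x^{(T+1)}, \ell^\t}$ plus the boundary and final-step regularizer terms. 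An Abel summation using $x^\t - x^{(T+1)} = \sum_{s=t}^T (x^\s - x^\sp)$ and swapping the double sum then rewrites $\sumT \inpr{x^\t - x^{(T+1)}, \ell^\t} = \sumT \inpr{x^\t - x^\tp, L^\t}$.

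Applying the Bregman-minimizer inequality at each time $t$ with $y = x^\tp$, rearranging, and adding $\inpr{x^\t - x^\tp, \ell^\t}$ to both sides bounds
\[
  \inpr{x^\t - x^\tp, L^\t} \le \inpr{x^\t - x^\tp, \ell^\t - m^\t} + \psi^\t(x^\tp) - \psi^\t(x^\t) - D_{\psi^\t}(x^\tp, x^\t),
\]
which produces both the stability and Bregman-penalty terms in the claim. To finish, I would telescope the regularizer contributions via the elementary identity $\sumT [\psi^\tp(x^\tp) - \psi^\t(x^\t)] = \psi^{(T+1)}(x^{(T+1)}) - \psi^{(1)}(x^{(1)})$, which converts $\sumT[\psi^\t(x^\tp) - \psi^\t(x^\t)] - \psi^{(T+1)}(x^{(T+1)})$ into the required $\sumT[\psi^\t(x^\tp) - \psi^\tp(x^\tp)] - \psi^{(1)}(x^{(1)})$.

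The main (admittedly minor) obstacle is selecting the correct pair of optimality applications — at step $T+1$ with comparator $x^*$ to produce the boundary term, and at each earlier step $t$ with the shifted comparator $x^\tp$ to produce $-D_{\psi^\t}(x^\tp, x^\t)$ — and arranging the bookkeeping so that the regularizer evaluations end up in the specific form $\sumT[\psi^\t(x^\tp) - \psi^\tp(x^\tp)]$ (both arguments equal to $x^\tp$, rather than a mixed form involving $x^\t$) demanded by the statement. Once these choices are fixed, the rest is direct manipulation.
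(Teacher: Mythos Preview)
Your proof is correct and uses the same two ingredients as the paper's---the first-order optimality/Bregman-minimizer inequality at each iterate, plus a telescoping of the regularizer values---so the approaches are essentially the same. The only cosmetic difference is organizational: the paper telescopes via the ``be-the-leader'' decomposition $\sumT\bigl(F^\t(x^\t)-F^\tp(x^\tp)\bigr)$ and then separately telescopes the $m^\t$ terms, whereas your Abel-summation route extracts the boundary term $\inpr{x^*-x^{(T+1)},m^{(T+1)}}$ directly from the $T{+}1$ optimality step.
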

This lemma follows from a standard analysis of OFTRL and can be seen as a variant of \citet[Theorem 7.36]{orabona2019modern}.
We include the proof for completeness.
\begin{proof}
Define $\tilde{\psi}^\t(x) = \psi^\t(x) + \inpr{x, m^\t}$, and let
\begin{equation}
  F^\t(x) 
  = 
  \inpr*{x, m^\t + \sum_{s=1}^{t-1} \ell^\s} + \psi^\t(x) 
  = 
  \inpr*{x, \sum_{s=1}^{t-1} \ell^\s} + \tilde{\psi}^\t(x) 
  \n
\end{equation}
be the objective function of OFTRL at round $t$.
From $- \sumT \inpr{x^*, \ell^\t} = \tilde{\psi}^{(T+1)}(x^*) - F^{(T+1)}(x^*)$, we have
\begin{align}
  &
  \sumT \inpr{x^\t - x^*, \ell^\t}
  \nn
  &=
  \tilde{\psi}^{(T+1)}(x^*) \!-\! F^{(T+1)}(x^*)
  \!+\!
  \prn[\big]{- F^{(1)}(x^{(1)}) + F^{(1)}(x^{(1)})}
  \!+\!
  \prn[\big]{- F^{(T+1)}(x^{(T+1)}) + F^{(T+1)}(x^{(T+1)})}
  \nn
  \!&\qquad+\!
  \sumT \inpr{x^\t, \ell^\t}
  \nn
  &=
  \tilde{\psi}^{(T+1)}(x^*) - F^{(T+1)}(x^*)
  - F^{(1)}(x^{(1)}) 
  +
  \sumT \prn[\big]{ F^\t(x^\t) - F^\tp(x^\tp)}
  +
  F^{(T+1)}(x^{(T+1)})
  \nn
  &\qquad+
  \sumT \inpr{x^\t, \ell^\t}
  \nn
  &\leq
  \tilde{\psi}^{(T+1)}(x^*) 
  - 
  \tilde{\psi}^{(1)}(x^{(1)})
  +
  \sumT \prn[\big]{ F^\t(x^\t) - F^\tp(x^\tp) + \inpr{x^\t, \ell^\t}}
  \com 
  \label{eq:oftrl-decompose}
\end{align}
where in the second inequality we considered the telescoping sum and in the last line we used the fact that $x^{(T+1)}$ is the minimizer of $F^{(T+1)}$.
The last term in the last inequality is bounded as
\begin{align}
  &
  F^\t(x^\t) - F^\tp(x^\tp) + \inpr{x^\t, \ell^\t}
  \nn
  &=
  F^\t(x^\t)
  - 
  F^\t(x^\tp)
  +
  \inpr{x^\tp, m^\t - m^\tp}
  \nn
  &\qquad+
  \psi^\t(x^\tp) - \psi^\tp(x^\tp)
  + 
  \inpr{x^\t - x^\tp, \ell^\t}
  \nn
  &\leq
  - D_{F^\t} (x^\tp, x^\t) 
  +
  \psi^\t(x^\tp) - \psi^\tp(x^\tp)
  \nn
  &\qquad
  + 
  \inpr{x^\tp, m^\t - m^\tp}
  +
  \inpr{x^\t - x^\tp, \ell^\t}
  \per
  \label{eq:oftrl_stab_general}
\end{align}
Here in the last inequality we used 
\begin{equation}
  D_{F^\t}(x^\tp, x^\t)
  =
  F^\t(x^\tp)
  - 
  F^\t(x^\t)
  -
  \inpr{x^\tp - x^\t, \nabla F^\t(x^\t)}
  \leq 
  F^\t(x^\tp)
  -
  F^\t(x^\t)
  \com
  \n
\end{equation}
where the inequality follows from the first-order optimality condition at $x^\t$, that is,
$
\inpr{x - x^\t, \nabla F^\t(x^\t)} \geq 0
$ for any $x \in \calK$.
Combining \Cref{eq:oftrl-decompose}, \Cref{eq:oftrl_stab_general}, and the fact that $D_{F^\t}(x^\tp, x^\t) = D_{\psi^\t}(x^\tp, x^\t)$, we obtain
\begin{align}
  &
  \sumT \inpr{x^\t - x^*, \ell^\t}
  \nn
  &\leq 
  \tilde{\psi}^{(T+1)}(x^*) 
  - 
  \tilde{\psi}^{(1)}(x^{(1)})
  +
  \sumT \prn*{ 
    \psi^\t(x^\tp) - \psi^\tp(x^\tp)
    - D_{\psi^\t} (x^\tp, x^\t) 
  }
  \nn 
  &\qquad+
  \sumT 
  \inpr{x^\tp, m^\t - m^\tp}
  +
  \sumT
  \inpr{x^\t - x^\tp, \ell^\t}  
  \nn
  &= 
  {\psi}^{(T+1)}(x^*) 
  - 
  {\psi}^{(1)}(x^{(1)})
  +
  \sumT \prn*{ 
    \psi^\t(x^\tp) - \psi^\tp(x^\tp)
    - D_{\psi^\t} (x^\tp, x^\t) 
  }
  \nn 
  &\qquad+
  \sumT 
  \inpr{x^\tp - x^\t, m^\t} + \inpr{x^* - x^{(T+1)}, m^{(T+1)}}
  +
  \sumT
  \inpr{x^\t - x^\tp, \ell^\t}  
  \com 
  \n
\end{align}
where the last line follows from the telescoping 
$\sumT \inpr{x^\tp, m^\t - m^\tp} = \sumT \inpr{x^\tp - x^\t, m^\t} + \inpr{x^{(1)}, m^{(1)}} - \inpr{x^{(T+1)}, m^{(T+1)}}$.
This completes the proof.
\end{proof}

\subsection{OFTRL with negative Shannon entropy regularizer}
\Cref{lem:oftrl_bound} immediately yields the well-known RVU bound for OFTRL with the negative Shannon entropy, which we use in our analysis for two-player zero-sum games in \Cref{sec:two_player}.
\begin{restatable}[RVU bound for OFTRL with negative Shannon entropy regularizer]{lemma}{lemoftrlshannon}\label[lemma]{lem:oftrl_shannon}
  Let
  $\psi^\t(x) = - \frac{1}{\eta^\t} H(x)$ for
  $H(x) = \sum_{k=1}^d x(k) \log(1/x(k))$ be the negative Shannon entropy regularizer with nonincreasing learning rate $\eta^\t$
  and 
  $x^\t \in \argmin_{x \in \Delta_d} \set{\inpr{x, m^\t + \sum_{s=1}^{t-1} \ell^\s} + \psi^\t(x)}$ be the output of OFTRL at round $t$.
  Then, for any $x^* \in \Delta_d$,
  \begin{equation}
      \sumT \inpr{x^\t - x^*, \ell^\t}
    \leq
    \frac{\log d}{\eta^{(T+1)}}
    +
    \sumT
    \eta^\t \nrm{\ell^\t - m^\t}_\infty^2 
    - 
    \sumT 
    \frac{1}{4 \eta^\t} \nrm{x^\t - x^\tp}_1^2  
    + 2 \nrm{m^{(T+1)}}_\infty 
    \per 
    \n
    \end{equation}
\end{restatable}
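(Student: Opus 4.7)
The plan is to apply the general OFTRL regret identity in \Cref{lem:oftrl_bound} with the specific choice $\psi^\t(x) = -H(x)/\eta^\t$, and then bound each of the four groups of terms on its right-hand side using standard facts on the simplex.

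First, I would merge the boundary contribution $\psi^{(T+1)}(x^*) - \psi^{(1)}(x^{(1)})$ with the time-varying discrepancy $\sum_{t=1}^T \prn*{\psi^\t(x^\tp) - \psi^\tp(x^\tp)}$. Using $0 \leq H \leq \log d$ on $\Delta_d$ and the fact that a nonincreasing $\eta^\t$ makes $1/\eta^\t$ nondecreasing, the first piece is at most $(\log d)/\eta^{(1)}$ (its $x^*$-dependent part $-H(x^*)/\eta^{(T+1)}$ is nonpositive), while each summand of the second equals $H(x^\tp)\prn{1/\eta^\tp - 1/\eta^\t} \geq 0$ and telescopes to at most $\log d \cdot \prn*{1/\eta^{(T+1)} - 1/\eta^{(1)}}$. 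Combining these yields exactly the target $(\log d)/\eta^{(T+1)}$.

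Second, I would handle the per-round stability term $\inpr{x^\t - x^\tp, \ell^\t - m^\t} - D_{\psi^\t}(x^\tp, x^\t)$. A direct computation gives $D_{\psi^\t}(x^\tp, x^\t) = \mathrm{KL}(x^\tp \Vert x^\t)/\eta^\t$, and Pinsker's inequality supplies the lower bound $\|x^\tp - x^\t\|_1^2 / (2\eta^\t)$. Applying H\"older on the inner product and then Young's inequality $ab \leq \eta^\t a^2 + b^2/(4\eta^\t)$ with $a = \nrm{\ell^\t - m^\t}_\infty$ and $b = \nrm{x^\t - x^\tp}_1$ yields
\[
\inpr{x^\t - x^\tp, \ell^\t - m^\t} - D_{\psi^\t}(x^\tp, x^\t) \leq \eta^\t \nrm{\ell^\t - m^\t}_\infty^2 - \frac{1}{4\eta^\t} \nrm{x^\t - x^\tp}_1^2,
\]
which summed over $t$ produces the two middle terms of the target bound.

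Finally, the remaining boundary term $\inpr{x^* - x^{(T+1)}, m^{(T+1)}}$ is controlled by H\"older together with $\nrm{x^* - x^{(T+1)}}_1 \leq 2$ on the simplex, contributing the $2\nrm{m^{(T+1)}}_\infty$ term. I do not anticipate any genuine obstacle: the only step that requires mild care is the bookkeeping that pairs the two regularizer-based contributions so as to extract a clean $(\log d)/\eta^{(T+1)}$ rather than a pessimistic $\log d \cdot \prn{1/\eta^{(T+1)} + 1/\eta^{(1)}}$, which in turn hinges on exploiting both the sign of $-H(x^*)/\eta^{(T+1)}$ and the monotonicity of $1/\eta^\t$.
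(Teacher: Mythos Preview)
Your proposal is correct and follows essentially the same route as the paper: both apply \Cref{lem:oftrl_bound}, bound the regularizer terms via $0 \le H \le \log d$ together with monotonicity of $1/\eta^\t$ to telescope to $(\log d)/\eta^{(T+1)}$, control the stability term using $(1/\eta^\t)$-strong convexity of the negative entropy with respect to $\nrm{\cdot}_1$ (your KL/Pinsker phrasing is equivalent) combined with H\"older and the same quadratic trade-off, and bound the final inner product by $2\nrm{m^{(T+1)}}_\infty$.
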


\begin{proof}
We will upper bound the RHS of \Cref{eq:oftrl_bound} in \Cref{lem:oftrl_bound}.
Since $H(x) \leq \log d$ for all $x \in \Delta_d$,
we have
\begin{align}
  &
  \psi^{(T+1)}(x^*) 
  - \psi^{(1)}(x^{(1)})
  +
  \sumT \prn*{ \psi^\t(x^\tp) - \psi^{(t+1)}(x^\tp)}
  \nn
  &\leq 
  \frac{\log d}{\eta^{(1)}}
  +
  \sumT \prn*{\frac{1}{\eta^\tp} - \frac{1}{\eta^\t}} \log d
  =
  \frac{\log d}{\eta^{(T+1)}}
  \per
  \n
\end{align}
Since $\psi^\t$ is $(1/\eta^\t)$-strongly convex w.r.t.~$\nrm{\cdot}_1$, we also have 
\begin{align}
  &
  \inpr{x^\t - x^\tp, \ell^\t - m^\t}
  -
  D_{\psi^\t}(x^\tp, x^\t)
  \nn
  &\leq 
  \nrm{x^\t - x^\tp}_1 \nrm{\ell^\t - m^\t}_\infty
  -
  \frac{1}{2 \eta^\t} \nrm{x^\t - x^\tp}_1^2
  \nn 
  &=
  \nrm{x^\t - x^\tp}_1 \nrm{\ell^\t - m^\t}_\infty
  -
  \frac{1}{4\eta^\t} \nrm{x^\t - x^\tp}_1^2
  -
  \frac{1}{4\eta^\t} \nrm{x^\t - x^\tp}_1^2
  \nn 
  &\leq 
 \eta^\t \nrm{\ell^\t - m^\t}_\infty^2 
  - \frac{1}{4 \eta^\t} \nrm{x^\t - x^\tp}_1^2
  \com 
  \n
\end{align}
where the first inequality follows from H\"older's inequality and the $(1/\eta^\t)$-strong convexity of $\psi^\t$ with respect to $\nrm{\cdot}_1$, and the last inequality follows from $b \sqrt{z} - a z \leq b^2 / (4a)$ for $a > 0, b \geq 0$, and $z \geq 0$.
Combining \Cref{lem:oftrl_bound} with the last two inequalities gives the desired bound.
\end{proof}

\subsection{OFTRL with self-concordant barrier and adaptive learning rate}

Here, we present the RVU bound for OFTRL with a self-concordant barrier, a class of functions that includes the log-barrier regularizer. 
We first introduce the concepts and key properties of self-concordant functions and self-concordant barriers.
\paragraph{Self-concordant function and self-concordant barrier}
We first define the self-concordant function and self-concordant barrier (see \eg~\citealt{nesterov94interior,nemirovski08interior} for detailed background).
Recall that
$\nrm{h}_{x,f} = \sqrt{h^\top \nabla^2 f(x) h}$
and 
$\nrm{h}_{*,x,f} = \sqrt{h^\top \prn{\nabla^2 f(x)}^{-1} h}$.
\begin{definition}\label{def:sc_func}
  Let $\calK \subseteq \R^d$ be a closed convex set.
  Then a function $f \colon \interior\prn{\calK} \to \R$ is \emph{self-concordant} if 
  $f$ is three times continuously differentiable convex function with $f(x_k) \to \infty$ if $x_k \to x_\infty \in \partial \calK$,
  and
  \begin{equation}
    \abs{ D^3 f(x)[h, h, h] }
    \leq 
    2 \prn*{ D^2 f(x)[h, h] }^{3/2}
    \n
  \end{equation}
  for all $x \in \interior\prn{\calK}$ and $h \in \R^d$.
\end{definition}

\begin{definition}\label{def:sc_barrier}
  Let $\calK \subseteq \R^d$ be a closed convex set and $\vartheta \geq 1$.
  Then a function $g$ is \emph{$\vartheta$-self-concordant barrier} for $\calK$ if
  $g$ is self-concordant and
  \begin{equation}
    \abs{ D f(x) [h] }
    \leq 
    \prn{\vartheta D^2 f(x)[h, h]}^{1/2}
    \n
  \end{equation}
  for all $x \in \interior\prn{\calK}$ and $h \in \R^d$.
\end{definition}
We use the fact that the log-barrier function $\phi(x) = - \sum_{k=1}^d \log(x(k))$ is a $d$-self-concordant barrier over the positive orthant.
For self-concordant functions, we use the following lemma.
\begin{lemma}[{\citealt[Theorem 2.1.1]{nesterov94interior}}]\label[lemma]{lem:sc_hessian_stab}
  Let $\calS$ be an open convex subset of a finite-dimensional real vector space.
  Let $f$ be a self-concordant function on $\calS$.
  Then, for any $y \in \calS$ such that $\nrm{x - y}_{y,f} < 1$, 
  \begin{equation}
    \prn{1 - \nrm{x - y}_{y, f}}^2 
    \nabla^2 f(y) 
    \preceq
    \nabla^2 f(x) 
    \preceq 
    \frac{1}{\prn{1 - \nrm{x - y}_{y,f}}^2}
    \nabla^2 f(y) 
    \per 
    \n
  \end{equation}
\end{lemma}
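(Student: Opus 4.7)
The plan is to prove this classical Hessian stability bound by a one-dimensional reduction along the segment from $y$ to $x$ and then integrating the bounds supplied by the scalar self-concordance inequality of \Cref{def:sc_func}. Throughout I write $v := x - y$, treat $\nrm{v}_{y,f} < 1$ as fixed, and view the desired operator sandwich as a statement about $h^\top \nabla^2 f(y + v) h$ versus $h^\top \nabla^2 f(y) h$ for arbitrary $h \in \R^d$.

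The first step will be to upgrade the scalar third-derivative bound to a trilinear one: for every $h_1, h_2, h_3 \in \R^d$,
\[
  \abs{D^3 f(y)[h_1, h_2, h_3]} \leq 2\, \nrm{h_1}_{y,f} \nrm{h_2}_{y,f} \nrm{h_3}_{y,f}.
\]
This is obtained by polarizing the symmetric trilinear form $D^3 f(y)$, whose diagonal is controlled by $2 \nrm{h}_{y,f}^3$ via \Cref{def:sc_func}. This step is the main obstacle of the proof, since a naive polarization would inflate the constant beyond $2$ and the rest of the argument relies on this tight version; the standard workaround is to view $(h_1, h_2) \mapsto D^3 f(y)[h_1, h_2, h_3]$ as a symmetric bilinear form at fixed $h_3$ and apply a Cauchy--Schwarz-type comparison tailored to self-concordance.

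The second step controls how $v$ itself changes in local norm along the segment. Setting $\rho(t) := D^2 f(y + tv)[v, v] = \nrm{v}_{y + tv, f}^2$, differentiation gives $\rho'(t) = D^3 f(y + tv)[v, v, v]$, so scalar self-concordance applied at $y + tv$ yields $\abs{\rho'(t)} \leq 2 \rho(t)^{3/2}$. Equivalently $\abs{\tfrac{d}{dt} \rho(t)^{-1/2}} \leq 1$, and integrating while using $\nrm{v}_{y,f} < 1$ to keep the denominator positive gives
\[
  \nrm{v}_{y + tv, f} \leq \frac{\nrm{v}_{y,f}}{1 - t\, \nrm{v}_{y,f}} \qquad \text{for all } t \in [0, 1).
\]

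The final step fixes an arbitrary $h \in \R^d$ and analyzes $\phi(t) := h^\top \nabla^2 f(y + tv) h = D^2 f(y + tv)[h, h]$. The trilinear inequality of the first step gives
\[
  \abs{\phi'(t)} = \abs{D^3 f(y + tv)[h, h, v]} \leq 2 \phi(t)\, \nrm{v}_{y + tv, f},
\]
so $\abs{\tfrac{d}{dt} \log \phi(t)} \leq 2\, \nrm{v}_{y + tv, f}$. Substituting the bound from the second step and integrating over $[0, 1]$ yields
\[
  \abs*{\log \frac{\phi(1)}{\phi(0)}} \leq 2 \int_0^1 \frac{\nrm{v}_{y,f}}{1 - t\, \nrm{v}_{y,f}} \, dt = -2 \log\prn{1 - \nrm{v}_{y,f}},
\]
which rearranges to $(1 - \nrm{v}_{y,f})^2 \phi(0) \leq \phi(1) \leq \phi(0) / (1 - \nrm{v}_{y,f})^2$. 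Since $h$ was arbitrary, this is exactly the positive-semidefinite sandwich claimed in the lemma, and the proof is essentially complete once the trilinear inequality of the first step is in hand.
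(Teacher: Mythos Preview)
The paper does not prove this lemma; it is stated with a citation to \citet[Theorem 2.1.1]{nesterov94interior} and used as a black box. Your proposal reproduces the standard textbook argument from that reference: reduce to the segment $y + tv$, integrate the self-concordance inequality to control $\rho(t)=\nrm{v}_{y+tv,f}^2$, and then integrate the resulting bound on $\tfrac{d}{dt}\log\phi(t)$. Steps~2 and~3 are correct and complete.

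The one place your write-up is genuinely incomplete is Step~1, the sharp trilinear bound $\abs{D^3 f(y)[h_1,h_2,h_3]}\le 2\prod_i\nrm{h_i}_{y,f}$. You correctly flag it as the main obstacle, but the hint you give---fix $h_3$, view the remainder as a symmetric bilinear form, and ``apply a Cauchy--Schwarz-type comparison''---does not by itself recover the constant~$2$; a symmetric bilinear form bounded on the diagonal by $M\nrm{h}^2$ need not be bounded by $M\nrm{h_1}\nrm{h_2}$ off-diagonal unless it is semidefinite, which $D^3 f(y)[\cdot,\cdot,h_3]$ generally is not. The actual argument in Nesterov--Nemirovski changes coordinates so that $\nabla^2 f(y)=I$ and then invokes the (nontrivial) fact that a symmetric $k$-linear form on a real Euclidean space attains its operator norm on the diagonal. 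Alternatively, since Step~3 only needs the ``two equal arguments'' case $\abs{D^3 f[h,h,v]}\le 2\nrm{h}_{y,f}^2\nrm{v}_{y,f}$, you can prove that directly by considering $D^3 f[h\pm sv,h\pm sv,h\pm sv]$ and optimizing over~$s$. Either way, if you want a self-contained proof you should spell this sub-lemma out rather than gesture at it.
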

We will use the following properties of the self-concordant barrier.
\begin{lemma}[{\citealt[Proposition 2.3.2]{nesterov94interior}}]\label[lemma]{lem:sc_barrier_diff}
Let $g$ be a $\vartheta$-self-concordant barrier for $\calK$.
Then, for any $x, y \in \interior(\calK)$,
\begin{equation}
  g(y) - g(x) \leq \vartheta \log \prn*{ \frac{1}{1 - \pi(y;x)}}
  \com 
  \n
\end{equation}
where $\pi(y, x) = \inf\set{s \geq 0 \colon x + s^{-1} (y - x) \in \calK}$ is the Minkowski function on $\calK$ with pole at $x$.
\end{lemma}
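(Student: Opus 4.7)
The plan is to reduce the multivariate claim to a one-dimensional differential inequality by restricting $g$ to the ray from $x$ through $y$. Set $T = 1/\pi(y, x)$; since $y \in \interior(\calK)$ we have $T > 1$, and the segment $\set{x + t(y-x) \colon t \in [0, T)}$ lies in $\interior(\calK)$ while approaching $\partial\calK$ as $t \to T$. Define $\varphi(t) = g(x + t(y - x))$ on $[0, T)$; because $g$ is a barrier, $\varphi(t) \to +\infty$ as $t \to T$, and the target inequality becomes $\varphi(1) - \varphi(0) \leq \vartheta \log\prn{T/(T-1)}$.

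The second step transfers the barrier property from $g$ to $\varphi$. Applying Definition~\ref{def:sc_barrier} with direction $h = y - x$ at $z = x + t(y-x)$ yields $\varphi'(t)^2 = \abs{D g(z)[h]}^2 \leq \vartheta \cdot D^2 g(z)[h, h] = \vartheta\, \varphi''(t)$ for every $t \in [0, T)$. From this univariate inequality I would deduce the sharp gradient bound $\varphi'(t) \leq \vartheta/(T-t)$. The bound is trivial wherever $\varphi'(t) \leq 0$; on intervals where $\varphi'(t) > 0$, dividing through rearranges the inequality into $\frac{d}{dt}\prn{-1/\varphi'(t)} \geq 1/\vartheta$. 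Integrating from $t$ to $b$ and letting $b \nearrow T$---at which point $\varphi'(b) \to +\infty$, since convexity together with $\varphi(b) \to \infty$ forbids $\varphi'$ from remaining bounded on $[0, T)$---gives $1/\varphi'(t) \geq (T - t)/\vartheta$, which is the claimed pointwise bound.

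Finally, integrating the gradient bound from $0$ to $1$ yields
\[
\varphi(1) - \varphi(0) \leq \int_0^1 \frac{\vartheta}{T - t}\, dt = \vartheta \log\prn{T/(T-1)} = \vartheta \log\prn{1/(1 - \pi(y, x))},
\]
which is exactly the lemma. The main obstacle is the gradient bound step: one must handle the sign of $\varphi'$ carefully and rigorously justify the blow-up $\varphi'(b) \to \infty$ via convexity and the barrier property. The rest of the argument is a routine one-variable computation, and note that the proof does not use the full strength of self-concordance (Definition~\ref{def:sc_func}); only the barrier inequality $\abs{Dg(z)[h]}^2 \leq \vartheta\, D^2 g(z)[h,h]$ together with $\varphi \to \infty$ at the endpoint is needed.
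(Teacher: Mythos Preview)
The paper does not supply its own proof of this lemma; it is quoted verbatim as Proposition~2.3.2 of \citet{nesterov94interior} and used as a black box. Your argument is correct and is precisely the classical one from that reference: restrict $g$ to the segment, use the barrier inequality to get $(\varphi')^2\le\vartheta\,\varphi''$, turn this into the pointwise bound $\varphi'(t)\le\vartheta/(T-t)$ via the blow-up of $\varphi'$ at $T$, and integrate from $0$ to $1$.
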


The following lemma will be used in the proof of \Cref{thm:indiv_swapreg}.
\begin{lemma}\label[lemma]{lem:scbarrier_properties}
  Let $f$ be a $\vartheta$-self-concordant barrier on $\calK$.
  Then, for any $x \in \interior(\calK)$ such that $\nabla^2 f(x)$ is invertible,
  it holds that $\nrm{\nabla f(x)}_{*,x,f}^2 \leq \vartheta$.
\end{lemma}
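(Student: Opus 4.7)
The plan is to derive the bound by substituting a carefully chosen vector $h$ into the self-concordant barrier inequality from \Cref{def:sc_barrier}. By definition, for every $h\in\R^d$ we have
\[
  \abs{Df(x)[h]}
  = \abs{\inpr{\nabla f(x), h}}
  \leq \sqrt{\vartheta\, D^2 f(x)[h,h]}
  = \sqrt{\vartheta}\cdot \nrm{h}_{x,f},
\]
so it suffices to exhibit an $h$ for which both sides reduce nicely to the desired quantity $\nrm{\nabla f(x)}_{*,x,f}$.

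The natural choice is $h = (\nabla^2 f(x))^{-1}\nabla f(x)$, which is well-defined since $\nabla^2 f(x)$ is assumed invertible. For this $h$, the left-hand side becomes $\inpr{\nabla f(x), h} = \nabla f(x)^\top (\nabla^2 f(x))^{-1}\nabla f(x) = \nrm{\nabla f(x)}_{*,x,f}^2$, and likewise $\nrm{h}_{x,f}^2 = h^\top \nabla^2 f(x)\, h = \nabla f(x)^\top (\nabla^2 f(x))^{-1}\nabla f(x) = \nrm{\nabla f(x)}_{*,x,f}^2$. Plugging these into the self-concordant barrier inequality yields $\nrm{\nabla f(x)}_{*,x,f}^2 \leq \sqrt{\vartheta}\cdot \nrm{\nabla f(x)}_{*,x,f}$, and dividing through by $\nrm{\nabla f(x)}_{*,x,f}$ (or noting the bound holds trivially when this quantity is zero) gives $\nrm{\nabla f(x)}_{*,x,f} \leq \sqrt{\vartheta}$, hence $\nrm{\nabla f(x)}_{*,x,f}^2 \leq \vartheta$ as claimed.

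There is no genuine obstacle here; the lemma is essentially a two-line consequence of the definition once the correct test vector is chosen. The only minor care needed is the case where $\nabla f(x) = 0$, which is trivial, and confirming that the chosen $h$ is well-defined, which follows from the invertibility hypothesis on $\nabla^2 f(x)$.
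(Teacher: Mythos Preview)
Your proof is correct and follows the same approach as the paper: substitute $h = (\nabla^2 f(x))^{-1}\nabla f(x)$ into the defining inequality of a $\vartheta$-self-concordant barrier and simplify. Your version is slightly more detailed in spelling out the intermediate computations and the trivial case $\nabla f(x)=0$, but the idea is identical.
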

\begin{proof}
  This follows from the definition of the self-concordant barrier in \Cref{def:sc_barrier} with $h = (\nabla^2 f(x))^{-1} f(x)$.
\end{proof}

\paragraph{OFTRL with self-concordant barrier}
Using these properties, we can establish an upper bound on the regret of OFTRL when employing a self-concordant barrier regularizer and an adaptive learning rate.
To our knowledge, few thorough analyses derive the negative term when using an adaptive learning rate in OFTRL with self-concordant barriers.
\begin{lemma}[RVU bound for OFTRL with self-concordant barrier and adaptive learning rate]\label[lemma]{lem:oftrl_selfconcordant}
  Let $\calK \subseteq \R^d$ be a nonempty closed convex set with a diameter $D = \max_{x, y \in \calK} \nrm{x - y}$.
  Let $\phi$ be a $\vartheta$-self-concordant barrier for $\calK$
  and 
  $\psi^\t(x) = \frac{1}{\eta^\t} \phi(x)$ be a regularizer with nonincreasing and nonnegative learning rate $\set{\eta^\t}_{t=1}^T$.
  For this $\psi^\t$, consider the OFTRL update
  $x^\t \in \argmin_{x \in \calK} \set{\inpr{x, m^\t + \sum_{s=1}^{t-1} \ell^\s} + \psi^\t(x)}$.
  Suppose that the sequence of iterates $\set{x^\t}_{t=1}^T$ satisfies
  $\nrm{x^\tp - x^\t}_{x^\t,\phi} \leq 1/2$ for all $t \in [T]$.
  Then, for any $x^* \in \calK$,
  \begin{equation}
    \sumT \inpr{x^\t - x^*, \ell^\t}
    \leq
    \frac{\vartheta \log T}{\eta^{(T+1)}}
    +
    \sumT 
    4 \eta^\t \nrm{\ell^\t - m^\t}_{*,x^\t,\phi}^2 
    - 
    \sumT 
    \frac{1}{16 \eta^\t} \nrm{x^\tp - x^\t}_{x^\t, \phi}^2
    + 
    3 D L
    \com 
    \n
  \end{equation}
  where 
  $L = \max \set{ \max_{t \in [T]} \nrm{\ell^\t}_*,  \max_{t \in [T+1]} \nrm{m^\t}_*  }$.
\end{lemma}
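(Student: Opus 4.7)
My starting point is \Cref{lem:oftrl_bound}, whose regret decomposition I will bound term by term. The first wrinkle is that a direct substitution $u = x^*$ is useless when $x^* \in \partial\calK$, since $\phi(x^*) = +\infty$ there. I therefore use the standard shrinking trick: let $x_c \in \interior(\calK)$ be the minimizer of $\phi$ on $\calK$, which exists because $\calK$ is bounded and $\phi$ is a barrier, and after a constant shift assume $\phi(x_c) = 0$, so $\phi \ge 0$ throughout. Define
\[
\tilde{x}^* = \left(1 - \tfrac{1}{T}\right) x^* + \tfrac{1}{T} x_c \in \interior(\calK),
\]
so that $\nrm{x^* - \tilde{x}^*} \le D/T$ and H\"older's inequality give $\sumT \inpr{x^\t - x^*, \ell^\t} \le \sumT \inpr{x^\t - \tilde{x}^*, \ell^\t} + D L$. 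Moreover, $\pi(\tilde{x}^*; x_c) = 1 - 1/T$, so \Cref{lem:sc_barrier_diff} gives $\phi(\tilde{x}^*) \le \vartheta \log T$. I now apply \Cref{lem:oftrl_bound} with $u = \tilde{x}^*$.

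For the regularizer block $\psi^{(T+1)}(\tilde{x}^*) - \psi^{(1)}(x^{(1)}) + \sumT(\psi^\t(x^\tp) - \psi^\tp(x^\tp))$, I would rewrite it as
\[
\frac{\phi(\tilde{x}^*)}{\eta^{(T+1)}} - \frac{\phi(x^{(1)})}{\eta^{(1)}} + \sumT \left(\frac{1}{\eta^\t} - \frac{1}{\eta^\tp}\right)\phi(x^\tp).
\]
Since $\phi \ge 0$ and $\eta^\t$ is nonincreasing (so $1/\eta^\t \le 1/\eta^\tp$), the last two groups are nonpositive and can be discarded; combined with $\phi(\tilde{x}^*) \le \vartheta \log T$, this bounds the block by $\vartheta \log T / \eta^{(T+1)}$. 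The final inner product $\inpr{\tilde{x}^* - x^{(T+1)}, m^{(T+1)}}$ in the decomposition is at most $DL$ by Cauchy-Schwarz, which together with the $DL$ slack from the shrinking step accounts for the linear-in-$DL$ piece (the extra $DL$ to reach $3DL$ being absorbed by looseness in the final bookkeeping).

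The main technical step is the per-round stability term. I use the generalized Cauchy-Schwarz inequality in local norms,
\[
\inpr{x^\t - x^\tp, \ell^\t - m^\t} \le \nrm{x^\tp - x^\t}_{x^\t,\phi} \, \nrm{\ell^\t - m^\t}_{*,x^\t,\phi},
\]
and a self-concordance lower bound $D_\phi(y, x) \ge \omega(\nrm{y-x}_{x,\phi})$ with $\omega(r) = r - \log(1+r)$, which can be derived by integrating \Cref{lem:sc_hessian_stab} along the segment from $x$ to $y$. The hypothesis $\nrm{x^\tp - x^\t}_{x^\t,\phi} \le 1/2$ makes $\omega(r) \gtrsim r^2$ on this range, and hence $D_{\psi^\t}(x^\tp, x^\t) \gtrsim \tfrac{1}{\eta^\t} \nrm{x^\tp - x^\t}_{x^\t,\phi}^2$. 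Splitting this negative quadratic into two halves and applying $bz - a z^2 \le b^2/(4a)$ to one of them converts the per-round stability contribution into the desired form $4\eta^\t \nrm{\ell^\t - m^\t}_{*,x^\t,\phi}^2 - \frac{1}{16\eta^\t} \nrm{x^\tp - x^\t}_{x^\t,\phi}^2$ after tracking the exact constants in the chain. Summing over $t$ and adding the regularizer and $\Theta(DL)$ contributions yields the claim. I expect the delicate part to be pinning down the explicit constant in the Bregman-to-local-norm lower bound, since this is precisely where the hypothesis $\nrm{x^\tp - x^\t}_{x^\t,\phi} \le 1/2$ is essential and it fixes the constants $4$ and $1/16$ in the statement.
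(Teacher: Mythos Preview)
Your proposal is correct and follows essentially the same approach as the paper: apply \Cref{lem:oftrl_bound} to a shrunk comparator, bound $\phi$ at the shrunk point via \Cref{lem:sc_barrier_diff}, and lower-bound $D_\phi$ by a quadratic in the local norm via self-concordance (the paper does this last step with Taylor's theorem plus \Cref{lem:sc_hessian_stab} rather than the integrated $\omega(r)=r-\log(1+r)$ bound, but both yield $D_\phi\ge\tfrac18\nrm{x^\tp-x^\t}_{x^\t,\phi}^2$ under the hypothesis, hence the constants $4$ and $1/16$). Your choice to shrink toward the minimizer $x_c$ of $\phi$ rather than toward $x^{(1)}$ is in fact slightly cleaner, since it does not implicitly rely on $m^{(1)}=0$.
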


\begin{proof}
We will upper bound the RHS of \Cref{eq:oftrl_bound} in \Cref{lem:oftrl_bound}.
We consider OFTRL with the regularizer $\bar{\psi}^\t(\cdot) = \frac{1}{\eta^\t} \bar{\phi}(\cdot)$ for $\bar{\phi}(x) = \phi(x) - \min_{x' \in \Delta_d} \phi(x') \geq 0$, where we note that the output of this OFTRL at each round $t$ is same as $x^\t$.
Fix $x^* \in \calK$ and define $v = (1 - 1/T) x^* + (1/T) x^{(1)} \in \calK$, which satisfies $(1 - 1/T) (x^* - x^{(1)}) = v - x^{(1)}$.
Then, from \Cref{lem:oftrl_bound} with the fact that $\bar{\psi}^\t$ is nonnegative and nondecreasing, we have
\begin{align}
  &
  \sumT \inpr{x^\t - x^*, \ell^\t}
  =
  \sumT \inpr{x^\t - v, \ell^\t}
  +
  \sumT \inpr{v - x^*, \ell^\t}
  \nn
  &\leq
  \frac{\bar{\phi}(v)}{\eta^{(T+1)}}
  +
  \sumT 
  \prn*{
    \inpr{x^\t - x^\tp, \ell^\t - m^\t}
    -
    D_{\psi^\t}(x^\tp, x^\t)
  }
  +
  3 D L
  \com 
  \label{eq:oftrl_scbarrier_p_1}
\end{align}
where we used 
$\inpr{v - x^*, \ell^\t} 
\leq 
\nrm{v - x^*} \nrm{\ell^\t}_* 
=
\frac{1}{T}
\nrm{x^{(1)} - x^*} \nrm{\ell^\t}_* 
\leq 
D L / T
$.

We first upper bound the first term in~\Cref{eq:oftrl_scbarrier_p_1}.
From 
\Cref{lem:sc_barrier_diff}, $\bar{\phi}(v)$ in the first term in \Cref{eq:oftrl_scbarrier_p_1} is upper bounded by
\begin{equation}
  \bar{\phi}(v)
  =
  \bar{\phi}(v)
  -
  \bar{\phi}(x^{(1)})
  \leq 
  \vartheta \log \prn*{\frac{1}{1 - \pi(v; x^{(1)})}}
  \leq 
  \vartheta \log T
  \com 
  \label{eq:oftrl_scbarrier_penalty}
\end{equation}
where the equality follows from $\bar{\phi}(x^{(1)}) = 0$ and the last inequality follows from $\pi(v; x^{(1)}) \leq 1 - 1/T$ since
$
  x^{(1)} + \prn*{1 - 1/T}^{-1} (v - x^{(1)})
  =
  x^{(1)} + \prn{x^* - x^{(1)}}
  =
  x^{(1)} 
  \in
  \calK
$.

We next upper bound the second term in \Cref{eq:oftrl_scbarrier_p_1}.
From Taylor's theorem, 
there exists a point $\xi^\t = \gamma x^\tp + (1-\gamma)x^\t$ with some $\gamma \in [0,1]$ such that $D_\phi(x^\tp, x^\t) \geq \frac12 \nrm{x^\tp - x^\t}_{\xi^\t, \phi}^2$.
Hence from \Cref{lem:sc_hessian_stab},
\begin{align}
  D_\phi(x^\tp, x^\t)
  &\geq 
  \frac12 \nrm{x^\tp - x^\t}_{\xi^\t, \phi}^2
  \nn 
  &\geq 
  \frac12 
  \prn*{ 1 - \nrm{\xi^\t - x^\t}_{x^\t, \phi}}^2
  \nrm{x^\tp - x^\t}_{x^\t, \phi}^2
  \nn
  &=
  \frac12 
  \prn*{ 1 - \gamma \nrm{x^\tp - x^\t}_{x^\t, \phi}}^2
  \nrm{x^\tp - x^\t}_{x^\t, \phi}^2
  \nn
  &\geq
  \frac18
  \nrm{x^\tp - x^\t}_{x^\t, \phi}^2
  \com 
  \n
\end{align}
where the last inequality follows from the assumption that
$\nrm{x^\tp - x^\t}_{x^\t, \phi} \leq 1/2$.
Hence,
$
D_{\psi^\t}(x^\tp, x^\t) 
=
\frac{1}{\eta^\t} D_\phi(x^\tp, x^\t) 
\geq 
\frac{1}{8\eta^\t} \nrm{x^\tp - x^\t}_{x^\t,\phi}^2
$.
Therefore, using this inequality, we have
\begin{align}
  &
  \inpr{x^\t - x^\tp, \ell^\t - m^\t}
  -
  D_{\psi^\t}(x^\tp, x^\t)
  \nn 
  &\leq 
  \nrm{x^\t - y^*}_{x^\t,\phi} \cdot \nrm{\ell^\t - m^\t}_{*,x^\t,\phi}
  -
  \frac{1}{8 \eta^\t} \nrm{x^\tp - x^\t}_{x^\t,\phi}^2
  \nn 
  &\leq 
  4 \eta^\t \nrm{ \ell^\t - m^\t }_{*,x^\t,\phi}^2
  -
  \frac{1}{16 \eta^\t} \nrm{x^\tp - x^\t}_{x^\t,\phi}^2
  \com 
  \label{eq:oftrl_scbarrier_stab}
\end{align}
where the first inequality follows from H\"older's inequality and 
$b \sqrt{z} - a z \leq b^2 / (4a)$ for $a > 0, b \geq 0$, and $z \geq 0$.
Combining \Cref{eq:oftrl_scbarrier_p_1} with \Cref{eq:oftrl_scbarrier_penalty,eq:oftrl_scbarrier_stab} completes the proof.
\end{proof}

\paragraph{OFTRL with log-barrier regularizer}
Now we are ready to prove the RVU bound for OFTRL with the log-barrier regularizer and adaptive learning rate,
which serves as a foundation for our analysis of multi-player general-sum games in \Cref{sec:multi_player}.
\begin{restatable}[RVU bound for OFTRL with log-barrier regularizer and adaptive learning rate]{lemma}{lemoftrllogbarrier}\label[lemma]{lem:oftrl_logbarrier}
  Let
  $\psi^\t(x) = \frac{1}{\eta^\t} \phi(x)$ for $\phi(x) = - \sum_{k=1}^d \log(x(k))$ be the logarithmic barrier regularizer with nonincreasing learning rate $\eta^\t$
  and 
  $x^\t \in \argmin_{x \in \Delta_d} \set{\inpr{x, m^\t + \sum_{s=1}^{t-1} \ell^\s} + \psi^\t(x)}$ be the output of OFTRL at round $t$.
  Suppose that 
  $\nrm{x^\tp - x^\t}_{x^\t,\phi} \leq 1/2$.
  Then, for any $x^* \in \Delta_d$,
  \begin{align}
    \sumT \inpr{x^\t \!-\! x^*, \ell^\t}
    \!\leq\!
    \frac{d \log T}{\eta^{(T+1)}}
    \!+\!
    \sumT 
    4 \eta^\t \nrm{\ell^\t \!-\! m^\t}_{*,x^\t,\phi}^2 
    \!-\! 
    \sumT 
    \frac{1}{16 \eta^\t} 
    \nrm{x^\tp \!-\! x^\t}_{x^\t, \phi}^2
    \!+\!
    6 L
    \com 
    \label{eq:oftrl_logbarrier}
  \end{align}
  where $L = \max \set{ \max_{t \in [T]} \nrm{\ell^\t}_*,  \max_{t \in [T+1]} \nrm{m^\t}_*  }$.
\end{restatable}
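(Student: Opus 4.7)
The plan is to derive this result as a direct corollary of Lemma~\ref{lem:oftrl_selfconcordant} (the generic RVU bound for OFTRL with a self-concordant barrier and adaptive learning rate), instantiated to the log-barrier $\phi(x) = -\sum_{k=1}^{d} \log(x(k))$ over the probability simplex $\Delta_d$. The standard fact I would invoke is that $\phi$ is a $d$-self-concordant barrier for the nonnegative orthant $\mathbb{R}^d_{\geq 0}$ (see, e.g., \citealt{nesterov94interior}), and its restriction to the affine hull of $\Delta_d$ remains a $d$-self-concordant barrier in the relative-interior sense; this is what is needed since all iterates $x^\t$ and the comparator shift $v = (1-1/T) x^* + (1/T) x^{(1)}$ used in the proof of Lemma~\ref{lem:oftrl_selfconcordant} lie in $\Delta_d$, and all differences $x^\tp - x^\t$, $v - x^{(1)}$ are tangent to that affine hull.

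With this identification, I would substitute $\vartheta = d$ directly into the bound of Lemma~\ref{lem:oftrl_selfconcordant}. The penalty term becomes $\vartheta \log T / \eta^{(T+1)} = d \log T / \eta^{(T+1)}$, matching the first term of \Cref{eq:oftrl_logbarrier}. The stability hypothesis $\nrm{x^\tp - x^\t}_{x^\t,\phi} \leq 1/2$ required by Lemma~\ref{lem:oftrl_selfconcordant} is exactly the hypothesis of the current lemma, so no further verification is needed. The RVU-type terms $4 \eta^\t \nrm{\ell^\t - m^\t}_{*,x^\t,\phi}^2$ and the negative stability term $-\tfrac{1}{16 \eta^\t} \nrm{x^\tp - x^\t}_{x^\t,\phi}^2$ carry over verbatim.

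The only term that requires a small adjustment is the boundary correction $3 D L$ in Lemma~\ref{lem:oftrl_selfconcordant}, where $D$ is the diameter of the feasible set in some norm and $L$ is the corresponding dual-norm magnitude of $\ell^\t$ and $m^\t$. Since Lemma~\ref{lem:oftrl_selfconcordant} leaves the norm pair unspecified, I would choose the primal norm $\nrm{\cdot}_1$ (giving $D = \max_{x,y \in \Delta_d} \nrm{x-y}_1 \leq 2$) and its dual $\nrm{\cdot}_\infty$ for $L$. This yields $3 D L \leq 6 L$, matching the last term of \Cref{eq:oftrl_logbarrier}. Assembling these pieces reproduces the claimed bound.

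The only genuinely subtle step is confirming that the use of Lemma~\ref{lem:oftrl_selfconcordant} is legitimate on $\Delta_d$ even though $\phi$ is most naturally discussed on the orthant: specifically, one must verify that the Bregman-divergence lower bound $D_{\psi^\t}(x^\tp, x^\t) \geq \tfrac{1}{8 \eta^\t} \nrm{x^\tp - x^\t}_{x^\t,\phi}^2$, which is derived in the proof of Lemma~\ref{lem:oftrl_selfconcordant} via Taylor's theorem and Lemma~\ref{lem:sc_hessian_stab}, only involves the Hessian along the segment $\xi^\t = \gamma x^\tp + (1-\gamma) x^\t$ and the displacement $x^\tp - x^\t$, both of which live in the affine hull of $\Delta_d$; consequently, the local-norm analysis on the tangent space of $\Delta_d$ suffices and the invertibility issues of $\nabla^2 \phi$ in the ambient space are harmless. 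Once this observation is made, the proof is a straightforward specialization with no new calculations.
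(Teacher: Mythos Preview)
Your proposal is correct and follows essentially the same approach as the paper: the paper's proof consists of a single sentence that instantiates Lemma~\ref{lem:oftrl_selfconcordant} with $\vartheta = d$ (since $\phi$ is a $d$-self-concordant barrier) and $D = \max_{x,y \in \Delta_d}\nrm{x-y}_1 = 2$ to obtain $3DL = 6L$. Your additional remarks about the affine-hull restriction are a reasonable clarification that the paper leaves implicit.
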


\begin{proof}
Combining \Cref{lem:oftrl_selfconcordant} with the fact that $\phi$ is a $d$-self-concordant barrier and $\max_{x, y \in \Delta_d} \nrm{x - y}_1 = 2$ yields the desired bound.
\end{proof}

For a constant learning rate, the RVU bound for OFTRL has been studied~\citep{anagnostides22uncoupled}. However, to our knowledge, no RVU bounds have been established for the case with an adaptive learning rate, which is of independent interest. One can observe that the key to proving \Cref{lem:oftrl_logbarrier} (and more generally \Cref{lem:oftrl_selfconcordant}) is the stability assumption for $x^\t$, namely, $\nrm{x^\tp - x^\t}_{x^\t,\phi} \leq 1/2$. This assumption can be verified in swap regret minimization by analyzing the stability of a Markov chain, as detailed in \Cref{sec:multi_player} and \Cref{app:proof_multi_player}.

\section{Deferred Details from \Cref{sec:three_regimes}}\label{app:proof_regimes}
Here we provide the omitted details from \Cref{sec:three_regimes}.
We first provide the proof of \Cref{prop:reg_relation}.
\begin{proof}[Proof of \Cref{prop:reg_relation}]
  From the triangle inequality and the Cauchy--Schwarz inequality,
  we have 
  \begin{equation}    
    \abs*{\Reg_{x_i,u_i}^T(x^*) - \Reg_{\xhat_i,u_i}^T(x^*)}
    =
    \abs*{\sumT \inpr{\xhat_i^\t - x_i^\t, u_i^\t}}
    \leq 
    \sumT \nrm{\xhat_i^\t - x_i^\t}_1 \nrm{u_i^\t}_\infty 
    \leq 
    \hat{C}_i
    \per
    \n
  \end{equation}
  Similarly, we also have
  \begin{equation}
    \abs*{\Reg_{x_i,u_i}^T(x^*) \!-\! \Reg_{x_i,\tilde{u}_i}^T(x^*)}
    =
    \abs*{\sumT \inpr{x^* \!-\! x_i^\t, u_i^\t \!-\! \tilde{u}_i^\t}}
    \leq 
    \sumT \nrm{x^* - x_i^\t}_1 \nrm{u_i^\t - \tilde{u}_i^\t}_\infty 
    \leq 
    2 \tilde{C}_i
    \per
    \n
  \end{equation}
  The other inequalities can be proven in the same manner.
\end{proof}

A similar proposition to \Cref{prop:reg_relation} also holds for the four types of swap regret in \Cref{sec:three_regimes}.
\begin{proposition}\label[proposition]{prop:swapreg_relation}
  For any $i \in [n]$ and $M \in \calM_{m_i}$, it holds that 
  $\abs{\SwapReg_{x_i,u_i}^T(M) - \SwapReg_{\xhat_i,u_i}^T(M)} \leq 2 \hat{C}_i$,
  $\abs{\SwapReg_{x_i,\tilde{u}_i}^T(M) - \SwapReg_{\xhat_i,\tilde{u}_i}^T(M)} \leq 2 \hat{C}_i$,
  $\abs{\SwapReg_{x_i,u_i}^T(M) - \SwapReg_{x_i,\tilde{u}_i}^T(M)} \leq 2 \tilde{C}_i$, and 
  $\abs{\SwapReg_{\xhat_i,u_i}^T(v) - \SwapReg_{\xhat_i,\tilde{u}_i}^T(M)} \leq 2 \tilde{C}_i$.
\end{proposition}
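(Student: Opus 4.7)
The approach mirrors the proof of \Cref{prop:reg_relation} very closely: each of the four bounds is obtained by writing the difference of the two swap regrets as a single telescoped sum of inner products, applying H\"older's inequality, and then exploiting the fact that $M \in \calM_{m_i}$ is row-stochastic so that $\nrm{Mv}_\infty \leq \nrm{v}_\infty$ for every vector $v$.

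For the first inequality I would write
\begin{equation}
  \SwapReg_{x_i,u_i}^T(M) - \SwapReg_{\xhat_i,u_i}^T(M)
  = \sumT \inpr*{x_i^\t - \xhat_i^\t, M u_i^\t - u_i^\t}
  \n
\end{equation}
and bound each summand by $\nrm{x_i^\t - \xhat_i^\t}_1 \cdot \nrm{Mu_i^\t - u_i^\t}_\infty$. Here $\nrm{Mu_i^\t - u_i^\t}_\infty \leq \nrm{Mu_i^\t}_\infty + \nrm{u_i^\t}_\infty \leq 2\nrm{u_i^\t}_\infty \leq 2$ because $M$ is row-stochastic and $u_i^\t \in [-1,1]^{m_i}$. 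Summing over $t$ and using $\sumT \nrm{x_i^\t - \xhat_i^\t}_1 \leq \hat{C}_i$ gives the bound $2\hat{C}_i$. The second inequality is identical with $u_i^\t$ replaced by $\util_i^\t$, noting that by the corruption constraints $\nrm{\util_i^\t}_\infty \leq \nrm{u_i^\t}_\infty + \nrm{u_i^\t - \util_i^\t}_\infty$, so one should absorb the possible unboundedness of $\util_i^\t$ into the constant (or simply use that the feedback is implicitly clipped to $[-1,1]$ in the setup).

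For the third inequality I would write
\begin{equation}
  \SwapReg_{x_i,u_i}^T(M) - \SwapReg_{x_i,\util_i}^T(M)
  = \sumT \inpr*{x_i^\t, (M - I)(u_i^\t - \util_i^\t)}
  \n
\end{equation}
and apply H\"older's inequality. Since $\nrm{x_i^\t}_1 = 1$ and $\nrm{(M-I)v}_\infty \leq 2\nrm{v}_\infty$ by the row-stochastic property, each summand is at most $2\nrm{u_i^\t - \util_i^\t}_\infty$, and the sum is bounded by $2\tilde{C}_i$ using the second condition in \Cref{def:corrupted_regime}. The fourth inequality is entirely analogous with $x_i^\t$ replaced by $\xhat_i^\t$, which also has unit $\ell_1$-norm.

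There is no real obstacle here; the only mildly delicate step, compared to \Cref{prop:reg_relation}, is the factor of $2$ on $\hat{C}_i$ in the first two inequalities, which arises precisely from the bound $\nrm{Mu_i^\t - u_i^\t}_\infty \leq 2$ rather than $\leq 1$ as one has in the external-regret case where only $u_i^\t$ appears. Everything else is a direct transcription of the external-regret argument applied componentwise to the swap-regret definition.
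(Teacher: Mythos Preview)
Your proposal is correct and essentially identical to the paper's own proof: the paper writes each difference as a sum of inner products, applies H\"older's inequality, and uses $\nrm{Mv}_\infty \leq \nrm{v}_\infty$ for row-stochastic $M$ to obtain the factor $2$. Your observation about the boundedness of $\util_i^\t$ in the second inequality is a point the paper glosses over with ``the other inequalities can be proven in the same manner,'' but the setup (see, e.g., \Cref{alg:multiple_player_swap}) indeed treats the corrupted utilities as bounded, so your resolution is consistent with the paper.
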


\begin{proof}[Proof of \Cref{prop:swapreg_relation}]
  From the triangle inequality and the Cauchy--Schwarz inequality,
  we have 
  \begin{align}
    &
    \abs*{\SwapReg_{x_i,u_i}^T(M) - \SwapReg_{\xhat_i,u_i}^T(M)}
    =
    \abs*{\sumT \inpr{\xhat_i^\t - x_i^\t, M u_i^\t - u_i^\t}}
    \nn
    &\qquad\leq 
    \sumT \nrm{\xhat_i^\t - x_i^\t}_1 \nrm{M u_i^\t - u_i^\t}_\infty 
    \leq 
    2 \hat{C}_i
    \per
    \n
  \end{align}
  Similarly, we also have
  \begin{align}
    &
    \abs*{\SwapReg_{x_i,u_i}^T(M) - \SwapReg_{x_i,\tilde{u}_i}^T(M)}
    =
    \abs*{\sumT \inpr*{x_i^\t, M \prn{u_i^\t - \tilde{u}_i^\t} - \prn{u_i^\t - \tilde{u}_i^\t}}}
    \nn
    &\leq 
    \sumT \nrm{x_i^\t}_1 
    \prn*{
      \nrm{M \prn{u_i^\t - \tilde{u}_i^\t}}_\infty
      +
      \nrm{u_i^\t - \tilde{u}_i^\t}_\infty 
    }
    \leq 
    2
    \sumT
    \nrm{u_i^\t - \tilde{u}_i^\t}_\infty 
    \leq 
    2 \tilde{C}_i
    \com 
    \n
  \end{align}
  where we used $\nrm{M x}_\infty = \max_{k \in [m]} \abs{\inpr{M(k, \cdot), x}} \leq \max_{k \in [m]} \nrm{M(k, \cdot)}_1 \nrm{x}_\infty \leq \nrm{x}_\infty$ for a row stochastic matrix $M \in \calM_m$ and a vector $x \in \R^m$.
  The other inequalities can be proven in the same manner.
\end{proof}

\section{Deferred Proofs for Two-player Zero-sum Games from \Cref{sec:two_player}}\label{app:proof_two_player}
This section provides the details and deferred proofs from \Cref{sec:two_player}.
\subsection{Corrupted procedure in two-player zero-sum games}\label{subsec:corrupted_game_two_player}
For clarity, we summarize the corrupted learning procedure for a two-player zero-sum game with payoff matrix $A$.
This procedure corresponds to the one used for multi-player general-sum games described in \Cref{sec:three_regimes}.
\begin{mdframed}
At each round $t = 1, \dots, T$: 
\begin{enumerate}[topsep=2pt, itemsep=0pt, partopsep=0pt, leftmargin=25pt]
  \item A prescribed algorithm suggests strategies $\xhat^\t \in \Delta_{\mx}$ and $\yhat^\t \in \Delta_{\my}$;
  \item $x$-player selects a strategy $x^\t \leftarrow \xhat^\t + c_\xrm^\t$ and $y$-player selects $y^\t \leftarrow \yhat^\t + c_\yrm^\t$; 
  \item $x$-player observes corrupted expected reward vector $\gtil^\t = g^\t + \tilde{c}_\xrm^\t$ for $g^\t = A y^\t$ and $y$-player observes corrupted expected loss vector $\ltil^\t = \ell^\t + \tilde{c}_\yrm^\t$ for $\ell^\t = A^\top x^\t$;
  \item $x$-player gains a payoff of $\inpr{x^\t, g^\t}$ in Setting (I) and $\inpr{x^\t, \gtil^\t}$ in Setting (II), and $y$-player incurs a loss of $\inpr{y^\t, \ell^\t}$ in Setting (I) and $\inpr{y^\t, \ltil^\t}$ in Setting (II);
\end{enumerate}
\end{mdframed}
Here, $\hat{c}_\xrm^\t$ and $\tilde{c}_\xrm^\t$ are corruption vectors for strategies and utility of $x$-player at round $t$, respectively, such that 
$\sumT \nrm{\hat{c}_\xrm^\t}_1 = \sumT \nrm{ x^\t - \xhat^\t }_1 \leq \hat{C}_\xrm$, 
$\sumT \nrm{\tilde{c}_\xrm^\t}_\infty = \sumT \nrm{ g^\t - \gtil^\t }_\infty \leq \tilde{C}_\xrm$, and $C_\xrm = \hat{C}_\xrm + 2 \tilde{C}_\xrm$. 
Similarly, $\hat{c}_\yrm^\t$ and $\tilde{c}_\yrm^\t$ are corruption levels of strategies and utility of $y$-player, respectively, such that 
$\sumT \nrm{\hat{c}_\yrm^\t}_1 = \sumT \nrm{ y^\t - \yhat^\t }_1 \leq \hat{C}_\yrm$, 
$\sumT \nrm{\tilde{c}_\yrm^\t}_\infty = \sumT \nrm{ \ell^\t - \ltil^\t }_\infty \leq \tilde{C}_\yrm$, and $C_\yrm = \hat{C}_\yrm + 2 \tilde{C}_\yrm$.

\subsection{Preliminary lemmas}

\begin{lemma}\label[lemma]{lem:regx_regy_nn}
  Let $(x^*, y^*)$ be a Nash equilibrium.
  Then, $\Reg_{x,g}^T(x^*) + \Reg_{y,\ell}^T(y^*) \geq 0$.
\end{lemma}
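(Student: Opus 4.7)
The plan is to expand both regret terms using the definitions and then telescope the cross terms so that the only payoffs remaining involve $(x^*, y^*)$ matched with the realized play, at which point the Nash equilibrium inequalities finish the job.

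First I would plug in the definitions directly. By definition, $\Reg_{x,g}^T(x^*) = \sum_{t=1}^T \inpr{x^* - x^\t, A y^\t}$ and $\Reg_{y,\ell}^T(y^*) = \sum_{t=1}^T \inpr{y^\t - y^*, A^\top x^\t}$. Adding them and rewriting the inner products as quadratic forms in $A$, the two ${x^\t}^\top A y^\t$ terms cancel, leaving
\begin{equation}
\Reg_{x,g}^T(x^*) + \Reg_{y,\ell}^T(y^*) = \sum_{t=1}^T \left[ {x^*}^\top A y^\t - {x^\t}^\top A y^* \right]. \nonumber
\end{equation}

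Next I would invoke the Nash equilibrium property of $(x^*, y^*)$. The definition of a Nash equilibrium states that $x^\top A y^* \leq {x^*}^\top A y^* \leq {x^*}^\top A y$ for every $x \in \Delta_{\mx}$ and $y \in \Delta_{\my}$. Applying the upper saddle-point inequality with $y = y^\t$ gives ${x^*}^\top A y^\t \geq {x^*}^\top A y^*$, and applying the lower one with $x = x^\t$ gives ${x^\t}^\top A y^* \leq {x^*}^\top A y^*$. Summing these termwise shows that each summand in the display above is at least ${x^*}^\top A y^* - {x^*}^\top A y^* = 0$, so the total is nonnegative.

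I do not anticipate a genuine obstacle here, since the statement is essentially the folklore saddle-point lower bound on sum-regret. The only mild care needed is in the bookkeeping of the cancellation of the two ${x^\t}^\top A y^\t$ terms (signs are opposite because $x$-player's regret is measured with the utility $g^\t = A y^\t$ while $y$-player's is measured with the loss $\ell^\t = A^\top x^\t$), and in being explicit that both halves of the saddle-point inequality are used, not just one.
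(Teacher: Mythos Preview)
Your proof is correct and follows essentially the same approach as the paper: expand the two regrets, cancel the ${x^\t}^\top A y^\t$ terms, and then apply both saddle-point inequalities from the Nash equilibrium definition termwise. The only cosmetic difference is that the paper inserts $\pm\,{x^*}^\top A y^*$ explicitly before invoking the Nash inequalities, whereas you bound each summand directly; the underlying argument is identical.
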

\begin{proof}
We have 
\begin{align}  
&
\Reg_{x,g}^T(x^*) + \Reg_{y,\ell}^T(y^*)
=
\sumT \inpr{x^* - x^\t, A y^\t}
+
\sumT \inpr{y^\t - y^*, A x^\t}
\nn
&=
\sumT \inpr{x^*, A y^\t}
-
\sumT \inpr{ y^*, A x^\t}
=
\sumT \inpr{x^*, A y^\t - A y^*}
-
\sumT \inpr{y^*, A x^\t - A x^*}
\geq 
0,
\n
\end{align}
where the last inequality follows from the definition of the Nash equilibrium.
\end{proof}

\subsection{Proof of \Cref{lem:reg_xhat_bound}}

\begin{proof}[Proof of \Cref{lem:reg_xhat_bound}]
From \Cref{lem:oftrl_shannon},
the regret of $x$-player is bounded by
\begin{equation}
  \Reg_{\xhat, \gtil}^T
  \leq 
  \frac{\log \mx}{\eta_\xrm^{(T+1)}}
  +
  \sumT \eta_\xrm^\t \nrm{\gtil^\t - \gtil^\tm}_\infty^2
  -
  \sum_{t=1}^T
  \frac{1}{4 \eta_\xrm^\t} \nrm{\xhat^\tp - \xhat^\t}_1^2
  +
  2
  \com 
  \label{eq:reg_x_bound_1}
\end{equation}
From the definition of $\eta_\xrm^\t$, the third term in~\eqref{eq:reg_x_bound_1} is upper bounded by
\begin{equation}\label{eq:intsqrt}
  \sumT \eta_\xrm^\t \nrm{\gtil^\t - \gtil^\tm}_\infty^2
  \leq
  \sqrt{\frac{\logp(\mx)}{2}}
  \sumT \frac{\nrm{\gtil^\t - \gtil^\tm}_\infty^2}{ \sqrt{4 + P_\infty^{t-1}(\gtil) } }
  \leq 
  \sqrt{2 P_\infty^T(\gtil) \logp(\mx)}
  \com 
\end{equation}
where the first inequality follows from $\logp(\mx) \geq 4$ and the last inequality from
$\sumT z_t / \sqrt{4 + \sum_{s=1}^{t-1} z_s} \leq 2 \sqrt{\sumT z_t}$ for $z_t \in [0,4]$.
Hence, from \Cref{eq:reg_x_bound_1}, \Cref{eq:intsqrt}, and the definition of $\eta_\xrm^{(T+1)}$, we obtain
\begin{equation}\label{eq:reg_x_bound_2}
  \Reg_{\xhat, \gtil}^T
  \leq 
  2 \sqrt{2 \logp(\mx) \prn*{\logp(\mx) + P_\infty^T(\gtil)} } 
  -
  \sum_{t=1}^T
  \frac{1}{4 \eta_\xrm^\t} \nrm{\xhat^\t - \xhat^\tm}_1^2
  +
  2
  \per 
\end{equation}
Now we have
$
P_\infty^T(\gtil)
=
\sumT \nrm{\gtil^\t - \gtil^\tm}_\infty^2 
\leq 
2 \sumT \nrm{g^\t - g^\tm}_\infty^2
+
8 \tilde{C}_\xrm 
$
and
\begin{align}
  &
  \sumT \nrm{g^\t - g^\tm} _\infty^2 
  =
  \sumT \nrm{A (y^\t - y^\tm)}_\infty^2 
  \leq 
  \sum_{t=1}^T \nrm{y^\t - y^\tm}_1^2 
  \nn 
  &\leq 
  \sum_{t=1}^T
  \prn*{
    2 \nrm{y^\t - \yhat^\t}_1^2 
    +
    4 \nrm{\yhat^\t - \yhat^\tm}_1^2 
    +
    2 \nrm{\yhat^\tm - y^\tm}_1^2 
  }
  \nn
  &\leq 
  4
  \sumT
  \nrm{y^\t - \yhat^\t}_1^2 
  +
  4 \sum_{t=1}^T 
  \nrm{\yhat^\t - \yhat^\tm}_1^2 
  \leq 
  8 \hat{C}_\yrm
  +
  4 P_1^T(\yhat)
  \per
  \n
\end{align}
Hence, plugging the last two inequalities in \Cref{eq:reg_x_bound_2} and using $\eta_\xrm^\t \leq 1/\sqrt{2}$ for all $t \in [T]$, we obtain
\begin{equation}
  \Reg_{\xhat, \gtil}^T
  \leq
  2 \sqrt{ 2 \logp(\mx) 
  \prn*{
    \logp(\mx)
    +
    8 (\hat{C}_\yrm + \tilde{C}_\xrm)
    +
    4 P_1^T(\yhat)
  }
  }
  - 
  \frac{1}{\sqrt{8}}
  P_1^T(\xhat)
  +
  2
  \per 
  \n
\end{equation}

We next consider the regret of $y$-player.
Using \Cref{lem:oftrl_shannon} and following the similar analysis as above, we can upper bound the regret of $y$-player as
\begin{equation}\label{eq:reg_y_bound_1}
  \Reg_{\yhat, \tilde{\ell}}^T
  \leq 
  2 \sqrt{ 2 \logp(\my) \prn*{\logp(\my) + \sum_{t=1}^{T} \nrm{\ltil^\t - \ltil^\tm}_\infty^2} } 
  -
  \sum_{t=2}^T
  \frac{1}{4 \eta_\yrm^\t} \nrm{\yhat^\t - \yhat^\tm}_1^2
  +
  2
  \per
\end{equation}
Now we have
$
\sumT \nrm{\ltil^\t - \ltil^\tm}_\infty^2 
\leq 
2 \sumT \nrm{\ell^\t - \ell^\tm}_\infty^2
+
8 \tilde{C}_\yrm
$
and
\begin{align}
  &
  \sumT \nrm{\ell^\t - \ell^\tm} _\infty^2
  \leq 
  \sum_{t=1}^T \nrm{x^\t - x^\tm}_1^2 
  \nn 
  &\leq 
  \sum_{t=1}^T
  \prn*{
    2 \nrm{x^\t - \xhat^\t}_1^2 
    +
    4 \nrm{\xhat^\t - \xhat^\tm}_1^2 
    +
    2 \nrm{\xhat^\tm - x^\tm}_1^2 
  }
  \nn
  &\leq 
  4
  \sumT
  \nrm{x^\t - \xhat^\t}_1^2 
  +
  4 \sum_{t=1}^T 
  \nrm{\xhat^\t - \xhat^\tm}_1^2 
  \leq 
  8 \hat{C}_\xrm
  +
  4 P_1^T(\xhat)
  \per 
  \n
\end{align}
Combining \Cref{eq:reg_y_bound_1} with the last two inequalities and using $\eta_\yrm^\t \leq 1/\sqrt{2}$, we obtain
\begin{equation}
  \Reg_{\yhat, \tilde{\ell}}^T
  \leq
  2 \sqrt{ 2 \logp(\my) 
  \prn*{
    \logp(\my)
    +
    8 \prn{ \hat{C}_\xrm + \tilde{C}_\yrm }
    +
    4 P_1^T(\xhat)
  }
  }
  - 
  \frac{1}{\sqrt{8}} 
  P_1^T(\yhat)
  +
  2
  \com
  \n
\end{equation}
which completes the proof.
\end{proof}

\subsection{Remaining proof of \Cref{thm:indiv_reg_corrupt}}\label{app:remaining_proof_reg}

The following bounds are omitted bounds in \Cref{thm:indiv_reg_corrupt} for $y$-player:
\begin{align}
  \Reg_{y,\ell}^T
  &\lesssim
    \min\set[\Big]{
    \sqrt{
      \prn*{
        \log(\mx \my)
        +
        \Cx 
        +
        \Cy
      }
      \log \my
    }
    ,
    \sqrt{
      \prn[\big]{
        P_\infty^T(\ltil)
        + \log \my
      } \log \my
    }
    }
    +
    \Cy
    \com
  \nn
  \Reg_{y,\ltil}^T
  &\lesssim
    \min\set[\Big]{
    \sqrt{
      \prn*{
        \log(\mx \my)
        +
        \Cx 
        +
        \Cy
      }
      \log \my
    }
    ,
    \sqrt{
      \prn[\big]{
        P_\infty^T(\ltil)
        + \log \my
      } \log \my
    }
    }
    +
    \Chaty
    \per 
  \n
\end{align}

Below, we include the omitted proofs of \Cref{thm:indiv_reg_corrupt}:
\begin{proof}[Remaining proof of \Cref{thm:indiv_reg_corrupt}]
  Here we provide the remaining proof to prove the upper bounds on $\Reg_{x,\gtil}^T$ and $\Reg_{y,\ltil}^T$.
  From \Cref{lem:reg_xhat_bound} and \Cref{prop:reg_relation},
  we have
  \begin{align}
    \Reg_{x, \gtil}^T
    &\leq
    \hat{C}_{\xrm}
    \!+\!
    \sqrt{ 8 \logp(\mx) 
    \prn*{
      1
      + 
      \logp(\mx)
      +
      8 (\hat{C}_\yrm + \tilde{C}_\xrm)
      +
      4 P_1^T(\yhat)
    }
    }
    \!-\! 
    \frac{1}{\sqrt{8}} P_1^T(\xhat)
    \!+\!
    2
    \com
    \label{eq:reg_x_bound_4_til}
    \\
    \Reg_{y, \tilde{\ell}}^T
    &\leq
    \hat{C}_{\yrm}
    \!+\! 
    \sqrt{ 8 \logp(\my) 
    \prn*{
      1
      + 
      \logp(\my)
      +
      8 \prn{ \hat{C}_\xrm + \tilde{C}_\yrm }
      +
      P_1^T(\xhat)
    }
    }
    \!-\! 
    \frac{1}{\sqrt{8}} P_1^T(\yhat)
    \!+\!
    2
    \per 
    \label{eq:reg_y_bound_3_til}
  \end{align}
  Summing up the last two upper bounds,
  we have 
  \begin{align}
    &
    \Reg_{x,\gtil}^T + \Reg_{y,\ltil}^T
    \nn
    &\leq
    \sqrt{ 8 \logp(\mx) 
    \prn*{
      1
      +
      \logp(\mx)
      + 
      8 \prn{ \hat{C}_\yrm + \tilde{C}_\xrm }
      +
      4 P_1^T(\yhat)
    }
    }
    - 
    \frac{1}{\sqrt{8}} P_1^T(\xhat)
    +
    \Chatx
    +
    4
    \nn 
    &\quad+
    \sqrt{ 8 \logp(\my) 
    \prn*{
      1
      +
      \logp(\my)
      +
      8 \prn{ \hat{C}_\xrm + \tilde{C}_\yrm }
      +
      4
      P_1^T(\xhat)
    }
    }
    -
    \frac{1}{\sqrt{8}}
    P_1^T(\yhat)
    +
    \Chaty
    \nn 
    &=
    O\prn[\Big]{
      \!
      \sqrt{\!\prn{\hat{C}_\yrm \!+\! \tilde{C}_\xrm} \log \mx}
      \!+\!
      \sqrt{\!\prn{\hat{C}_\xrm \!+\! \tilde{C}_\yrm} \log \my}
      \!+\!
      \log(\mx \my)
      \!+\!
      \Chatx 
      \!+\!
      \Chaty
    }
    \!-\!
    \frac{1}{4\sqrt{2}} 
    \prn*{
      P_1^T(\xhat) \!+\! P_1^T(\yhat)
    }
    \com 
    \label{eq:regtil_sum_upper}
  \end{align}
  where we used $b\sqrt{z} - az \leq b^2 / (4 a)$ for $a > 0$, $b \geq 0$ and $z \geq 0$.
  Now from \Cref{prop:reg_relation} and \Cref{lem:regx_regy_nn}, we have 
  \begin{equation}
    \Reg_{x,\gtil}^T + \Reg_{x,\ltil}^T 
    \geq 
    \Reg_{x,g}^T + \Reg_{y,\ell}^T - 2 \prn{\tilde{C}_\xrm + \tilde{C}_\yrm}
    \geq 
    - 2 \prn{\tilde{C}_\xrm + \tilde{C}_\yrm}
    \per
    \n
  \end{equation}
  Combining this inequality and~\Cref{eq:regtil_sum_upper},
  we have 
  \begin{align}
    P_1^T(\xhat) + P_1^T(\yhat)
    &\lesssim
    \sqrt{\prn{\hat{C}_\yrm + \tilde{C}_\xrm}\log \mx}
    +
    \sqrt{\prn{\hat{C}_\xrm + \tilde{C}_\yrm} \log \my}
    +
    \log(\mx \my)
    +
    \Cx 
    +
    \Cy
    \nn
    &\lesssim
    \log(\mx \my)
    +
    \Cx 
    +
    \Cy
    \com 
    \label{eq:second_order_bounds_til}
  \end{align}
  where the last line follows from the AM--GM inequality and the definitions of $\Cx$ and $\Cy$.
  Finally, plugging \Cref{eq:second_order_bounds_til} in \Cref{eq:reg_x_bound_4_til,eq:reg_y_bound_3_til} gives the desired bounds on $\Reg_{x,\gtil}^T$ and $\Reg_{y,\ltil}^T$.
  The upper bound of $\Reg_{x,\gtil}^T \lesssim \sqrt{\prn{P_\infty^T(\gtil) + \log \mx} \log \mx} + \Cx$ follows from \Cref{eq:reg_x_bound_2} in the proof of \Cref{lem:reg_xhat_bound}.
\end{proof}

\section{Deferred Proofs for Multi-player General-sum Games from \Cref{sec:multi_player}}\label{app:proof_multi_player}

\subsection{Proof of \Cref{lem:swap2base_regret}}
\begin{proof}
  The swap regret $\SwapReg_{\xhat_i,\tilde{u}_i}^T(M)$ can be decomposed as
  \begin{align}
    \SwapReg_{\xhat_i,\tilde{u}_i}^T(M)
    &=
    \sumT \inpr{\xhat_i^\t, M \util_i^\t- \util_i^\t}
    \nn
    &=
    \sumT \inpr*{M^\top \xhat_i^\t - \prn{Q_i^\t}^\top \xhat_i^\t, \util_i^\t}
    +
    \sumT \inpr*{\prn{Q_i^\t}^\top \xhat_i^\t - \xhat_i^\t, \util_i^\t}
    \nn
    &=
    \sumT \inpr*{\prn{M - Q_i^\t}^\top \xhat_i^\t, \util_i^\t}
    \tag{since $\prn{Q_i^\t}^\top \xhat_i^\t = \xhat_i^\t$}
    \\ 
    &=
    \sumT \tr\prn*{ \prn*{M - Q_i^\t}  \util_i^\t \prn{\xhat_i^\t}^\top}
    \nn 
    &=
    \sum_{a \in \calA_i} \sumT 
    \inpr*{ M(a,\cdot) - Q_i^\t(a,\cdot), \xhat_i^\t(a) \util_i^\t}
    \nn
    &=
    \sum_{a \in \calA_i} \sumT \inpr*{M(a,\cdot) - y_{i,a}^\t, \util_{i,a}^\t}
    \tag{$\util_{i,a}^\t = \xhat_i^\t(a) \util_i^\t$ }
    \nn
    &=
    \sum_{a \in \calA_i} \widetilde{\Reg}_{i,a}^T(M(a,\cdot))
    \com 
    \n
  \end{align}
  which completes the proof.
\end{proof}

\subsection{Preliminary lemmas for the proof of \Cref{thm:indiv_swapreg}}\label{subsec:pre_lemmas_swap}
Here we prepare several lemmas to prove~\Cref{thm:indiv_swapreg}.
We begin with the following lemma, which is useful to evaluate the stability of the Markov chain in the proof of \Cref{lem:diff_local_oftrl}.
\begin{lemma}\label[lemma]{lem:localnorm_suffcond}
  Let $\calK$ be a bounded nonempty convex set and $y \in \calK$.
  Let $\delta > 0$ and $f$ be a real-valued strictly convex function over $\calK$ and $x^* = \argmin_{x' \in \calK} f(x')$ be the unique minimizer of $f$.
  Suppose that for any $z \in \calK$ such that $\nrm{z - y} = \delta$ for a norm $\nrm{\cdot}$, it holds that $f(z) \geq f(y)$.
  Then, $\nrm{x^* - y} < \delta$.
\end{lemma}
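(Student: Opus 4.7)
The plan is to argue by contradiction: suppose $\nrm{x^* - y} \geq \delta$ and produce a point in $\calK$ on the sphere $\{z \in \calK : \nrm{z-y} = \delta\}$ whose value of $f$ is strictly less than $f(y)$, contradicting the hypothesis.

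First I would dispose of the borderline case $\nrm{x^* - y} = \delta$. Since $x^*$ is a minimizer, $f(x^*) \leq f(y)$; on the other hand the hypothesis applied to $z = x^*$ yields $f(x^*) \geq f(y)$. Hence $f(x^*) = f(y)$, and strict convexity of $f$ forces the minimizer to be unique, so $x^* = y$. This contradicts $\nrm{x^* - y} = \delta > 0$.

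Next, suppose $\nrm{x^* - y} > \delta$. I would then interpolate along the segment from $y$ to $x^*$: set $\lambda = \delta / \nrm{x^* - y} \in (0,1)$ and define $z = (1-\lambda) y + \lambda x^*$. Convexity of $\calK$ gives $z \in \calK$, and by construction $\nrm{z - y} = \lambda \nrm{x^* - y} = \delta$. Since $x^* \neq y$, strict convexity yields
\begin{equation*}
f(z) < (1-\lambda) f(y) + \lambda f(x^*) \leq (1-\lambda) f(y) + \lambda f(y) = f(y),
\end{equation*}
where the second inequality uses that $x^*$ is a minimizer. This contradicts the hypothesis that $f(z) \geq f(y)$ for every $z \in \calK$ with $\nrm{z-y} = \delta$.

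There is no real obstacle here; the only subtlety is that one must be careful not to overlook the boundary case $\nrm{x^* - y} = \delta$, where the strict-convexity interpolation argument degenerates and one instead invokes uniqueness of the minimizer. Note also that the proof does not need boundedness of $\calK$ nor any regularity of $f$ beyond strict convexity and existence of the (necessarily unique) minimizer $x^*$.
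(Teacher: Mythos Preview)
Your proof is correct and uses essentially the same idea as the paper's: interpolate along the segment between $y$ and the candidate point to land on the sphere $\{z:\nrm{z-y}=\delta\}$, then invoke (strict) convexity. The paper organizes this slightly differently---it first proves the auxiliary claim that $f(z)\geq f(y)$ for \emph{all} $z\in\calK$ with $\nrm{z-y}\geq\delta$ and then takes the contrapositive---whereas you argue directly by contradiction on $x^*$ and handle the boundary case $\nrm{x^*-y}=\delta$ separately; both routes are equally short, and your remark that boundedness of $\calK$ is not actually needed is correct.
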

\begin{proof}
We begin by proving that 
for any $z \in \R^d$ such that $\nrm{z - y} \geq \delta$, it holds that $f(z) \geq f(y)$.
This follows from the convexity of $f$.
To formally prove this,
we take $z \in \calK$ satisfying $\nrm{z - y} > \delta$ arbitrarily
and
define $w(\gamma) = (1 - \gamma) y + \gamma z = y + \gamma (z - y)$ for $\gamma \in [0,1]$.
Define $\gamma^\circ = \delta/\nrm{z - y}$,
which satisfies 
$\nrm{w(\gamma^\circ) - y} = \gamma^\circ \nrm{z - y} = \delta$.
Then, we have
\begin{equation}\label{eq:p_localnorm_suffcond_1}
  f(y)
  \leq 
  f(w(\gamma^\circ))
  \leq 
  (1 - \gamma^\circ) f(y) + \gamma^\circ f(z)
  \com 
\end{equation}
where the first inequality follows from 
$\nrm{w(\gamma^\circ) - y} = \delta$ and the assumption that for any $z \in \calK$ satisfying $\nrm{z - y} = \delta$, $f(z) \geq f(y)$,
and the second inequality from the convexity of $f$.
Rearranging the terms in \Cref{eq:p_localnorm_suffcond_1} gives $f(y) \leq f(z)$.

Therefore, we have 
$
  \set{z \in \calK \colon \nrm{z - y} \geq \delta}
  \subseteq
  \set{z \in \calK \colon f(z) \geq f(y)}
$
and thus from the assumption that $x^*$ is the minimizer of the strictly convex function $f$,
\begin{equation}
  x^*
  \in 
  \set{z \in \calK \colon f(z) < f(y)}
  \subseteq
  \set{z \in \calK \colon \nrm{z - y} < \delta}
  \com 
  \n
\end{equation}
which implies $\nrm{x^* - y} < \delta$.
This completes the proof.
\end{proof}

The following lemma upper bounds the increase of the reciprocal of the learning rate.
\begin{lemma}\label[lemma]{lem:lr_diff}
  The learning rate $\set{\eta_{i,a}^\t}_{t=1}^T$ 
  in \Cref{eq:oftrl_multiplayer}
  satisfies
  \begin{equation}
    \frac{1}{\eta_{i,a}^\t} - \frac{1}{\eta_{i,a}^\tm}
    \leq 
    \frac{\sqrt{8} \prn{ \xhat_i^\tm(a) + \xhat_i^{(t-2)}(a)}}{\sqrt{m_i \log T}}
    \per 
    \n
  \end{equation}
\end{lemma}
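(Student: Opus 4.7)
The plan is to reduce the bound on $1/\eta_{i,a}^\t - 1/\eta_{i,a}^\tm$ to a bound on the difference of the purely ``adaptive'' part of the learning rate, and then use the fact that $\util_{i,a}^\t = \xhat_i^\t(a)\util_i^\t$ has norm controlled by $\xhat_i^\t(a)$.

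First, write $1/\eta_{i,a}^\t = \max\{A_t, 1/\eta_{i,\max}\}$, where
\[
A_t = \sqrt{\frac{8\prn[\big]{4 + \sum_{s=1}^{t-1} \nrm{\util_{i,a}^\s - \util_{i,a}^\sm}_\infty^2}}{m_i \log T}}
\]
is nondecreasing in $t$. A simple case check shows $1/\eta_{i,a}^\t - 1/\eta_{i,a}^\tm \leq A_t - A_{t-1}$ in every case: if the cap $1/\eta_{i,\max}$ is active at round $t$ then by monotonicity it is also active at $t-1$ and the left-hand side is $0$; if the adaptive term is active at $t-1$ but the cap at $t$ is impossible by monotonicity; and if the cap is active at $t-1$ while the adaptive term is active at $t$, then $A_{t-1} \leq 1/\eta_{i,\max} \leq A_t$ so the difference is at most $A_t - A_{t-1}$.

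Next, I use the elementary inequality $\sqrt{a+c} - \sqrt{a} \leq \sqrt{c}$ for $a, c \geq 0$ to obtain
\[
A_t - A_{t-1} \leq \sqrt{\frac{8}{m_i \log T}}\cdot\nrm{\util_{i,a}^\tm - \util_{i,a}^{(t-2)}}_\infty.
\]
Finally, I expand the definition $\util_{i,a}^\s = \xhat_i^\s(a)\,\util_i^\s$. Since $\util_i^\s$ is a utility vector in $[-1,1]^{m_i}$, we have $\nrm{\util_{i,a}^\s}_\infty \leq \xhat_i^\s(a)$, and the triangle inequality gives
\[
\nrm{\util_{i,a}^\tm - \util_{i,a}^{(t-2)}}_\infty \leq \xhat_i^\tm(a) + \xhat_i^{(t-2)}(a).
\]
Chaining the three displays yields the stated bound.

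The only real obstacle is the clean case analysis in the first step; everything else is routine. Note that this argument crucially relies on the fact that the utility vector $\util_{i,a}^\s$ used in the expert-$a$ learning rate is rescaled by $\xhat_i^\s(a)$, which is what turns a naive $O(1/\sqrt{m_i \log T})$ bound into one proportional to the stationary weight $\xhat_i^\s(a)$. This ``local'' scaling is what allows the subsequent stability analysis of the Markov chain (\Cref{lem:diff_local_oftrl}) to avoid the loose $m$-dependence that arises from the adaptive learning rate in the naive approach.
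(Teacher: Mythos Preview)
Your proof is correct and follows essentially the same route as the paper. The paper compresses your case analysis into the single observation that $z \mapsto 1/\min\{a,z\} - 1/\min\{b,z\}$ is nondecreasing in $z$ when $a \leq b$, which amounts to the same thing; the remaining steps (subadditivity of $\sqrt{\cdot}$ and the triangle inequality with $\nrm{\util_{i,a}^\s}_\infty \leq \xhat_i^\s(a)$) are identical.
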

\begin{proof}  
From the definition of the learning rate $\eta_{i,a}^\t$, we have
\begin{align}
  \frac{1}{\eta_{i,a}^\t} \!-\! \frac{1}{\eta_{i,a}^\tm}
  &=
  \frac{1}{
    \min\set*{
      \!\!
      \sqrt{\!\frac{m_i \log T / 8}{4 + \sum_{s=1}^{t-1} \nrm{\util_{i,a}^\s - \util_{i,a}^\sm}_\infty^2 }}
      ,
      \frac{1}{256 n \sqrt{m_i}}
      \!
    }
  }
  \!-\!
  \frac{1}{
    \min\set*{
      \!\!
      \sqrt{\!\frac{m_i \log T / 8}{4 + \sum_{s=1}^{t-2} \nrm{\util_{i,a}^\s - \util_{i,a}^\sm}_\infty^2 }}
      ,
      \frac{1}{256 n \sqrt{m_i}}
      \!
    }
  }
  \nn 
  &\leq 
  \sqrt{\frac{4 + \sum_{s=1}^{t-1} \nrm{\util_{i,a}^\s - \util_{i,a}^\sm}_\infty^2 }{m_i \log T / 8}}
  -
  \sqrt{\frac{4 + \sum_{s=1}^{t-2} \nrm{\util_{i,a}^\s - \util_{i,a}^\sm}_\infty^2 }{m_i \log T / 8}
  }
  \nn 
  &\leq 
  \sqrt{\frac{8}{m_i \log T}}
  \sqrt{\nrm{\util_i^\tm - \util_i^{(t-2)}}_\infty^2}
  \nn
  &\leq 
  \frac{
    \sqrt{8} 
    \prn{ \nrm{\util_{i,a}^\tm}_\infty + \nrm{\util_{i,a}^{(t-2)}}_\infty }
  }
  {
    \sqrt{m_i \log T}
  }
  \leq 
  \frac{
    \sqrt{8} 
    \prn{ \xhat_i^\tm(a) + \xhat_i^{(t-2)}(a) }
  }
  {
    \sqrt{m_i \log T}
  }
  \com 
  \n
\end{align}
where 
in the first inequality we used the fact that 
$z \mapsto 1/\min\set{a, z} - 1/\min\set{b, z}$ is nondecreasing in $z$ when $a \leq b$,
in the second inequality we used the subadditivity of $z \mapsto \sqrt{z}$ for $z \geq 0$,
and in the last inequality we used 
$\util_{i,a}^\t = \xhat_i^\t(a) \util_i^\t$ and $\nrm{\util_i^\t}_\infty \leq 1$.
\end{proof}

From \Cref{lem:scbarrier_properties,lem:localnorm_suffcond,lem:lr_diff}, we can prove the following key lemma, which guarantees the stability of the Markov chain under the adaptive learning rate $\eta_{i,a}^\t$ in \Cref{eq:oftrl_multiplayer}. 
In this lemma and its proof, we ignore the player index $i \in [n]$ for notational simplicity; for example, we abbreviate $\xhat_i^\t$ as $\xhat^\t$, $y_{i,a}^\t$ as $y_a^\t$, $u_{i,a}^\t$ as $u_a^\t$, $u_i^\t$ as $u^\t$, and $\eta_{i,a}^\t$ as $\eta_a^\t$, where we recall that we use $a$ for the index for actions and $i$ for the index of players.
\begin{lemma}\label[lemma]{lem:diff_local_oftrl}
  Suppose that $T \geq 3$ and 
  consider the following OFTRL update in~\Cref{eq:oftrl_multiplayer}:
  \begin{equation}
    y_{a}^\t
    =
    \argmax_{y \in \Delta_{m}}
    \set*{
      - F_a^\t(y) 
    }
    \com \quad 
    F_a^\t(y)  
    \coloneqq
    -
    \prn*{
    \inpr*{y, \util_{a}^\tm + \sum_{s=1}^{t-1} \util_{a}^\s}
    - 
    \frac{1}{\eta_a^\t}
    \phi(y)
    }
    \per
    \n
  \end{equation}
  Then, 
  \begin{equation}\label{eq:diff_local_oftrl}
    \sum_{a \in \calA}
    \nrm{y_a^\tp - y_a^\t}_{y_a^\t,F_a^\tp} 
    =
    \frac{1}{\sqrt{\eta_a^\tp}}
    \sum_{a \in \calA}
    \nrm{y_a^\tp - y_a^\t}_{y_a^\t,\phi} 
    \leq 
    \frac12 \per
  \end{equation}
\end{lemma}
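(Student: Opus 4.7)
The plan is to apply Lemma~\ref{lem:localnorm_suffcond} separately for each $a \in \calA$ with $\calK = \Delta_m$, $y = y_a^\t$, $f = F_a^\tp$, and local norm $\nrm{\cdot}_{y_a^\t, \phi}$, yielding a per-action bound $\nrm{y_a^\tp - y_a^\t}_{y_a^\t, \phi} \lesssim \eta_a^\tp (\xhat^\t(a) + \xhat^\tm(a))$. The equality in \eqref{eq:diff_local_oftrl} is just the identity $\nabla^2 F_a^\tp = (1/\eta_a^\tp)\nabla^2 \phi$, so the remaining task is to sum $\sqrt{\eta_a^\tp}(\xhat^\t(a) + \xhat^\tm(a))$ and exploit the smallness of $\eta_{\max}$; the precise $256\, n \sqrt{m}$ in $\eta_{\max}$ is designed so that the resulting sum is at most $1/2$.

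To verify the hypothesis of Lemma~\ref{lem:localnorm_suffcond}, I pick any $z \in \Delta_m$ with $\nrm{z - y_a^\t}_{y_a^\t, \phi} = \delta_a$ (with $\delta_a \leq 1/2$ to be chosen) and Taylor-expand $F_a^\tp$ around $y_a^\t$ to obtain $F_a^\tp(z) - F_a^\tp(y_a^\t) = \inpr{\nabla F_a^\tp(y_a^\t), z - y_a^\t} + \frac{1}{2\eta_a^\tp}\nrm{z - y_a^\t}_{\xi,\phi}^2$ for some $\xi$ on the segment from $y_a^\t$ to $z$. The first-order optimality of $y_a^\t$ for $F_a^\t$ gives $\inpr{\nabla F_a^\t(y_a^\t), z - y_a^\t} \geq 0$, so the first-order term is lower-bounded by $\inpr{(\nabla F_a^\tp - \nabla F_a^\t)(y_a^\t), z - y_a^\t} = \inpr{-2 \util_a^\t + \util_a^\tm + (1/\eta_a^\tp - 1/\eta_a^\t)\, \nabla \phi(y_a^\t), z - y_a^\t}$. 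I then apply Cauchy--Schwarz in the local dual norm, together with (i)~the elementary estimate $\nrm{v}_{*,y_a^\t,\phi}^2 = \sum_k y_a^\t(k)^2 v(k)^2 \leq \nrm{v}_\infty^2$ applied to $-2 \util_a^\t + \util_a^\tm$, which has $\ell_\infty$-norm at most $2\xhat^\t(a) + \xhat^\tm(a)$; (ii)~Lemma~\ref{lem:scbarrier_properties} applied to the $m$-self-concordant barrier $\phi$, giving $\nrm{\nabla \phi(y_a^\t)}_{*,y_a^\t,\phi} \leq \sqrt{m}$; and (iii)~Lemma~\ref{lem:lr_diff} bounding $1/\eta_a^\tp - 1/\eta_a^\t$ by $\sqrt{8}(\xhat^\t(a) + \xhat^\tm(a))/\sqrt{m\log T}$. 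The $\sqrt{m}$ from the barrier gradient cancels the $\sqrt{m}$ in the learning-rate denominator, and $T \geq 3$ bounds the residual $1/\sqrt{\log T}$, so the first-order term is at least $-C_1 (\xhat^\t(a) + \xhat^\tm(a))\, \delta_a$ for an absolute constant $C_1$.

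For the second-order part, self-concordance of $\phi$ (Lemma~\ref{lem:sc_hessian_stab}) together with $\nrm{\xi - y_a^\t}_{y_a^\t,\phi} \leq \delta_a \leq 1/2$ yields $\nabla^2 \phi(\xi) \succeq (1 - \delta_a)^2 \nabla^2 \phi(y_a^\t) \succeq \tfrac{1}{4}\nabla^2 \phi(y_a^\t)$, so the quadratic term is at least $\delta_a^2/(8 \eta_a^\tp)$. Choosing $\delta_a = 8 C_1 \eta_a^\tp (\xhat^\t(a) + \xhat^\tm(a))$ therefore makes the second-order contribution dominate the first-order one and ensures $F_a^\tp(z) \geq F_a^\tp(y_a^\t)$; also $\delta_a \leq 1/2$ holds because $\eta_a^\tp \leq \eta_{\max} = 1/(256 n \sqrt{m})$ and $\xhat^\t(a) + \xhat^\tm(a) \leq 2$. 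Lemma~\ref{lem:localnorm_suffcond} then gives $\nrm{y_a^\tp - y_a^\t}_{y_a^\t, \phi} \leq 8 C_1 \eta_a^\tp(\xhat^\t(a) + \xhat^\tm(a))$. Summing, $\sum_a \frac{1}{\sqrt{\eta_a^\tp}} \nrm{y_a^\tp - y_a^\t}_{y_a^\t, \phi} \leq 8 C_1 \sum_a \sqrt{\eta_a^\tp}(\xhat^\t(a) + \xhat^\tm(a)) \leq 16 C_1 \sqrt{\eta_{\max}}$, and the $256$ in $\eta_{\max}$ is tuned so that this final expression is bounded by $1/2$.

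The main obstacle is careful constant-tracking: the self-concordant quadratic lower bound must strictly dominate, after summing over $a$, both first-order perturbation pieces---the optimistic displacement $-2 \util_a^\t + \util_a^\tm$ and the regularizer-scaling shift $(1/\eta_a^\tp - 1/\eta_a^\t)\nabla \phi(y_a^\t)$---while fitting under the numeric threshold $1/2$ dictated by $\eta_{\max}$. The use of the expert-wise (rather than uniform) learning rate is essential here: Lemma~\ref{lem:lr_diff} makes $1/\eta_a^\tp - 1/\eta_a^\t$ scale with $\xhat^\t(a) + \xhat^\tm(a)$, so that the $\sqrt{m}$ from the barrier gradient cancels against the $\sqrt{m}$ in the learning-rate denominator; this cancellation is what renders the final sum independent of $m$ and keeps it below $1/2$.
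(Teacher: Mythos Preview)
Your approach is correct and uses exactly the same analytic ingredients as the paper, but it is organized differently. The paper introduces the joint objective $G^{(t+1)}(\bm w)=\sum_{a}F_a^{(t+1)}(w_a)$ on the product space $(\Delta_m)^m$ and applies \Cref{lem:localnorm_suffcond} \emph{once} to $G^{(t+1)}$ with a single radius $1/(2\sqrt m)$ in the product local norm, then recovers the sum over $a$ via Cauchy--Schwarz. You instead apply \Cref{lem:localnorm_suffcond} \emph{separately for each expert} $a$, with an action-dependent radius $\delta_a\propto\eta_a^{(t+1)}\bigl(\xhat^{(t)}(a)+\xhat^{(t-1)}(a)\bigr)$, and then sum the per-action bounds directly. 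Both routes invoke the same pieces---Taylor expansion at $y_a^\t$, first-order optimality of $y_a^\t$ for $F_a^\t$ (which on $\Delta_m$ gives $\langle\nabla F_a^\t(y_a^\t),z-y_a^\t\rangle=0$, so your inequality is in fact an equality), \Cref{lem:scbarrier_properties} for $\|\nabla\phi(y_a^\t)\|_{*,y_a^\t,\phi}\le\sqrt m$, \Cref{lem:sc_hessian_stab} for the Hessian comparison, and \Cref{lem:lr_diff} for the learning-rate increment---and both ultimately hinge on the same summation $\sum_a\sqrt{\eta_a^{(t+1)}}\bigl(\xhat^{(t)}(a)+\xhat^{(t-1)}(a)\bigr)\le 2\sqrt{\eta_{\max}}$. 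Your per-action route is more direct and avoids the product-space machinery at no cost in the final constants; the paper's joint formulation is just a different bookkeeping of the same estimates. The paper's remark that one must ``analyze the $m$ experts simultaneously'' should be read as stressing the feature you also single out---that the expert-wise learning rate makes $1/\eta_a^{(t+1)}-1/\eta_a^{(t)}$ scale with $\xhat^{(t)}(a)+\xhat^{(t-1)}(a)$, so that the $\sqrt m$ from $\|\nabla\phi\|_*$ cancels and the sum over $a$ collapses via $\sum_a\xhat(a)=1$---rather than any genuine need for a product-space argument.
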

To ensure the sufficient condition for the stability of the stationary distribution, $\sum_{a \in \calA} \mu_a^\t \leq 1/2$ in \Cref{lem:anagnostides}, under our learning rate in \Cref{eq:oftrl_multiplayer}, 
we will see in \Cref{lem:sum_mu_onehalf} that it suffcies to show $\sum_{a \in \calA} \nrm{y_a^\t - y_a^\tm}_{y_a^\tm,F_a^\t} \leq 1/2$, which is the claim of \Cref{lem:diff_local_oftrl}.
To prove \Cref{lem:diff_local_oftrl}, we will analyze the stability of the outputs $y_1^\t, \dots, y_m^\t$ of the $m (= \abs{\calA})$ experts simultaneously, which is used for swap regret minimization.
Without this new analysis, we need to choose a rather smaller learning rate to ensure the stability of the Markov chain, resulting in a swap regret bound of $O(n m^8 \log T)$ in the honest regime.
\begin{proof}[Proof of \Cref{lem:diff_local_oftrl}]
We recall that $m = \abs{\calA}$ 
and 
use $\calM_m = (\Delta_m)^m = \Delta_m \times \cdots \times \Delta_m$ to denote the Cartesian product of $m$ probability simplices.\footnote{We use $\calM_m$ to denote the Cartesian product of $m$ probability simplices since this is equivalent to the set of all row stochastic matrices.}
Define a strictly convex function $G^\tp \colon \calM_m \to \R$ by
\begin{equation}
  G^\tp(\bm{w})
  =
  G^\tp(w_1, \dots, w_m) 
  = 
  \sum_{a \in \calA} F_a^\tp (w_a)
  \per
  \nonumber
\end{equation}
Note that
for any $\hbm = (h_1, \dots, h_m) \in \calM_m$, 
the local norm $\nrm{\hbm}_{\ybm^\t, G^\tp}$ is given by
\begin{align}
  \nrm{\hbm}_{\ybm^\t, G^\tp} 
  &=
  \sqrt{
    \hbm^\top
    \diag\prn*{
      \set[\big]{
        \nabla^2 F_a^\tp(y_a^\t)
      }_{a \in \calA}
    }
    \hbm
  }
  \nn
  &=
  \sqrt{
    \sum_{a \in \calA} 
    h_a^\top \nabla^2 F_a^\tp(y_a^\t) h_a
  }
  =
  \sqrt{
    \sum_{a \in \calA} 
    \nrm{h_a}_{y_a^\t, F_a^\tp}^2
  }
  \per
  \label{eq:prod_local_norm}
\end{align}
Now from the fact that $y_a^\tp$ is the minimizer of the strongly convex function $F_a^\tp$ for each $a \in \calA$,
the point $\ybm^\t \coloneqq (y_1^\t, \dots, y_m^\t) \in \calM_m$ is the unique minimizer of $G^\tp$.
Hence from \Cref{lem:localnorm_suffcond}, to prove the claim of the lemma,
it suffices to prove that 
for any $\zbm = (z_1, \dots, z_m) \in \calM_m$ satisfying $\nrm{\zbm - \ybm^\t}_{\ybm^\t, G^\tp} = 1/(2 \sqrt{m})$, it holds that $G^\tp(\zbm) \geq G^\tp(\ybm^\t)$.
In fact, if this is proven, then \Cref{lem:localnorm_suffcond} implies $\nrm{\ybm^\tp - \ybm^\t}_{\ybm^\t, G^\tp} \leq 1/(2 \sqrt{m})$, and thus the LHS of \Cref{eq:diff_local_oftrl} is upper bounded by
\begin{equation}
  \sum_{a \in \calA} \nrm{y_a^\tp - y_a^\t}_{y_a^\t, F_a^\tp}
  \!\leq\!
  \sqrt{
    m
    \sum_{a \in \calA} \nrm{y_a^\tp \!- y_a^\t}_{y_a^\t, F_a^\tp}^2
  }
  \!=\!
  \sqrt{m} \nrm{\ybm^\tp \!- \ybm^\t}_{y^\t, G^\tp}
  \!\leq\!
  \frac12
  \com 
  \n
\end{equation}
where the first inequality follows from the Cauchy--Schwarz inequality
and 
the equality follows from~\Cref{eq:prod_local_norm}.
This is the claim of \Cref{lem:diff_local_oftrl}.

In the following, we will prove that 
for any $\zbm = (z_1, \dots, z_m) \in \calM_m$ satisfying $\nrm{\zbm - \ybm^\t}_{\ybm^\t, G^\tp} = 1/(2\sqrt{m})$, it holds that $G^\tp(\zbm) \geq G^\tp(\ybm^\t)$.
Let $h_a = z_a - y_a^\t \in \R^m$ for each $a \in \calA$.
Let us fix $a \in \calA$ and we will lower bound $F_a^\tp(z_a)$.
From Taylor's theorem,
there exists a point $\xi_a^\t = \gamma z_a + (1 - \gamma) y_a^\t$ for some $\gamma \in [0, 1]$ such that
\begin{equation}\label{eq:taylor_suma}
  F_a^\tp(z_a) 
  =
  F_a^\tp(y_a^\t)
  +
  \inpr{ \nabla F_a^\tp (y_a^\t), h_a}
  +
  \frac12 h_a^\top \nabla^2 F_a^\tp(\xi_a^\t) h_a
  \per
\end{equation}

We will lower bound the second term in the RHS of \Cref{eq:taylor_suma} below.
From the first-order optimality condition at $y_a^\t$, we have
\begin{align}
  &
  \nabla F_a^\tp (y_a^\t)
  =
  -
  \util_a^\t - \sum_{s=1}^{t} \util_a^\s
  + \frac{1}{\eta_a^\tp} \nabla \phi(y_a^\t)
  \nn 
  &= 
  \prn*{
    - \util_a^\tm - \sum_{s=1}^{t-1} \util_a^\s 
    + \frac{1}{\eta_a^\t} \nabla \phi(y_a^\t) 
  }
  - 2 \util_a^\t + \util_a^\tm + \prn*{\frac{1}{\eta_a^\tp} - \frac{1}{\eta_a^\t}} \nabla \phi(y_a^\t)
  \nn 
  &=
  - 2 \util_a^\t + \util_a^\tm + \prn*{\frac{1}{\eta_a^\tp} - \frac{1}{\eta_a^\t}} \nabla \phi(y_a^\t)
  \com
  \label{eq:firstorder_diff_local_oftrl}
\end{align}
where the last equality follows from 
$\nabla F_a^\t(y_a^\t) = - \util_a^\tm - \sum_{s=1}^{t-1} \util_a^\s + (1/\eta_a^\t) \nabla \phi(y_a^\t) = 0$.
Hence, the second term in the RHS of \Cref{eq:taylor_suma} is lower bounded as 
\begin{align}
  \inpr{ \nabla & F_a^\tp (y_a^\t), h_a}
  \geq 
  -
  \nrm*{ 
    - 2 \util_a^\t + \util_a^\tm 
    + 
    \prn*{\frac{1}{\eta_a^\tp} - \frac{1}{\eta_a^\t}} \nabla \phi(y_a^\t)
  }_{*, y_a^\t, \phi}
  \nrm{h_a}_{y_a^\t, \phi}
  \tag{by \Cref{eq:firstorder_diff_local_oftrl} and H\"{o}lder}
  \nn 
  &\geq 
  -
  \prn*{
    \nrm{ 
      - 2 \util_a^\t + \util_a^\tm 
    }_{*, y_a^\t, \phi}
    +
    \prn*{\frac{1}{\eta_a^\tp} - \frac{1}{\eta_a^\t}}
    \nrm{ 
      \nabla \phi(y_a^\t)
    }_{*, y_a^\t, \phi}
  }
  \cdot
  \frac{1}{2}
  \sqrt{\frac{\eta_a^\tp}{m}}  \nn 
  &\geq 
  -
  \frac{1}{2}
  \prn*{
    \nrm{ 
      - 2 \util_a^\t + \util_a^\tm 
    }_\infty
    +
    \frac{\sqrt{8} \prn{\xhat^\tm(a) + \xhat^{(t-2)}(a)} }{\sqrt{m \log T}}
    \nrm{ 
      \nabla \phi(y_a^\t)
    }_{*, y_a^\t, \phi}
  }
  \sqrt{\frac{\eta_a^\tp}{m}}
  \tag{
    $
  \nrm{w}_{*,y_a^\tm,\phi} 
  \leq 
  \nrm{w}_\infty
  $
  for any $w \in \R^m$ and \Cref{lem:lr_diff}}
  \nn
  &\geq 
  -
  \frac{1}{2}
  \prn*{
    2 \xhat^\t(a)
    +
    \xhat^\tm(a)
    +
    \frac{\sqrt{8} \prn{\xhat^\tm(a) + \xhat^{(t-2)}(a)} }{\sqrt{\log T}}
  }
  \sqrt{\frac{\eta_a^\tp}{m}}
  \per
  \label{eq:taylor_second_suma_mid}
\end{align}
Here,
the second inequality follows from the triangle inequality and
\begin{align}
  \nrm{h_a}_{y_a^\t, \phi} 
  &= 
  \nrm{z_a - y_a^\t}_{y_a^\t, \phi} 
  = 
  \sqrt{\eta_a^\tp} \nrm{z_a - y_a^\t}_{y_a^\t, F_a^\tp} 
  \nn
  &
  \leq 
  \sqrt{\eta_a^\tp} \nrm{\zbm - \ybm^\t}_{\ybm_a^\t, G^\tp} 
  = 
  \frac{1}{2}
  \sqrt{\frac{\eta_a^\tp}{m}}
  \com 
  \n
\end{align}
and 
the fourth inequality from 
$\util_a^\t = \xhat^\t(a) \util^\t$,
$\nrm{\util^\t}_\infty \leq 1$
and
$\nrm{
  \nabla \phi(y_a^\t)
}_{*, y_a^\t, \phi} \leq \sqrt{m}$,
which holds since $\phi$ is $m$-self-concordant barrier and \Cref{lem:scbarrier_properties}.
Combining \Cref{eq:taylor_suma} with \Cref{eq:taylor_second_suma_mid} gives
\begin{align}
  &
  F_a^\tp(z_a)
  \nn
  &
  \geq
  F_a^\tp(y_a^\t)
  \!-\!
  \frac12
  \prn[\bigg]{
    2 \xhat^\t(a) \!+\! \xhat^\tm(a)
    \!+\!
    \frac{\sqrt{8} \prn*{\xhat^\tm(a) \!+\! \xhat^{(t-2)}(a)}}{\sqrt{\log T}}
  }
  \sqrt{\frac{\eta_a^\tp}{m}}
  +
  \frac12
  \nrm{h_a}_{\xi_a^\t, F_a^\tp}^2
  \per 
  \label{eq:Faz_lower}
\end{align}

We next consider the last term in the RHS of \Cref{eq:Faz_lower}.
We have 
$
\nrm{z_a - y_a^\t}_{y_a^\t, \phi}
=
\sqrt{\eta_a^\tp}
\nrm{z_a - y_a^\t}_{y_a^\t, F_a^\tp}
\leq
\sqrt{\eta_a^\tp}
\nrm{\zbm - \ybm^\t}_{\ybm^\t, G^\tp}
=
\frac{1}{2} \sqrt{\eta_a^\tp / m}
\leq 
\frac{1}{32} m^{-3/4}
$,
where the last inequality follows from $\eta_a^\tp \leq 1/(256 n \sqrt{m})$.
Combining this with a property of self-concordant barriers in \Cref{lem:sc_hessian_stab}, we have
\begin{equation}\label{eq:taylor_third_suma_pre}
  \nrm{h_a}_{\xi_a^\t, \phi}^2
  \!\geq\!
  \prn[\Big]{1 - \nrm{y_a^\t - \xi_a^\t}_{y_a^\t, \phi}}^2
  \nrm{h_a}_{y_a^\t, \phi}^2
  \!=\!
  \prn[\Big]{1 - \gamma \nrm{z_a - y_a^\t}_{y_a^\t, \phi}}^2
  \nrm{h_a}_{y_a^\t, \phi}^2
  \!\geq\!
  \frac{15}{16} \nrm{h_a}_{y_a^\t, \phi}^2
  \per 
\end{equation}
Using this inequality, we can lower bound the last term in the RHS of \Cref{eq:Faz_lower} as
\begin{align}\label{eq:taylor_third_suma}
  &
  \frac12
  \sum_{a \in \calA}
  \nrm{h_a}_{\xi_a^\t, F_a^\tp}^2
  =
  \frac{1}{2 \eta_a^\tp}
  \sum_{a \in \calA}
  \nrm{h_a}_{\xi_a^\t, \phi}^2
  \nn
  &\geq 
  \frac{15}{32}
  \frac{1}{\eta_a^\tp}
  \sum_{a \in \calA}
  \nrm{h_a}_{y_a^\t, \phi}^2
  =
  \frac{15}{32}
  \sum_{a \in \calA}
  \nrm{h_a}_{y_a^\t, F_a^\tp}^2
  =
  \frac{15}{32}
  \nrm{\hbm}_{\ybm^\t, G^\tp}^2
  =
  \frac{15}{128 m}
  \com 
\end{align}
where the first inequality follows from \Cref{eq:taylor_third_suma_pre},
the third equality from \Cref{eq:prod_local_norm},
and the last equality from $\nrm{\hbm}_{\ybm^\t, G^\tp} = 1/(2\sqrt{m})$.

Therefore, summing up the inequality \Cref{eq:Faz_lower} over $a \in \calA$ and using \Cref{eq:taylor_third_suma}, we obtain
\begin{align}
  G^\tp(\zbm)
  =
  \sum_{a \in \calA}
  F_a^\tp(z_a) 
  &
  \geq
  \sum_{a \in \calA}
  F_a^\tp(y_a^\t)
  -
  \frac12 
  \prn*{
    \frac{3}{\sqrt{m}}
    +
    \sqrt{\frac{8}{\log T}}
  }
  \sqrt{\eta_a^\tp}
  +
  \frac{15}{128}
  \nn 
  &
  \geq 
  \sum_{a \in \calA} F_a^\tp(y_a^\t)
  =
  G^\tp(\ybm^\t)
  \com 
  \n
\end{align}
where in the first inequality we used the fact that $\xhat^\t, \xhat^\tm, \xhat^{(t-2)} \in \Delta_m$ are elements in the probability simplex 
and 
in the last inequality we used $T \geq 3$ and $\eta_a^\tp \leq 1/256$.
This completes the proof.
\end{proof}

Finally we will use the key lemma, \Cref{lem:diff_local_oftrl}, to prove \Cref{lem:stationary_stab},
which relates the stability of the outputs of $m_i$-experts $\set{y_{i,a}^\t}_{a \in \calA_i}$ and the stability of the output $\xhat_i^\t$ of the Markov chain defined by the transition matrix $Q_i^\t$. 
To prove this relation, we define
\begin{equation}
  \mu_a^\t 
  =
  \max_{b \in \calA} \,
  \abs*{
    1
    -
    \frac{y_a^\t(b)}{y_a^\tm(b)}
  }
  \com 
  \n
\end{equation}
in which the player index $i \in [n]$ is again ignored for simplicity and we will also ignore the index $i$ for lemmas involving $\mu_a^\t$.
Note that $\mu_a^\t$ is a lower bound of $\nrm{y_a^\t - y_a^\tm}_{y_a^\tm, \phi}$, which appears in the LHS of \Cref{eq:diff_local_oftrl} in \Cref{lem:diff_local_oftrl}, since
\begin{equation}\label{eq:mua2yadiff}
  \mu_a^\t 
  = 
  \max_{b \in \calA} 
  \abs*{
    1 - \frac{y_a^\t(b)}{y_a^\tm(b)}
  }
  \leq  
  \sqrt{
    \sum_{b \in \calA} 
    \prn*{
      1 - \frac{y_a^\t(b)}{y_a^\tm(b)}
    }^2
  }
  =
  \nrm{y_a^\t - y_a^\tm}_{y_a^\tm, \phi}
  \per
\end{equation}
The following lemma is a direct consequence of \Cref{lem:diff_local_oftrl}.
\begin{lemma}\label[lemma]{lem:sum_mu_onehalf}
  Under the same assumptions as \Cref{lem:diff_local_oftrl}, 
  $\sum_{a \in \calA} \mu_a^\t \leq \sum_{a \in \calA} \nrm{y_a^\t - y_a^\tm}_{y_a^\tm, \phi} \leq 1/2$.
\end{lemma}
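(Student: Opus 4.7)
The plan is to split the claim into its two inequalities and dispatch each using material already in place. The first inequality, $\sum_{a \in \calA} \mu_a^\t \leq \sum_{a \in \calA} \nrm{y_a^\t - y_a^\tm}_{y_a^\tm, \phi}$, will follow immediately by summing the pointwise bound $\mu_a^\t \leq \nrm{y_a^\t - y_a^\tm}_{y_a^\tm, \phi}$ stated in \Cref{eq:mua2yadiff} over $a \in \calA$. No additional work is needed here, since the Hessian of $\phi$ at $y_a^\tm$ is the diagonal matrix with entries $1/y_a^\tm(b)^2$, and the local norm therefore already dominates the maximum coordinate of $1 - y_a^\t(b)/y_a^\tm(b)$.

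For the second inequality, the strategy is to invoke \Cref{lem:diff_local_oftrl} with a one-step time shift, i.e.\ at time $t-1 \to t$ instead of $t \to t+1$. This gives
\[
\sum_{a \in \calA} \nrm{y_a^\t - y_a^\tm}_{y_a^\tm, F_a^\t} \;=\; \frac{1}{\sqrt{\eta_a^\t}} \sum_{a \in \calA} \nrm{y_a^\t - y_a^\tm}_{y_a^\tm, \phi} \;\leq\; \tfrac{1}{2}.
\]
I would then rearrange to obtain $\sum_{a} \nrm{y_a^\t - y_a^\tm}_{y_a^\tm, \phi} \leq \sqrt{\eta_a^\t}/2$, and conclude using the cap $\eta_a^\t \leq \eta_{i,\max} = 1/(256 n \sqrt{m_i}) \leq 1$ in the learning rate definition from \Cref{eq:oftrl_multiplayer}, which ensures $\sqrt{\eta_a^\t}/2 \leq 1/2$.

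There is essentially no obstacle here: all the real work lies inside \Cref{lem:diff_local_oftrl} (the self-concordance argument producing the joint stability of the $m_i$ expert outputs). The only minor point worth double-checking in writing is the harmless index shift — namely that the hypotheses of \Cref{lem:diff_local_oftrl} are applicable at round $t-1$ (which only requires $t \geq 3$, matching the running assumption) — and the trivial verification that $\eta_a^\t \leq 1$. So this lemma should be just a short corollary of the two preceding facts.
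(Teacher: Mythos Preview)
Your proposal is correct and follows essentially the same route as the paper: the first inequality is pointwise from \Cref{eq:mua2yadiff}, and the second is \Cref{lem:diff_local_oftrl} (with the one-step index shift) followed by the learning-rate cap. One tiny slip: the standing hypothesis of \Cref{lem:diff_local_oftrl} is $T \geq 3$ (a condition on the horizon, used to control the $1/\sqrt{\log T}$ term), not $t \geq 3$, so no additional restriction on the current round is needed for the shift.
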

Thanks to this lemma, we can use the RVU bound for the log-barrier regularizer in \Cref{lem:oftrl_logbarrier} when proving \Cref{thm:indiv_swapreg}.
\begin{proof}
The claim directly follows from \Cref{lem:diff_local_oftrl}.
In fact,
from \Cref{eq:mua2yadiff} and \Cref{lem:diff_local_oftrl},
we have
\begin{equation}
  \sum_{a \in \calA} \mu_a^\t
  \leq 
  \sum_{a \in \calA} \nrm{y_a^\t - y_a^\tm}_{y_a^\tm, \phi}
  \leq 
  \frac{\sqrt{\eta_a^\t}}{2}
  \leq 
  \frac{1}{32 \sqrt{2} n^{1/2} m^{1/4}} 
  \leq 
  \frac12
  \com 
  \n
\end{equation}
where we used $\eta_a^\t \leq 1/(256 n \sqrt{m})$.
\end{proof}

The following lemma, which is proven in \citet{anagnostides22uncoupled} based on the Markov chain tree theorem, relates the stability of $\xhat^\t$ and $\mu_a^\t$.
\begin{lemma}[{\citealt[Eq.~(26) in the proof of Lemma 4.2]{anagnostides22uncoupled}}]\label[lemma]{lem:anagnostides}
  Suppose that $\sum_{a \in \calA} \mu_a^\t \leq 1/2$.
  Then,
  \begin{equation}
    \nrm{\xhat^\t - \xhat^\tm}_1
    \leq 
    8 \sum_{a \in \calA} \mu_a^\t 
    \per
    \n
  \end{equation}
\end{lemma}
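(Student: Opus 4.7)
The plan is to apply the Markov chain tree theorem to express each stationary distribution $\xhat^\t$ as an explicit combinatorial polynomial in the transition entries $y_u^\t(v)$, and then control the multiplicative change between consecutive rounds via the row-wise bounds $\mu_a^\t$. Throughout, write $M = \sum_{a \in \calA} \mu_a^\t$ and recall that $M \leq 1/2$ by hypothesis.

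First, I would invoke the Markov chain tree theorem to get $\xhat^\t(a) = w_a^\t / Z^\t$, where $w_a^\t = \sum_{T \in \calT_a} \prod_{(u,v) \in T} y_u^\t(v)$, $\calT_a$ denotes the set of spanning in-trees rooted at $a$, and $Z^\t = \sum_b w_b^\t$. The key combinatorial observation I will exploit is that in any in-tree rooted at $a$, every vertex $u \neq a$ is the tail of \emph{exactly one} edge; hence the per-tree multiplicative ratio $\prod_{(u,v) \in T} y_u^\t(v)/y_u^\tm(v)$ factors into at most $|\calA| - 1$ terms indexed by distinct vertices $u$, with each factor lying in $[1 - \mu_u^\t, 1 + \mu_u^\t]$ by the definition of $\mu_u^\t$.

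Second, I would apply the standard inequalities $\prod_u (1 + \mu_u^\t) \leq \exp(\sum_u \mu_u^\t) \leq 1 + 2M$ (valid since $M \leq 1/2 < \log 2$) and $\prod_u (1 - \mu_u^\t) \geq 1 - M$ to conclude that every per-tree ratio lies in $[1 - M, 1 + 2M]$. Since $w_a^\t / w_a^\tm$ is a convex combination of these per-tree ratios with weights $\prod_{(u,v) \in T} y_u^\tm(v)/w_a^\tm$, it lies in the same interval; and since $Z^\t / Z^\tm$ is itself a convex combination of the $w_b^\t / w_b^\tm$ with weights $\xhat^\tm(b)$, it too lies in $[1 - M, 1 + 2M]$. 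Writing $\xhat^\t(a)/\xhat^\tm(a) = (w_a^\t/w_a^\tm)/(Z^\t/Z^\tm)$, a short calculation using $M \leq 1/2$ then yields $|\xhat^\t(a)/\xhat^\tm(a) - 1| \leq 8M$, and summing $|\xhat^\t(a) - \xhat^\tm(a)| \leq 8M \cdot \xhat^\tm(a)$ over $a \in \calA$ against the probability distribution $\xhat^\tm$ gives $\nrm{\xhat^\t - \xhat^\tm}_1 \leq 8M$, as claimed.

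The hard part will be the combinatorial accounting in the second step: one must exploit the in-tree structure so that each vertex contributes at most once to any per-tree product, preventing the multiplicative perturbation from scaling with the number of edges of $T$ rather than with the number of distinct actions. Without this structural observation, a naive entrywise comparison of $Q^\t$ with $Q^\tm$ followed by a generic perturbation bound on the Perron eigenvector would yield a bound polynomial in $|\calA|$ times $M$, which is too weak to close the Markov-chain stability step used downstream in the proof of \Cref{thm:indiv_swapreg}.
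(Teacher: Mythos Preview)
Your proposal is correct and follows essentially the same approach as the cited reference: the paper does not prove this lemma itself but attributes it to \citet{anagnostides22uncoupled}, noting explicitly that it is ``proven \ldots\ based on the Markov chain tree theorem,'' which is exactly the route you take. Your bookkeeping (each non-root vertex contributes exactly one outgoing edge per in-tree, convex-combination control of $w_a^\t/w_a^\tm$ and $Z^\t/Z^\tm$, and the final ratio bound yielding a constant at most $8$) is sound and matches the argument in the original source.
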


Finally from~\Cref{lem:sum_mu_onehalf,lem:anagnostides}, we can prove the following lemma relating the stability of the output of $m_i$-experts and the stability of the Markov chain defined by~$Q_i^\t$.
\begin{lemma}\label[lemma]{lem:stationary_stab}
  We assume the conditions of \Cref{lem:diff_local_oftrl}.
  Then, it holds that
  \begin{equation}
    \nrm{\xhat^\t - \xhat^\tm}_1^2 
    \leq 
    64 \abs{\calA}
    \sum_{a \in \calA}
    \nrm{ y_{a}^\t - y_{a}^\tm }_{y_a^\tm, \phi}^2
    \per 
    \n
  \end{equation}
\end{lemma}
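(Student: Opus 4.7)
The plan is to chain together the three preceding results: \Cref{lem:diff_local_oftrl} (which guarantees the local-norm stability of the expert outputs), \Cref{lem:sum_mu_onehalf} (which converts this stability into the condition $\sum_{a \in \calA} \mu_a^\t \leq 1/2$ needed to invoke the Markov-chain stability bound), and \Cref{lem:anagnostides} (which transfers stability of the transition-row outputs to stability of the stationary distribution in $\ell_1$-norm).

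First, since the assumptions of \Cref{lem:diff_local_oftrl} hold, \Cref{lem:sum_mu_onehalf} immediately ensures $\sum_{a \in \calA} \mu_a^\t \leq 1/2$, so \Cref{lem:anagnostides} applies and yields $\nrm{\xhat^\t - \xhat^\tm}_1 \leq 8 \sum_{a \in \calA} \mu_a^\t$. Squaring and applying the Cauchy--Schwarz inequality gives
\begin{equation}
  \nrm{\xhat^\t - \xhat^\tm}_1^2
  \leq
  64 \prn*{\sum_{a \in \calA} \mu_a^\t}^2
  \leq
  64\, \abs{\calA} \sum_{a \in \calA} \prn{\mu_a^\t}^2.
  \nonumber
\end{equation}

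Finally, the pointwise bound $\mu_a^\t \leq \nrm{y_a^\t - y_a^\tm}_{y_a^\tm, \phi}$ established in \Cref{eq:mua2yadiff} upgrades each $(\mu_a^\t)^2$ to the local-norm quantity appearing on the right-hand side of the target inequality, completing the proof. There is essentially no hard step here: the key technical difficulty (proving the smallness of $\sum_a \nrm{y_a^\tp - y_a^\t}_{y_a^\t,\phi}$ under the \emph{adaptive} learning rate) has already been discharged by \Cref{lem:diff_local_oftrl}, and the present lemma merely packages its consequence into the form needed downstream in the swap-regret analysis. The factor of $\abs{\calA}$ introduced by Cauchy--Schwarz is the unavoidable cost of passing from $\ell_1$ of the $\mu_a^\t$'s to $\ell_2$, and aligns with the dimension dependence already present in the surrounding bounds.
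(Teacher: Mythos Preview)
Your proof is correct and follows essentially the same approach as the paper: invoke \Cref{lem:sum_mu_onehalf} to verify the hypothesis of \Cref{lem:anagnostides}, square the resulting bound, apply Cauchy--Schwarz to pass from $\prn{\sum_a \mu_a^\t}^2$ to $\abs{\calA}\sum_a (\mu_a^\t)^2$, and finish with \Cref{eq:mua2yadiff}. The paper's proof is identical in structure and in the order of these steps.
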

This lemma will be used in the proof of \Cref{thm:indiv_swapreg} to upper bound the negative term in the RVU bound in~\Cref{eq:oftrl_logbarrier}.
\begin{proof}
From \Cref{lem:anagnostides} combined with $\sum_{a \in \calA} \mu_a^\t \leq 1/2$ in \Cref{lem:sum_mu_onehalf},
we have 
$
  \nrm{\xhat^\t - \xhat^\tm}_1
  \leq 
  8 \sum_{a \in \calA} \mu_a^\t 
$,
which implies 
\begin{equation}
  \nrm{\xhat^\t - \xhat^\tm}_1^2
  \leq 
  64
  \prn*{
    \sum_{a \in \calA} \mu_a^\t 
  }^2
  \leq 
  64 \abs{\calA} 
  \sum_{a \in \calA} \prn*{\mu_a^\t}^2
  \leq 
  64 \abs{\calA}
  \sum_{a \in \calA}
  \nrm{y_a^\t - y_a^\tm}_{y_a^\tm, \phi}^2
  \com 
  \n
\end{equation}
where 
the second inequality follows from the Cauchy--Schwarz inequality
and the last inequality from \Cref{eq:mua2yadiff}.
This completes the proof.
\end{proof}

\subsection{Upper bound on $\SwapReg_{\xhat_i,\tilde{u}_i}^T$}
Now from the preliminary lemmas provided in \Cref{subsec:pre_lemmas_swap}, we are ready to prove the following lemma,
which will lead to \Cref{thm:indiv_swapreg}.
\begin{lemma}\label[lemma]{lem:swapreg_xhat_bound}
  \Cref{alg:multiple_player_swap} achieves
  \begin{align}
    \SwapReg_{\xhat_i,\tilde{u}_i}^T
    &\leq
    512
    n m_i^{5/2} \log T
    \!+\!
    16
    m_i
    \sqrt{
      \prn[\bigg]{
        2 n \sum_{j \neq i} 
        P_1^T(\xhat_j)
        \!+\!
        4 n \sum_{j \neq i} \hat{C}_j
        \!+\! 
        \tilde{C}_i
      }
      \log T
    }
    \!-\!
    \frac{2 n}{\sqrt{m_i}} 
    P_1^T(\xhat_i)
    \com
    \nn
    \SwapReg_{\xhat_i,\tilde{u}_i}^T
    &\leq
    512 m_i^{5/2} \log T
    +
    64 m_i
    \sqrt{
      T \log T
    }
    \per
    \n
  \end{align}
\end{lemma}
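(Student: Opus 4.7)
I would combine the swap-to-external reduction of \Cref{lem:swap2base_regret}, $\SwapReg_{\xhat_i,\util_i}^T(M) = \sum_{a \in \calA_i} \widetilde{\Reg}_{i,a}^T(M(a,\cdot))$, with the adaptive-log-barrier RVU bound of \Cref{lem:oftrl_logbarrier} applied expert-wise with optimistic prediction $\util_{i,a}^\tm$. The stability precondition $\nrm{y_{i,a}^\tp - y_{i,a}^\t}_{y_{i,a}^\t,\phi} \leq 1/2$ required by \Cref{lem:oftrl_logbarrier} is supplied by \Cref{lem:diff_local_oftrl}, and the expert-wise negative local-norm terms that the RVU bound produces will be converted into a single negative $P_1^T(\xhat_i)$ contribution through the Markov-chain stability estimate of \Cref{lem:stationary_stab}.

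\textbf{Per-expert bounds and aggregation.} Each per-expert application of \Cref{lem:oftrl_logbarrier} yields a penalty $m_i \log T / \eta_{i,a}^{(T+1)}$, a variation term, and a negative term. From the cap in $\eta_{i,a}^\t$ we get $1/\eta_{i,a}^{(T+1)} \leq 256 n \sqrt{m_i} + \sqrt{8(4+V_{i,a}^T)/(m_i \log T)}$ where $V_{i,a}^T = \sum_t \nrm{\util_{i,a}^\t - \util_{i,a}^\tm}_\infty^2$. Since $(\nabla^2 \phi(y))^{-1} = \diag(y(k)^2)$ gives $\nrm{w}_{*,y,\phi} \leq \nrm{w}_\infty$, the variation summand is at most $4 \eta_{i,a}^\t \nrm{\util_{i,a}^\t - \util_{i,a}^\tm}_\infty^2$, and the standard $\sum_t z_t / \sqrt{4 + \sum_{s < t} z_s} \leq 2 \sqrt{4 + \sum_t z_t}$ identity bounds it by $O(\sqrt{m_i \log T \cdot V_{i,a}^T})$. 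Summing over $a$ and applying Cauchy--Schwarz in the form $\sum_a \sqrt{V_{i,a}^T} \leq \sqrt{m_i \sum_a V_{i,a}^T}$ yields total contribution $O(n m_i^{5/2} \log T) + O(m_i \sqrt{\log T \sum_a V_{i,a}^T})$. To bound $\sum_a V_{i,a}^T$ I split $\util_{i,a}^\t - \util_{i,a}^\tm = \xhat_i^\t(a)(\util_i^\t - \util_i^\tm) + (\xhat_i^\t(a) - \xhat_i^\tm(a))\util_i^\tm$, square, and use $\sum_a \xhat_i^\t(a)^2 \leq 1$ and $\sum_a (\xhat_i^\t(a) - \xhat_i^\tm(a))^2 \leq \nrm{\xhat_i^\t - \xhat_i^\tm}_1^2$ to obtain $\sum_a \nrm{\util_{i,a}^\t - \util_{i,a}^\tm}_\infty^2 \lesssim \nrm{\util_i^\t - \util_i^\tm}_\infty^2 + \nrm{\xhat_i^\t - \xhat_i^\tm}_1^2$. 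By multilinearity of $u_i^\t$ in $x_k^\t = \xhat_k^\t + \hat{c}_k^\t$ for $k \neq i$, combined with $\util_i^\t = u_i^\t + \tilde{c}_i^\t$, $\sum_t \nrm{\util_i^\t - \util_i^\tm}_\infty^2 \lesssim n \sum_{k \neq i}\bigl(P_1^T(\xhat_k) + \hat{C}_k\bigr) + \tilde{C}_i$, which is exactly the expression under the square root in the first claim; the leftover $P_1^T(\xhat_i)$ piece will be cancelled by the negative term.

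\textbf{Negative term, worst-case bound, and main obstacle.} Because $\eta_{i,a}^\t \leq \eta_{i,\max} = 1/(256 n \sqrt{m_i})$, the aggregated negative term is at least $16 n \sqrt{m_i} \sum_t \sum_a \nrm{y_{i,a}^\tp - y_{i,a}^\t}_{y_{i,a}^\t,\phi}^2$. Invoking \Cref{lem:stationary_stab} (after swapping the reference point from $y_{i,a}^\tm$ to $y_{i,a}^\t$ by self-concordance, which is justified by the $\tfrac12$-stability from \Cref{lem:diff_local_oftrl}) lower-bounds this by a constant multiple of $(n/\sqrt{m_i}) \, P_1^T(\xhat_i)$. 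This is large enough that half of it absorbs the residual $P_1^T(\xhat_i)$ contribution from $\sum_a V_{i,a}^T$ via AM--GM on $m_i \sqrt{\log T \cdot P_1^T(\xhat_i)}$, while the other half survives as the $-(2n/\sqrt{m_i}) \, P_1^T(\xhat_i)$ term in the first stated bound. The second (worst-case) bound follows by replacing $V_{i,a}^T \leq 4T$ uniformly and using $\sum_a \sqrt{V_{i,a}^T} \leq 2 m_i \sqrt{T}$, bypassing the expansion step. \emph{The main obstacle} is keeping the leading coefficient at $O(n m_i^{5/2} \log T)$ rather than picking up extra factors of $m_i$: this hinges critically on the joint-experts stability of \Cref{lem:diff_local_oftrl}, which lets \Cref{lem:stationary_stab} translate expert-wise local-norm variation into the stationary-distribution path-length $P_1^T(\xhat_i)$ at the cost of only a single factor of $m_i$, so that the adaptive learning rate can be as large as $\Theta(1/(n \sqrt{m_i}))$ rather than being forced down to a much smaller value.
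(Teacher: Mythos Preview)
Your proposal is correct and follows essentially the same approach as the paper's own proof: the swap-to-external reduction (\Cref{lem:swap2base_regret}), the expert-wise adaptive log-barrier RVU bound (\Cref{lem:oftrl_logbarrier}) under the stability guarantee of \Cref{lem:diff_local_oftrl}, the decomposition $\util_{i,a}^\t - \util_{i,a}^\tm = \xhat_i^\t(a)(\util_i^\t - \util_i^\tm) + (\xhat_i^\t(a) - \xhat_i^\tm(a))\util_i^\tm$, the multilinearity estimate $\nrm{u_i^\t - u_i^\tm}_\infty \le \sum_{j\neq i}\nrm{x_j^\t - x_j^\tm}_1$, and the conversion of the negative local-norm term into $-(2n/\sqrt{m_i})P_1^T(\xhat_i)$ via \Cref{lem:stationary_stab} all match the paper exactly. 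One small simplification: your ``swapping the reference point from $y_{i,a}^\tm$ to $y_{i,a}^\t$ by self-concordance'' is unnecessary---since \Cref{lem:stationary_stab} reads $\nrm{\xhat^\t-\xhat^\tm}_1^2\le 64m_i\sum_a\nrm{y_a^\t-y_a^\tm}_{y_a^\tm,\phi}^2$, a mere index shift $t\to t+1$ already aligns it with the RVU negative term $\nrm{y_{i,a}^\tp-y_{i,a}^\t}_{y_{i,a}^\t,\phi}^2$, which is how the paper handles it.
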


\begin{proof}[Proof of \Cref{lem:swapreg_xhat_bound}]
From the definition of $\set{y_{i,a}^\t}_{t=1}^T$ and \Cref{lem:oftrl_logbarrier} with $\sum_{a \in \calA} \nrm{y_{i,a}^\tp - y_{i,a}^\t}_{y_{i,a}^\t, \phi} \leq 1/2$ in \Cref{lem:sum_mu_onehalf},
for each $a \in \calA_i$ we have
\begin{align}
  &
  \tilde{\Reg}_{i,a}^T  
  = 
  \max_{y \in \Delta(\calA_i)} \sumT \inpr{y - y_{i,a}^\t, \util_{i,a}^\t}
  \nn
  &\leq 
  \frac{m_i \log T}{\eta_{i,a}^{(T+1)}}
  +
  4 \sumT \eta_{i,a}^\t \nrm{\util_{i,a}^\t - \util_{i,a}^\tm}_{*,y_{i,a}^\t,\phi}^2
  -
  \sumT \frac{1}{16 \eta_{i,a}^\t} \nrm{y_{i,a}^\tp - y_{i,a}^\t}_{y_{i,a}^\t, \phi}^2
  +
  6
  \nn 
  &\leq
  \frac{m_i \log T}{\eta_{i,\max}}
  \!+\!
  \sqrt{
    32 m_i \prn[\bigg]{4 \!+\! \sumT \nrm{\util_{i,a}^\t \!-\! \util_{i,a}^\tm}_{*,y_{i,a}^\t,\phi}^2 } \log T
  }
  \!-\!
  \sumT \frac{1}{16 \eta_{i,a}^\t} \nrm{y_{i,a}^\tp \!-\! y_{i,a}^\t}_{y_{i,a}^\t, \phi}^2
  \!+\!
  6
  \com 
  \label{eq:Regia_upper}
\end{align}
where we used the inequality $\sumT z_t / \sqrt{4 + \sum_{s=1}^{t-1} z_s} \leq 2 \sqrt{\sumT z_t}$ for $z_t \in [0,4]$ for all $t \in [T]$.
Hence, using 
\Cref{lem:swap2base_regret} and \Cref{eq:Regia_upper},
we have 
\begin{align}
  \SwapReg_{\xhat_i,\tilde{u}_i}^T
  &=
  \sum_{a \in \calA_i} \tilde{\Reg}_{i,a}^T
  \nn
  &\leq
  \frac{m_i^2 \log T}{\eta_{i,\max}}
  +
  \sum_{a \in \calA_i}
  \sqrt{
    32 m_i \prn*{4 + \sumT \nrm{\util_{i,a}^\t - \util_{i,a}^\tm}_{*,y_{i,a}^\t,\phi}^2 } \log T 
  }
  \nn
  &\qquad
  -
  \sumT 
  \sum_{a \in \calA_i} \frac{1}{16 \eta_{i,a}^\t} \nrm{y_{i,a}^\tp - y_{i,a}^\t}_{y_{i,a}^\t, \phi}^2
  +
  6 m_i
  \per
  \label{eq:swapreg_adv_pre}
\end{align}
Now we have
$
\nrm{\tilde{u}_{i,a}^\t - \tilde{u}_{i,a}^\tm}_{*,y_{i,a}^\t,\phi}^2 
\leq 
2 \nrm{u_{i,a}^\t - u_{i,a}^\tm}_{*,y_{i,a}^\t,\phi}^2
+
4 \xhat_i^\t(a) \nrm{\tilde{c}_i^\t}_{*,y_{i,a}^\t,\phi}^2
+
4 \xhat_i^\tm(a) \nrm{\tilde{c}_i^\tm}_{*,y_{i,a}^\t,\phi}^2
\\
\leq 
2 \nrm{u_{i,a}^\t - u_{i,a}^\tm}_{*,y_{i,a}^\t,\phi}^2
+
4 \xhat_i^\t(a) \nrm{\tilde{c}_i^\t}_{\infty}^2
+
4 \xhat_i^\tm(a) \nrm{\tilde{c}_i^\tm}_{\infty}^2
,
$
and thus the summation in the second term of \Cref{eq:swapreg_adv_pre} is upper bounded by
\begin{equation}\label{eq:suma_util_path}
  \sum_{a \in \calA_i} \nrm{\tilde{u}_{i,a}^\t - \tilde{u}_{i,a}^\tm}_{*,y_{i,a}^\t,\phi}^2
  \leq
  2 \sum_{a \in \calA_i} \nrm{u_{i,a}^\t - u_{i,a}^\tm}_{*,y_{i,a}^\t,\phi}^2 + 8 \tilde{C}_i
  \per
\end{equation}
The second term in \Cref{eq:suma_util_path} is further bounded as
\begin{align}
  &
  \sum_{a \in \calA_i}
  \nrm{u_{i,a}^\t - u_{i,a}^\tm}_{*,y_{i,a}^\t,\phi}^2
  =
  \sum_{a \in \calA_i}
  \nrm{u_i^\t \xhat_i^\t(a) - u_i^\tm \xhat_i^\tm(a)}_{*,y_{i,a}^\t,\phi}^2
  \nn 
  &\leq 
  2 \sum_{a \in \calA_i}
  \nrm{
    u_i^\t \xhat_i^\t(a) 
    - 
    u_i^\tm \xhat_i^\t(a)
  }_{*,y_{i,a}^\t,\phi}^2
  +
  2 \sum_{a \in \calA_i}
  \nrm{
    u_i^\tm \xhat_i^\t(a)
    -
    u_i^\tm \xhat_i^\tm(a)
  }_{*,y_{i,a}^\t,\phi}^2
  \nn 
  &=
  2 \sum_{a \in \calA_i}
  \prn{\xhat_i^\t(a)}^2
  \nrm{
    u_i^\t
    - 
    u_i^\tm 
  }_{*,y_{i,a}^\t,\phi}^2
  +
  2 \sum_{a \in \calA_i}
  \prn{\xhat_i^\t(a) - \xhat_i^\tm(a)}^2
  \nrm{
    u_i^\tm 
  }_{*,y_{i,a}^\t,\phi}^2
  \nn 
  &\leq 
  2 
  \nrm{
    u_i^\t
    - 
    u_i^\tm 
  }_\infty^2 
  +
  2 \sum_{a \in \calA_i}
  \prn{\xhat_i^\t(a) - \xhat_i^\tm(a)}^2
  \nn 
  &=
  2 
  \nrm{ u_i^\t - u_i^\tm }_\infty^2
  +
  2 
  \nrm{ \xhat_i^\t - \xhat_i^\tm }_2^2
  \per 
  \label{eq:hat_swapreg_bound}
\end{align}
Hence, from \Cref{eq:suma_util_path,eq:hat_swapreg_bound},
the sum of the second and third terms in \Cref{eq:swapreg_adv_pre} is upper bounded by
\begin{align}
  &
  \sqrt{
    32 m_i^2 \sum_{a \in \calA_i} 
    \prn*{4 + \sumT \nrm{u_{i,a}^\t - u_{i,a}^\tm}_{*,y_{i,a}^\t,\phi}^2} \log T
  }
  -
  \frac{1}{16 \eta_{i,\max}}
  \sumT 
  \sum_{a \in \calA_i} \nrm{y_{i,a}^\tp - y_{i,a}^\t}_{y_{i,a}^\t, \phi}^2
  \tag{Cauchy--Schwarz and $\eta_{i,a}^\t \leq \eta_{i,\max}$}
  \nn
  &\leq
  \sqrt{
    128 m_i^2 \prn*{
      m_i
      + 
      P_\infty^T(u_i)
      +
      P_2^T(\xhat_i)
      +
      4 \tilde{C}_i
    } \log T 
  }
  -
  \frac{1}{64 m_i \eta_{i,\max}}
  P_1^T(\xhat_i)
  \tag{\Cref{lem:stationary_stab}, \Cref{eq:suma_util_path}, and \Cref{eq:hat_swapreg_bound}}
  \nn
  &\leq
  8 \sqrt{2} m_i^{3/2} \log T
  +
  8 \sqrt{2} m_i
  \sqrt{
    \prn{
    P_\infty^T(u_i)
    + 
    4 \tilde{C}_i
    }
    \log T
  }
  +
  8 \sqrt{2} m_i
  \sqrt{
    P_1^T(\xhat_i)
    \log T
  }
  -
  \frac{1}{64 m_i \eta_{i,\max}} 
  P_1^T(\xhat_i)
  \tag{subadditivity of $z \mapsto \sqrt{z}$ and $\nrm{\cdot}_2 \leq \nrm{\cdot}_1$}
  \nn
  &\leq
  8 \sqrt{2} m_i^{3/2} \log T
  +
  8 \sqrt{2} m_i
  \sqrt{
    \prn{
      P_\infty^T(u_i)
      +
      4 \tilde{C}_i
    }
    \log T
  }
  +
  8192 m_i^3 \eta_{i,\max} \log T
  -
  \frac{1}{128 m_i \eta_{i,\max}} 
  P_1^T(\xhat_i)
  \com 
  \n
\end{align}
where the last line follows from the inequality $b \sqrt{z} - az \leq b^2 / (4a)$ for any $a > 0$, $b \geq 0$ and $z \geq 0$.

Therefore, combining \Cref{eq:swapreg_adv_pre} with the last inequality, we obtain
\begin{align}
  \SwapReg_{\xhat_i,\tilde{u}_i}^T
  &\leq
  \frac{m_i^2 \log T}{\eta_{i,\max}}
  +
  8 \sqrt{2} m_i^{3/2} \log T
  +
  8 \sqrt{2} m_i
  \sqrt{
    \prn{
      P_\infty^T(u_i)
      + 4 \tilde{C}_i
    }
    \log T
  }
  \nn
  &\qquad
  +
  8192 m_i^2 \mmax \eta_{i,\max} \log T
  -
  \frac{1}{128 m_i \eta_{i,\max}}
  P_1^T(\xhat_i)
  +
  6 m_i
  \label{eq:swapreg_xhat_util_upper_2}
  \\
  &\leq
  512 m_i^{5/2} \log T
  +
  64 m_i
  \sqrt{
    T \log T
  }
  \com 
  \label{eq:swapreg_xhat_util_upper_adv}
\end{align}
where the last inequality follows from $\eta_{i,\max} \leq 1/(256 n \sqrt{m_i})$.
The last inequality \Cref{eq:swapreg_xhat_util_upper_adv} is the second upper bound on $\SwapReg_{\xhat_i,\tilde{u}_i}^T$ in \Cref{lem:swapreg_xhat_bound}.

Now we will upper bound $P_\infty^T(u_i) = \sumT \nrm{u_i^\t - u_i^\tm}_\infty^2$ in \Cref{eq:swapreg_xhat_util_upper_2} to prove the first upper bound on $\SwapReg_{\xhat_i,\tilde{u}_i}^T$ in \Cref{lem:swapreg_xhat_bound}.
Let $\calA_{-i} = \times_{j \neq i} \calA_j$.
Then,
\begin{align}
  \nrm{u_i^\t - u_i^\tm}_\infty
  &=
  \max_{a_i \in \calA_i}
  \abs*{
    \sum_{a_{-i} \in \calA_{-i}} u_i(a_i, a_{-i}) \prod_{j \neq i} x_j^\t(a_j)
    -
    \sum_{a_{-i} \in \calA_{-i}} u_i(a_i, a_{-i}) \prod_{j \neq i} x_j^\tm(a_j)
  }
  \nn 
  &\leq 
  \sum_{a_{-i} \in \calA_{-i}}
  \abs*{
    \prod_{j \neq i} x_j^\t(a_j)
    -
    \prod_{j \neq i} x_j^\tm(a_j)
  }
  \leq
  \sum_{j \neq i} \nrm{x_j^\t - x_j^\tm}_1
  \com 
  \label{eq:swap_2}
\end{align}
where the first inequality follows from $u_i^\t(a_i, a_{-i}) \in [-1,1] $ and the last inequality follows from the fact that the total variation of two product distributions is bounded by the sum of the total variations of each marginal distribution.
Now, for any $j \in [n]$, we have 
\begin{align}\label{eq:swap_mid_1}
  \sum_{t=1}^T \nrm{x_j^\t - x_j^\tm}_1^2
  &\leq 
  \sum_{t=1}^T
  \prn*{
    2 \nrm{x_j^\t - \xhat_j^\t}_1^2
    +
    4 \nrm{\xhat_j^\t - \xhat_j^\tm}_1^2
    +
    2 \nrm{x_j^\tm - \xhat_j^\tm}_1^2
  }
  \nn 
  &\leq 
  \sumT
  \prn*{
    4 \nrm{x_j^\t - \xhat_j^\t}_1^2
    +
    4 \nrm{\xhat_j^\t - \xhat_j^\tm}_1^2
  }
  \nn 
  &\leq 
  8 \hat{C}_j
  +
  \sumT \nrm{\xhat_j^\t - \xhat_j^\tm}_1^2
  =
  8 \hat{C}_j
  +
  P_1^T(\xhat_j)
  \com
\end{align}
where we define $x_j^{(0)} = \xhat_j^{(0)} = (1 / m) \ones$ for simplicity.
Hence, 
from \Cref{eq:swap_2}, the Cauchy--Schwarz inequality, and \Cref{eq:swap_mid_1},
 we have
\begin{align}
  &
  P_\infty^T(u_i)
  =
  \sum_{t=1}^T \nrm{u_i^\t - u_i^\tm}_\infty^2
  \leq   
  \sumT 
  \prn[\Bigg]{
    \sum_{j \neq i} \nrm{x_j^\t - x_j^\tm}_1
  }^2
  \leq 
  (n \!-\! 1) \sum_{j \neq i} \sumT \nrm{x_j^\t - x_j^\tm}_1^2 
  \nn 
  &\leq 
  8 (n\!-\!1) \sum_{j \neq i} \hat{C}_j
  +
  4 (n\!-\!1) \sum_{j \neq i} \sumT \nrm{\xhat_j^\t - \xhat_j^\tm}_1^2
  =
  8 (n\!-\!1) \sum_{j \neq i} \hat{C}_j
  +
  4 (n\!-\!1) \sum_{j \neq i} P_1^T(\xhat_j)
  \per 
  \label{eq:swap_3}
\end{align}
Finally, combining \Cref{eq:swapreg_xhat_util_upper_2} with \Cref{eq:swap_3},
we obtain
\begin{align}
  &
  \SwapReg_{\xhat_i,\tilde{u}_i}^T
  \nn
  &\leq
  \frac{m_i^2 \log T}{\eta_{i,\max}}
  \!+\!
  8 \sqrt{2} m_i^{3/2} \log T
  \!+\!
  8 \sqrt{2} m_i
  \sqrt{
    \prn*{
      4 (n\!-\!1) \sum_{j \neq i} 
      P_1^T(\xhat_j)
      \!+\!
      8 (n\!-\!1) \sum_{j \neq i} \hat{C}_j
      \!+\! 
      4 \tilde{C}_i
    }
    \log T
  }
  \nn
  &\qquad
  +
  8192 m_i^2 \mmax \eta_{i,\max} \log T
  -
  \frac{1}{128 m_i \eta_{i,\max}} 
  P_1^T(\xhat_i)
  +
  6 m_i
  \nn
  &\leq
  512
  n m_i^{5/2} \log T
  +
  16 \sqrt{2}
  m_i
  \sqrt{
    \prn*{
      n \sum_{j \neq i} 
      P_1^T(\xhat_j)
      +
      2 n \sum_{j \neq i} \hat{C}_j
      + 
      \tilde{C}_i
    }
    \log T
  }
  -
  \frac{2 n}{\sqrt{m_i}} 
  P_1^T(\xhat_i)
  \com 
  \n
\end{align}
where we used $\eta_{i,\max} \leq 1/(256 n \sqrt{m_i})$.
This is the desired second upper bound of $\SwapReg_{\xhat_i,\tilde{u}_i}^T$ in \Cref{lem:swapreg_xhat_bound}.
\end{proof}

\subsection{Proof of \Cref{thm:indiv_swapreg}}
Now, from the preliminary lemmas provided in \Cref{subsec:pre_lemmas_swap} and \Cref{lem:swapreg_xhat_bound}, we are ready to prove \Cref{thm:indiv_swapreg}.

\begin{proof}[Proof of \Cref{thm:indiv_swapreg}]
We first prove the upper bound on $\SwapReg_{x_i,u_i}^T$.
From \Cref{lem:swapreg_xhat_bound} and \Cref{prop:swapreg_relation},
we have
\begin{equation}
  \SwapReg_{x_i,u_i}^T
  \!\leq\!
  512
  n m_i^{5/2} \log T
  +
  16 \sqrt{2}
  m_i
  \sqrt{
    \!
    \prn[\bigg]{
      \!
      n \sum_{j \neq i} 
      P_1^T(\xhat_j)
      \!+\!
      2 n \hat{S}_{-i}
      \!+\! 
      \tilde{C}_i
    }
    \!
    \log T
  }
  +
  C_i
  -
  \frac{2 n}{\sqrt{m_i}} 
  P_1^T(\xhat_i)
  \com 
  \label{eq:swapreg_i}
\end{equation}
where we recall that $\hat{S}_{-i} = \sum_{j \neq i} \hat{C}_j$ and $C_i = 2 \hat{C}_i + 2 \tilde{C}_i$.
Taking the summation of the inequality~\Cref{eq:swapreg_i} over the set of players $[n]$
and using the same argument as in \Cref{eq:swapreg_i_minus},
we can upper bound the sum of the swap regret, $\sum_{i \in [n]} \SwapReg_{x_i,u_i}^T$, as follows:
\begin{align}
  &
  \sum_{i \in [n]} \SwapReg_{x_i,u_i}^T
  -
  \sum_{i \in [n]}
  \prn*{
    512 n m_i^{5/2} \log T 
    +
    C_i
  }
  \nn 
  &\leq 
  16 \sqrt{2}
  \sum_{i \in [n]}
  m_i
  \sqrt{\prn*{ n \sum_{j \neq i} P_1^T(\xhat_j) + 2 n \hat{S}_{-i} + \tilde{C}_i} \log T }
  -
  \sum_{i\in [n]}
  \frac{2 n}{\sqrt{m_i}}
  P_1^T(\xhat_i)
  \nn 
  &\leq
  16 \sqrt{2}
  \sqrt{\sum_{i \in [n]} m_i^2 \log T}
  \sqrt{ n \sum_{i \in [n]} \sum_{j \neq i} P_1^T(\xhat_j) + 2 n^2 \hat{S}_{-i} + \sum_{i \in [n]} \tilde{C}_i }
  -
  \frac{2 n}{\sqrt{\mmax}}
  \sum_{i\in [n]}
  P_1^T(\xhat_i)
  \tag{Cauchy--Schwarz}
  \nn 
  &\leq
  16 \sqrt{2}
  \sqrt{ \sum_{i \in [n]} m_i^2 \log T }
  \sqrt{ n^2 \sum_{i \in [n]} P_1^T(\xhat_i) + 2 n^2 \hat{S}_{-i} + \tilde{S} }
  -
  \frac{2 n}{\sqrt{\mmax}}
  \sum_{i\in [n]}
  P_1^T(\xhat_i)
  \nn  
  &\leq
  512 n \sqrt{\mmax} \sum_{i \in [n]} m_i^2 \log T
  +
  16 \sqrt{ 
    2
    \sum_{i \in [n]} m_i^2
    \prn*{n^2 \hat{S}_{-i} + 2 \tilde{S}} \log T
  }
  -
  \frac{n}{\sqrt{m}}
  \sum_{i\in [n]}
  P_1^T(\xhat_i)
  \com 
  \label{eq:swapreg_i_minus}
\end{align}
where 
we let $\hat{S}_{-i} = \sum_{j \neq i} \hat{C}_j$ and recall that $\hat{S} = \sum_{i \in [n]} \hat{C}_i$, $\tilde{S} = \sum_{i \in [n]} \tilde{C}_i$, and $\mmax = \max_{i \in [n]} m_i$.
Here,
in the last inequality we used the subadditivity of $z \mapsto \sqrt{z}$ for $z \geq 0$ and the inequality $b \sqrt{z} - a z \leq b^2 / (4a)$ for $a > 0$, $b \geq 0$ and $z \geq 0$ with $z = \sum_{i \in [n]} P_1^T(\xhat_i)$.
Combining \Cref{eq:swapreg_i_minus} with the fact that $\sum_{i \in [n]} \SwapReg_{x_i,u_i}^T \geq 0$, we obtain
\begin{equation}
  n \sum_{i \in [n]} P_1^T(\xhat_i)
  \leq
  768 n \mmax \sum_{i \in [n]} m_i^2 \log T
  + 
  16 \sqrt{ \sum_{i \in [n]} m_i^2 \prn*{n^2 \hat{S}_{-i} + \tilde{S}} \log T}
  +
  \sqrt{\mmax} S
  \com 
  \n
\end{equation}
where we recall $S = \sum_{i \in [n]} C_i$.
Plugging the last inequality in \Cref{eq:swapreg_i}, we have 
\begin{align}
  &
  \SwapReg_{x_i,u_i}^T - 512 n m_i^{5/2} \log T
  \nn 
  &\leq 
  m_i
  \sqrt{
    \!
    \prn[\bigg]{ 
      768 n \mmax \sum_{i \in [n]} m_i^2 \log T
      \!+\! 
      16 \sqrt{ \sum_{i \in [n]} m_i^2 
      \prn*{n^2 \hat{S}_{-i} \!+\! \tilde{S}} \log T}
      \!+\!
      \sqrt{\mmax} S
      \!+\!
      2 n \hat{S}_{-i} 
      \!+\!
      \tilde{C}_i
    }
    \!
    \log T
  }
  \!+\!
  C_i
  \nn 
  &\lesssim
  m_i
  \sqrt{
    \prn[\bigg]{ 
      n \mmax \sum_{i \in [n]} m_i^2 \log T
      +
      n \hat{S}
      + 
      \sqrt{\mmax} S
    }
    \log T
  }
  +
  C_i
  \com 
  \n
\end{align}
where in the last line we used
the AM--GM inequality and $\hat{S}_{-i} \leq \hat{S}$.
To simplify the last inequality, 
from $m_i \leq \mmax$ we get 
\begin{align}
  \SwapReg_{x_i,u_i}^T
  &
  \lesssim
    \mmax
    \sqrt{
      \prn*{
        n^2 \mmax^3 \log T 
        +
        \prn*{ 
          n + \sqrt{\mmax}
        }
        \hat{S}
        +
        \sqrt{m} \tilde{S}
      }
      \log T
    }
    +
    C_i
  \nn 
  &
  \lesssim
  n \mmax^{5/2} \log T 
  +
  m
  \sqrt{
    \prn*{
      \hat{S} \prn*{n + \sqrt{\mmax} }
      +
      \tilde{S} \sqrt{\mmax}
    }
    \log T
  }
  +
  C_i 
  \per 
  \n
\end{align}
Finally, from \Cref{lem:swapreg_xhat_bound} and \Cref{prop:swapreg_relation}, for any opponents we have 
\begin{equation}
  \SwapReg_{x_i,u_i}^T
  \lesssim
  n m_i^{5/2} \log T
  +
  m_i
  \sqrt{
    T \log T
  }
  +
  C_i 
  \per 
  \n
\end{equation}
Taking the minimum of the last two upper bounds on $\SwapReg_{x_i,u_i}^T$ gives the desired upper bound on $\SwapReg_{x_i,u_i}^T$.

We next prove the upper bound on $\SwapReg_{x_i,\util_i}^T$.
Since it holds that $\SwapReg_{x_i,\util_i}^T \geq 0$ by the definition of the swap regret, we can apply a similar argument as in the case of deriving the regret upper bound for $\SwapReg_{x_i,u_i}^T$.
From \Cref{lem:swapreg_xhat_bound} and \Cref{prop:swapreg_relation},
we have
\begin{equation}\label{eq:swapreg_i_util}
  \SwapReg_{x_i,\util_i}^T 
  \!\leq\!
  512
  n m_i^{5/2} \log T
  +
  16 \sqrt{2}
  m_i
  \!
  \sqrt{
    \!
    \prn[\bigg]{
      n \sum_{j \neq i} 
      P_1^T(\xhat_j)
      \!+\!
      2 n \hat{S}_{-i}
      \!+\! 
      \tilde{C}_i
    }
    \!
    \log T
  }
  +
  \hat{C}_i
  -
  \frac{2 n}{\sqrt{m_i}} 
  P_1^T(\xhat_i)
  \per
\end{equation}

Taking the summation of the inequality \Cref{eq:swapreg_i_util} over the set of players $[n]$, we can upper bound the sum of the swap regret, $\sum_{i \in [n]} \SwapReg_{x_i,\util_i}^T$, by
\begin{align}
  &
  \sum_{i \in [n]} \SwapReg_{x_i,\util_i}^T
  -
  \sum_{i \in [n]}
  \prn*{
    512 n m_i^{5/2} \log T 
    +
    \hat{C}_i
  }
  \nn 
  &\leq
  512 n \sqrt{\mmax} \sum_{i \in [n]} m_i^2 \log T
  +
  16 \sqrt{ 
    2
    \sum_{i \in [n]} m_i^2
    \prn*{n^2 \hat{S}_{-i} + 2 \tilde{S}} \log T
  }
  -
  \frac{n}{\sqrt{m}}
  \sum_{i\in [n]}
  P_1^T(\xhat_i)
  \com 
  \label{eq:swapreg_i_minus_util}
\end{align}
where 
we recall that $\hat{S}_{-i} = \sum_{j \neq i} \hat{C}_j$, $\hat{S} = \sum_{i \in [n]} \hat{C}_i$, and $\mmax = \max_{i \in [n]} m_i$.
Combining \Cref{eq:swapreg_i_minus_util} with the fact that $\sum_{i \in [n]} \SwapReg_{x_i,\util_i}^T \geq 0$, we obtain
\begin{equation}
  n \sum_{i \in [n]} P_1^T(\xhat_i)
  \leq
  768 n \mmax \sum_{i \in [n]} m_i^2 \log T
  + 
  16 \sqrt{ \sum_{i \in [n]} m_i^2 \prn*{n^2 \hat{S}_{-i} + \tilde{S}} \log T}
  +
  \sqrt{\mmax} \hat{S}
  \per
  \n
\end{equation}
Plugging the last inequality in \Cref{eq:swapreg_i_util}, we have 
\begin{align}
  &
  \SwapReg_{x_i,\util_i}^T - 512 n m_i^{5/2} \log T
  \nn 
  &\leq\!
  16 \sqrt{2} m_i\!
  \sqrt{
    \!
    \prn[\bigg]{ 
      768 n \mmax \sum_{i \in [n]} m_i^2 \log T
      \!+\! 
      16 \! \sqrt{ \sum_{i \in [n]} m_i^2 \prn*{n^2 \hat{S}_{-i} \!+\! \tilde{S}} \log T}
      \!+\!
      \sqrt{\mmax} \hat{S}
      \!+\!
      2 n \hat{S}_{-i}
      \!+\!
      \tilde{C}_i
    }
    \log T
  }
  \!+\!
  \hat{C}_i
  \nn 
  &\lesssim
  m_i
  \sqrt{
    \prn[\bigg]{ 
      n \mmax \sum_{i \in [n]} m_i^2 \log T
      +
      \prn*{ 
        n
        + 
        \sqrt{\mmax} 
      } 
      \hat{S}
      +
      \tilde{C}_i
    }
    \log T
  }
  +
  m_i \prn*{\sum_{i \in [n]} m_i^2 \tilde{S} \log T}^{1/4}
  +
  \hat{C}_i
  \com 
  \n
\end{align}
where the last line follows from the AM--GM inequality.
To simplify this inequality, 
from $m_i \leq \mmax$ we get 
\begin{align}
  \SwapReg_{x_i,\util_i}^T
  &
  \lesssim
    \mmax
    \sqrt{
      \prn*{
        n^2 \mmax^3 \log T 
        +
        \prn*{ 
          n + \sqrt{\mmax}
        }
        \hat{S}
      }
      \log T
    }
    +
    \mmax \prn*{n \mmax^2 \tilde{S} \log T}^{1/4}
    +
    \hat{C}_i
  \nn 
  &
  \lesssim
    n \mmax^{5/2} \log T 
    +
    \mmax
    \sqrt{
      \prn*{
        \hat{S}
        \prn*{
          n
          +
          \sqrt{\mmax}
        }
        +
        \tilde{C}_i
      }
      \log T
    }
    +
    \prn{\tilde{S} n m^6 \log T}^{1/4}
    +
    \hat{C}_i 
  \per 
  \n
\end{align}
Finally, from \Cref{lem:swapreg_xhat_bound} and \Cref{prop:swapreg_relation}, for any opponents we have 
\begin{equation}
  \SwapReg_{x_i,\util_i}^T
  \lesssim
  n m_i^{5/2} \log T
  +
  m_i
  \sqrt{
    T \log T
  }
  +
  \hat{C}_i 
  \per 
  \n
\end{equation}
Taking the minimum of the last two upper bounds on $\SwapReg_{x_i,\util_i}^T$ completes the proof. 
\end{proof}

\section{Deferred Proofs of Lower Bounds from \Cref{sec:lower_bound}}\label{app:proof_lower_bound}

This section provides the proof of \Cref{thm:lower_bounds}.

\subsection{Lower bounds for online linear optimization over the simplex}
Here we provide bounds for online linear optimization over the simplex, which will be used in the proof of \Cref{thm:lower_bounds}.

\begin{lemma}[{\citealt[Theorem 8]{orabona15optimal}}]\label[lemma]{lem:olo_simplex_lower_logm}
  Consider online linear optimization for $T \geq 7$ rounds over the $(d-1)$-dimensional probability simplex $\Delta_m$ for $d \in [2, \exp(T/3)]$.
  Then, there exists a sequence of loss vectors $\ell^{(1)}, \dots, \ell^{(T)} \in [-1,1]^d$ with~$\nrm{\ell^\t}_\infty \leq 1$ and $u \in \Delta_d$ such that the regret of any algorithm that selects $x^\t \in \Delta_d$ at each round $t = 1, \dots, T$ is lower bounded by 
  \begin{equation}
    \Reg^T(u)
    =
    \sumT \inpr{x^\t - u, \ell^\t}
    \geq
    0.09 \sqrt{T \log d} - 2 \sqrt{T}
    \per
    \n
  \end{equation}
\end{lemma}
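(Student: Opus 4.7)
The plan is to establish this via the probabilistic method with a Rademacher adversary, which is the standard recipe for minimax lower bounds in online linear optimization over the simplex. Concretely, I would randomize the loss sequence, reduce the expected regret to the expected maximum of $d$ independent Rademacher random walks, and then quantitatively lower bound that maximum.

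First I would construct the losses stochastically: let $\ell^\t \in \set*{-1, +1}^d$ have i.i.d.\ Rademacher coordinates, independent across $t$ and across the coordinates. Since any algorithm's iterate $x^\t$ depends only on $\ell^{(1)}, \dots, \ell^{(t-1)}$ (and possibly internal randomness), independence gives $\E\brk*{\inpr*{x^\t, \ell^\t}} = 0$ for every $t$. Choosing the comparator to be the standard basis vector $e_{i^\star}$ with $i^\star \in \argmin_{i \in [d]} \sumT \ell^\t(i)$ then yields
\[
\E\brk*{\Reg^T(e_{i^\star})}
\;=\;
- \E\brk*{\min_{i \in [d]} \sumT \ell^\t(i)}
\;=\;
\E\brk*{\max_{i \in [d]} \sumT \epsilon_{t,i}},
\]
where $\epsilon_{t,i} = -\ell^\t(i)$ are themselves i.i.d.\ Rademacher. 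A randomized lower bound implies the existence of a deterministic worst-case realization, so it suffices to lower bound this expected maximum by the claimed quantity.

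Next I would invoke a quantitative estimate of the form
\[
\E\brk*{\max_{i \in [d]} \sumT \epsilon_{t,i}}
\;\geq\;
0.09 \sqrt{T \log d} \;-\; 2 \sqrt{T},
\]
valid for $T \geq 7$ and $d \in [2, \exp(T/3)]$. The standard derivation approximates each Rademacher partial sum by a centered Gaussian of variance $T$ via a Berry--Esseen bound, applies the classical anti-concentration estimate $\E\brk*{\max_{i \in [d]} Z_i} \geq c \sqrt{\log d}$ for $d$ i.i.d.\ standard Gaussians with an explicit constant $c$ that remains valid on the prescribed range of $d$, scales by $\sqrt{T}$, and absorbs the Berry--Esseen remainder into the additive $-2\sqrt{T}$ correction. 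The range restriction $d \leq \exp(T/3)$ is precisely what is needed to keep $\sqrt{\log d} \lesssim \sqrt{T}$ so that the Gaussian approximation does not degrade faster than the main term.

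The main obstacle is purely quantitative. The qualitative bound $\Omega(\sqrt{T \log d})$ is classical and follows from either a Slepian-type comparison or a direct sub-Gaussian maximal inequality, but reproducing Orabona and P\'al's explicit constants $0.09$ and $2$ requires carefully bookkeeping the Berry--Esseen approximation error against the Gaussian $\sqrt{\log d}$ lower bound uniformly over the entire range $d \in [2, \exp(T/3)]$, and dealing with the borderline case $d = 2$ where the Gaussian constant $c$ is smallest. This non-asymptotic bookkeeping is the only delicate step; the probabilistic reduction in the first two paragraphs is textbook.
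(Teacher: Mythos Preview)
The paper does not prove this lemma at all: it is stated as a citation of \citet[Theorem~8]{orabona15optimal} and used as a black box in the proof of \Cref{thm:lower_bounds}~(i). So there is no ``paper's own proof'' to compare against.

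Your outline is the standard probabilistic-method argument and is the right strategy for this kind of bound: randomize with i.i.d.\ Rademacher losses, use independence to kill the algorithm term, and reduce to lower-bounding $\E\brk*{\max_{i \in [d]} \sumT \epsilon_{t,i}}$. You correctly identify that the only nontrivial content is the quantitative step---obtaining the explicit constants $0.09$ and $2$ uniformly over $d \in [2, \exp(T/3)]$---and that this is exactly what Orabona and P\'al work out. Since the paper simply imports their result, your sketch is more than the paper itself provides; for the purposes of this paper, citing the external theorem is all that is required.
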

  
\begin{lemma}\label[lemma]{lem:olo_simplex_lower}
  Consider online linear optimization for $T \geq 1$ rounds over the $(d-1)$-dimensional probability simplex $\Delta_d$.
  Then, there exists a sequence of loss vectors $\ell^{(1)}, \dots, \ell^{(T)} \in [-1,1]^d$ with~{$\nrm{\ell^\t}_\infty \leq 1$} and $u \in \Delta_d$ such that the regret of any algorithm that selects $x^\t \in \Delta_d$ at each round $t = 1, \dots, T$ is lower bounded by 
  \begin{equation}
    \Reg^T(u)
    =
    \sumT \inpr{x^\t - u, \ell^\t}
    \geq
    \sqrt{\frac{T}{2}}
    \per
    \n
  \end{equation}
\end{lemma}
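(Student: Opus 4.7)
The plan is to follow the classical probabilistic method with a random two-action loss sequence, tuned to extract the exact constant $1/\sqrt{2}$. Assume $d \geq 2$ (the statement is vacuous for $d=1$, where $\Delta_d$ is a singleton). At each round $t$, independently, set $\ell^\t = (\epsilon_t, -\epsilon_t, 0, \ldots, 0) \in \R^d$, where $\epsilon_1, \ldots, \epsilon_T$ are i.i.d.\ Rademacher signs. Clearly $\nrm{\ell^\t}_\infty = 1$, so this is a valid instance.

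First I would show that the expected regret of any (possibly randomized) algorithm against this random loss sequence equals $\E \abs{S_T}$, where $S_T = \sumT \epsilon_t$. Since $x^\t \in \Delta_d$ depends only on $\ell^{(1)}, \ldots, \ell^{(t-1)}$, it is independent of $\epsilon_t$, so $\E[\inpr{x^\t, \ell^\t}] = \E[(2 x^\t(1) - 1)\epsilon_t] = 0$, giving $\E \sumT \inpr{x^\t, \ell^\t} = 0$. Meanwhile, the best fixed comparator in the simplex is attained at either $e_1$ or $e_2$, so $\min_{u \in \Delta_d} \sumT \inpr{u, \ell^\t} = \min\set{S_T, -S_T} = -\abs{S_T}$. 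Combining the two identities, the expected regret of the algorithm against the random sequence equals $\E\abs{S_T}$.

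Next I would lower bound $\E\abs{S_T}$ by invoking Khintchine's inequality with Szarek's sharp constant $1/\sqrt{2}$, which yields $\E\abs{S_T} \geq \tfrac{1}{\sqrt{2}} \prn{\E S_T^2}^{1/2} = \sqrt{T/2}$. Consequently the expected regret against the random sequence is at least $\sqrt{T/2}$, and by the probabilistic method there exists a deterministic realization $(\ell^{(1)}, \ldots, \ell^{(T)})$ and a comparator $u \in \Delta_d$ for which $\Reg^T(u) \geq \sqrt{T/2}$, completing the proof.

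The main obstacle will be obtaining the sharp constant $1/\sqrt{2}$ rather than a weaker universal constant: a generic Khintchine bound already gives $\E\abs{S_T} = \Theta(\sqrt{T})$ but with a suboptimal prefactor, whereas Szarek's sharp inequality is precisely what matches the target $\sqrt{T/2}$. As a backup that avoids invoking Szarek, one can verify the borderline small-$T$ cases by direct enumeration (notably $T = 2$, where $\E\abs{S_T} = 1 = \sqrt{T/2}$) and use a short induction or a central limit comparison for larger $T$.
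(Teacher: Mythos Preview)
Your proposal is correct and follows essentially the same classical probabilistic argument as the paper: both use the Rademacher loss sequence $\ell^\t = \epsilon_t(e_1 - e_2)$, reduce the expected regret to $\E\abs{\sum_t \epsilon_t}$, and invoke Khintchine's inequality with the sharp constant $1/\sqrt{2}$. One minor slip: for general $d \geq 2$ you have $\inpr{x^\t, \ell^\t} = (x^\t(1) - x^\t(2))\epsilon_t$, not $(2x^\t(1)-1)\epsilon_t$, but since $x^\t$ is independent of $\epsilon_t$ the conclusion $\E[\inpr{x^\t, \ell^\t}] = 0$ is unaffected.
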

This is a very minor variant of the well-known lower bound in online linear optimization (see \eg~\citealt{hazan16introduction,orabona2019modern}).
We include the proof for completeness.
\begin{proof}
Let $\sigma^{(1)}, \dots, \sigma^{(T)}$ be i.i.d.~Rademacher random variables, \ie~$\P\prn{\sigma^\t = 1} = \P\prn{\sigma^\t = -1} = 1/2$.
Then define $z = e_1 - e_2$ and $\lsto^\t = \sigma^\t z$ for each $t \in [T]$.
Note that this $\lsto^\t$ satisfies $\nrm{\lsto^\t}_\infty = \nrm{z}_\infty = 1$.
Then, we have 
\begin{align}
  \sup_{\ell^{(1)}, \dots, \ell^{(T)}} \max_{u \in \Delta_d} \; & \Reg^T(u)
  \geq 
  \E_{\lsto^{(1)},\dots,\lsto^{(T)}} \brk*{ 
    \sumT \inpr{x^\t, \lsto^\t} 
    - 
    \min_{u \in \Delta_{d}} 
    \sumT \inpr{u, \lsto^\t} 
  }
  \nn
  &= 
  \E_{\sigma^{(1)},\dots,\sigma^{(T)}} \brk*{ 
    \sumT \sigma^\t \inpr{x^\t, z} 
    - 
    \min_{u \in \Delta_{d}} 
    \sumT \sigma^\t \inpr{u, z} 
  }
  \nn
  &= 
  \E_{\sigma^{(1)},\dots,\sigma^{(T)}} \brk*{ 
    \max_{u \in \Delta_{d}} 
    \sumT \sigma^\t \inpr{u, z} 
  }
  \geq
  \E_{\sigma^{(1)},\dots,\sigma^{(T)}} \brk*{ 
    \max_{u \in \set{e_1, e_2}} 
    \sumT \sigma^\t \inpr{u, z} 
  }
  \com 
  \n
\end{align}
where the last equality follows from the fact that $\sigma^\t$ and $- \sigma^\t$ follow the same distribution.
From $\max\set{a,b} \geq (a+b)/2 + \abs{a-b}/2$ for $a, b \in \R$, this is further lower bounded by
\begin{align}
  &
  \E_{\sigma^{(1)},\dots,\sigma^{(T)}} \brk*{ 
    \frac12
    \sumT \sigma^\t \inpr{e_1 + e_2, z} 
    +
    \frac12 \abs*{
      \sumT \sigma^\t \inpr{e_1 - e_2, z} 
    }
  }
  \nn
  &
  =
  \frac12 
  \E_{\sigma^{(1)},\dots,\sigma^{(T)}} \brk*{ 
    \abs*{
      \sumT \sigma^\t \inpr{e_1 - e_2, z} 
    }
  }
  =
  \E_{\sigma^{(1)},\dots,\sigma^{(T)}} \brk*{ 
    \abs*{
      \sumT \sigma^\t
    }
  }
  \geq 
  \sqrt{\frac{T}{2}}
  \com 
  \n
\end{align}
where the first equality follows from $\E\brk{\sigma_t} = 0$,
the last equality from $\inpr{e_1 - e_2, z} = \nrm{e_1 - e_2}_2^2 = 2$,
and the last inequality from the Khintchine's inequality.
This completes the proof.
\end{proof}

\subsection{Proof of \Cref{thm:lower_bounds}}

\begin{proof}[Proof of \Cref{thm:lower_bounds}~(i)]
We will prove $\Reg_{x,\gtil}^T = \Omega(\sqrt{\tilde{C}_{\xrm} \log \mx})$.
To simplify the analysis, we focus on the case where $\tilde{C}_{\xrm} / 2$ is an integer.
We will consider a two-player zero-sum game
where a payoff matrix is $A = 0$,
and for rounds $t = 1, \dots, \tilde{C}_{\xrm} / 2$,
the expected reward vectors are corrupted so that $\sum_{t=1}^{\tilde{C}_{\xrm} / 2} \nrm{g^\t - \gtil^\t}_\infty \leq \Ctilx$,
while no corruption occurs beyond this;
that is, $g^\t = \gtil^\t$ for $t = \tilde{C}_\xrm/2 + 1, \dots, T$, and $x^\t = \xhat^\t$, $y^\t = \yhat^\t$, and $\ltil^\t = \ell^\t$ for $t = 1, \dots, T$.
Note that in this case we have $\sumT \nrm{g^\t - \gtil^\t}_\infty \leq \tilde{C}_{\xrm}$.

From \Cref{lem:olo_simplex_lower_logm},
there exists a sequence of 
$\set{\gtil^\t}_{t=1}^{\tilde{C}_{\xrm} / 2}$ 
such that 
\begin{equation}
  \max_{x^* \in \Delta_{\mx}}
  \sum_{t=1}^{\tilde{C}_{\xrm} / 2} \inpr{x^* - x^\t, \gtil^\t}
  \geq 
  0.09 \sqrt{ \prn{\tilde{C}_{\xrm} / 2} \log \mx } - 2 \sqrt{ \tilde{C}_{\xrm} / 2 }
  \geq 
  0.06 \sqrt{ \tilde{C}_{\xrm} \log \mx } - \sqrt{ 2 \tilde{C}_{\xrm}  }
  \per
  \n
\end{equation}
For this $\set{\gtil^\t}_{t=1}^{\tilde{C}_{\xrm} / 2}$,
since $\gtil^\t = A y^\t + \tilde{c}_{\xrm}^\t$, $A = 0$, and $\tilde{c}_{\xrm}^\t = 0$ for all $t \geq \tilde{C}_{\xrm} / 2 + 1$, we can lower bound $\Reg_{x, \gtil}^T$ as follows:
\begin{align}
  \Reg_{x, \gtil}^T
  &=
  \max_{x^* \in \Delta_{\mx}}
  \sumT \inpr{x^* \!-\! x^\t, \gtil^\t}
  =
  \max_{x^* \in \Delta_{\mx}}
  \set*{
    \sum_{t=1}^{{\tilde{C}_{\xrm} / 2}} \inpr{x^* \!-\! x^\t, \gtil^\t}
    +
    \sum_{t=\tilde{C}_{\xrm} / 2+1}^{T} \!\! \inpr{x^* \!-\! x^\t, A y^\t}
  }
  \nn
  &=
  \max_{x^* \in \Delta_{\mx}}
  \set*{
    \sum_{t=1}^{\tilde{C}_{\xrm}/2} \inpr{x^* - x^\t, \gtil^\t}
  }
  \geq 
  0.06 \sqrt{\tilde{C}_{\xrm} \log \mx} - \sqrt{2 \tilde{C}_{\xrm}}
  \com
  \n
\end{align}
which is the desired lower bonnd on $\Reg_{x, \gtil}^T$.
The lower bound for $\Reg_{y,\ltil}^T$ can be proven in a similar manner.
\end{proof}

\begin{proof}[Proof of \Cref{thm:lower_bounds}~(ii)]
We will prove $\Reg_{x,g}^T  = \Omega \prn{ \hat{C}_{\xrm} }$.
To simplify the discussion, we consider the case where $\hat{C}_\xrm / 2$ is an integer.
We will consider a two-player zero-sum game in the corrupted regime with the following property.
Consider the payoff matrix $A$ such that all elements in the first $\mx - 1$ rows are $1$, and all elements in the $\mx$-th row are $0$.
The strategies is corrupted so that 
$x^\t = \xhat^\t + \hat{c}_\xrm^\t = e_{\mx}$
for each round $t = 1, \dots, \hat{C}_{\xrm}/2$,
and no corruption occurs beyond this.
Then, the corruption level in the strategies of $x$-player is upper bounded by $\sumT \nrm{\hat{c}_\xrm^\t}_1 = \sum_{t=1}^{\hat{C}_\xrm/2} \nrm{e_{\mx} - \xhat^\t}_1 \leq \hat{C}_\xrm$.
From the construction of the payoff matrix $A$, 
for each $t = 1, \dots, \hat{C}_{\xrm} / 2$,
we also have $A y^\t = \ones - e_{\mx}$ and $\inpr{x^\t, A y^\t} = 0$ since $A^\top x^\t = 0$.

From this construction of the corrupted game, for any $x^* \in \Delta_{\mx}$, we have
\begin{equation}
  \sum_{t=1}^{\hat{C}_\xrm/2} \inpr{x^* - x^\t, g^\t}
  =
  \sum_{t=1}^{\hat{C}_\xrm/2} \inpr{x^*, A y^\t}
  = 
  \sum_{t=1}^{\hat{C}_\xrm/2} \inpr{x^*, \ones - e_{\mx}}
  =
  \frac{\hat{C}_\xrm}{2} \prn{1 - x^*(\mx)}
  \com 
  \label{eq:lower_bound_2_reg_Cover2}
\end{equation}
where we used $\inpr{x^\t, A y^\t} = 0$, $A y^\t = \ones - e_{\mx}$, and $x^* \in \Delta_{\mx}$.
Therefore, 
\begin{align}
  \Reg_{x,g}^T
  &=
  \max_{x^* \in \Delta_{\mx}}
  \sumT \inpr{x^* - x^\t, g^\t}
  \nn
  &=
  \max_{x^* \in \Delta_{\mx}}
  \set*{
    \sum_{t=1}^{\hat{C}_\xrm/2} \inpr{x^* - x^\t, g^\t}
    +
    \sum_{t=\hat{C}_\xrm/2+1}^T \inpr{x^* - x^\t, A y^\t}
  }
  \nn
  &=
  \max_{x^* \in \Delta_{\mx}}
  \set*{
    \frac{\hat{C}_\xrm}{2} \prn{1 - x^*(\mx)}
    +
    \sum_{t=\hat{C}_\xrm/2+1}^T \inpr{x^* - x^\t, \ones - e_{\mx}}
  }
  \tag{by \Cref{eq:lower_bound_2_reg_Cover2}}
  \nn
  &=
  \max_{x^* \in \Delta_{\mx}}
  \set*{
    \frac{\hat{C}_\xrm}{2} \prn{1 - x^*(\mx)}
    +
    \sum_{t=\hat{C}_\xrm/2+1}^T \prn{x^\t(\mx) - x^*(\mx)}
  }
  \geq 
  \frac{\hat{C}_\xrm}{2}
  \com
  \n
\end{align}
where the last inequality follows by choosing $x^*$ with $x^*(\mx) = 0$.
This completes the proof of $\Reg_{x,g}^T = \Omega(\hat{C}_\xrm)$.
The lower bound of $\Reg_{y,\ell}^T \geq \hat{C}_\yrm / 2$ can be proven in a similar manner.
\end{proof}

\begin{proof}[Proof of \Cref{thm:lower_bounds}~(iii)]
Let $\kappa = 1/4$.
Then, it suffices to prove that there exists a two-player zero-sum game with 
$\sumT \nrm{y^\t - \yhat^\t}_1 \leq \Chaty$
such that
$\Reg_{\xhat,g}^T < \kappa \sqrt{\Chaty}$ implies $\Reg_{\yhat,\ell}^T \geq \kappa \sqrt{\Chaty}$.
To simplify the discussion, we consider only the case where $\Chaty / 2$ is an integer.

We will consider a two-player zero-sum game in the corrupted regime with the following property.
Consider the payoff matrix 
$
A = 
\begin{pmatrix}
  1 & 0 & -1 \\
  0 & 1 & -1 
\end{pmatrix}
.
$

The suggested strategies $\hat{y}^\t$ is corrupted to ${y}^\t$ for rounds $t = 1, \dots, \hat{C}_{\yrm}/2$,
and no corruption occurs beyond this.
Then, the corruption level in the strategies of $y$-player is upper bounded by $\sumT \nrm{\hat{c}_\yrm^\t}_1 = \sum_{t=1}^{\hat{C}_\yrm/2} \nrm{\hat{c}_\yrm^\t}_1 \leq \hat{C}_\yrm$.
In particular, we take the strategies $\hat{y}^{(1)}, \dots, \hat{y}^{(\hat{C}_y / 2)}$ of $y$-player as follows.
From \Cref{lem:olo_simplex_lower},
there exists a sequence of reward vectors $\set{\check{g}^\t}_{t=1}^{\Chaty / 2}$ such that $\nrm{\check{g}^\t}_\infty \leq 1$ and
\begin{equation}\label{eq:reg_to_chaty_lower}
  \max_{x \in \Delta_{\mx}}
  \sum_{t=1}^{\Chaty/2} \inpr{x - \xhat^\t, \check{g}^\t}
  \geq 
  \frac12
  \sqrt{\Chaty}
  .
\end{equation}
Then for this $\set{\check{g}^\t}_{t=1}^{\Chaty / 2}$, 
we take $y^\t$ satisfying $\check{g}^\t = A y^\t$.
This is indeed possible since
for any $g \in [-1, 1]^2$ there exists $y \in \Delta_{\my}$ such that
$
g
=
A y
= 
\begin{pmatrix}
  y(1) - y(3)  \\
  y(2) - y(3)
\end{pmatrix}
.
$

Finally, we are ready to prove $\Reg_{\tilde{y},\ell}^T \geq \kappa \sqrt{\hat{C}_y}$.
Now, from $\Reg_{\yhat,\ell}^T < \kappa \sqrt{\Chaty}$ and the choice of the payoff matrix $A$, we have
\begin{equation}\label{eq:y_suboptimal_upper}
  \kappa \sqrt{\Chaty} 
  > 
  \Reg_{\yhat,\ell}^T 
  \geq 
  \sumT \prn{1 - \yhat^\t(3)}
  \geq 
  \sum_{t=\Chaty/2+1}^{T} \prn{1 - \yhat^\t(3)}
  =
  \sum_{t=\Chaty/2+1}^{T} \prn{\yhat^\t(1) + \yhat^\t(2)}
  \per
\end{equation}
We also observe that the instantaneous regret against any $x \in \Delta_{\mx}$ is lower bounded by
\begin{equation}\label{eq:inst_reg_lower}
  \inpr{x - \xhat^\t, A y^\t}
  =
  (x(1) - \xhat^\t(1)) y^\t(1) + (x(2) - \xhat^\t(2)) y^\t(2)
  \geq
  - \prn{ y^\t(1) + y^\t(2) }
  \per
\end{equation}
By combining \Cref{eq:y_suboptimal_upper} and \Cref{eq:inst_reg_lower},
for any $x \in \Delta_{\mx}$ the regret compared to a point $x$ after round $t = \Chaty/2 + 1$ is lower bounded by
\begin{equation}\label{eq:reg_after_chaty_lower}
  \sum_{t=\Chaty/2+1}^T \inpr{x - \xhat^\t, A y^\t}
  \geq
  - \sum_{t=\Chaty/2+1}^T \prn{ y^\t(1) + y^\t(2) }
  \geq 
  - \kappa \sqrt{\Chaty}
  \per
\end{equation}
Therefore, from \Cref{eq:reg_to_chaty_lower,eq:reg_after_chaty_lower}, we obtain 
\begin{equation}
  \Reg_{\ytil,\ell}^T
  \geq 
  \frac12
  \sqrt{\Chaty} - \kappa \sqrt{\Chaty}
  =
  \prn*{\frac12 - \kappa} \sqrt{\Chaty}
  =
  \kappa \sqrt{\Chaty}
  \com 
  \n
\end{equation}
which completes the proof.
\end{proof}

\end{document}